\newtheorem{theorem}{Theorem}
\newtheorem{lemma}{Lemma}
\newtheorem{proof}{Proof}[section]
\newcommand{\sys}{{\sc V-Cloak}\xspace}
\begin{document}

\date{}

\title{\Large \bf \sys: Intelligibility-, Naturalness- \& {Timbre-}Preserving\\Real-Time Voice Anonymization}


\normalsize{
	\author{
	\rm Jiangyi Deng$^1$, Fei Teng$^1$, Yanjiao Chen$^1$\footnote[1]{Corresponding author.}~, Xiaofu Chen$^2$, Zhaohui Wang$^2$, Wenyuan Xu$^1$\\
	$^1$Zhejiang University, $^2$Wuhan University\\\\
	\rm Demo: \url{https://v-cloak.com}

    }
}

\maketitle

\renewcommand{\thefootnote}{\fnsymbol{footnote}} 
\footnotetext[1]{Corresponding author.}

\renewcommand{\thefootnote}{\arabic{footnote}}

\begin{abstract}

Voice data generated on instant messaging or social media applications contains unique user voiceprints that may be abused by malicious adversaries for identity inference or identity theft.
Existing voice anonymization techniques, e.g., signal processing and voice conversion/synthesis, suffer from  degradation of perceptual quality.
In this paper, we develop a voice anonymization system, named \sys, which attains real-time voice anonymization while preserving the intelligibility, naturalness and timbre of the audio. Our designed anonymizer features a one-shot generative model that modulates the features of the original audio at different frequency levels. We train the anonymizer with a carefully-designed loss function. Apart from the anonymity loss, we further incorporate the intelligibility loss and the psychoacoustics-based naturalness loss. The anonymizer can realize untargeted and targeted anonymization to achieve the anonymity goals of unidentifiability and unlinkability.

We have conducted extensive experiments on four datasets, i.e., LibriSpeech (English), AISHELL (Chinese), CommonVoice (French) and CommonVoice (Italian), five Automatic Speaker Verification (ASV) systems (including two DNN-based, two statistical and one commercial ASV), and eleven Automatic Speech Recognition (ASR) systems (for different languages). Experiment results confirm that \sys outperforms five baselines in terms of anonymity performance. We also demonstrate that \sys trained only on the VoxCeleb1 dataset against ECAPA-TDNN ASV and DeepSpeech2 ASR has transferable anonymity against other ASVs and cross-language intelligibility for other ASRs. Furthermore, we verify the robustness of \sys against various de-noising techniques and adaptive attacks.
Hopefully, \sys may provide a cloak for us in a \emph{prism} world.

\end{abstract}

\begin{figure}[t]
    \centering
\setlength{\abovecaptionskip}{4pt}
\setlength{\belowcaptionskip}{-0.40cm}
\subfigcapskip = -0.05cm

\includegraphics[width=3.4in, trim=225 155 230 155, clip]{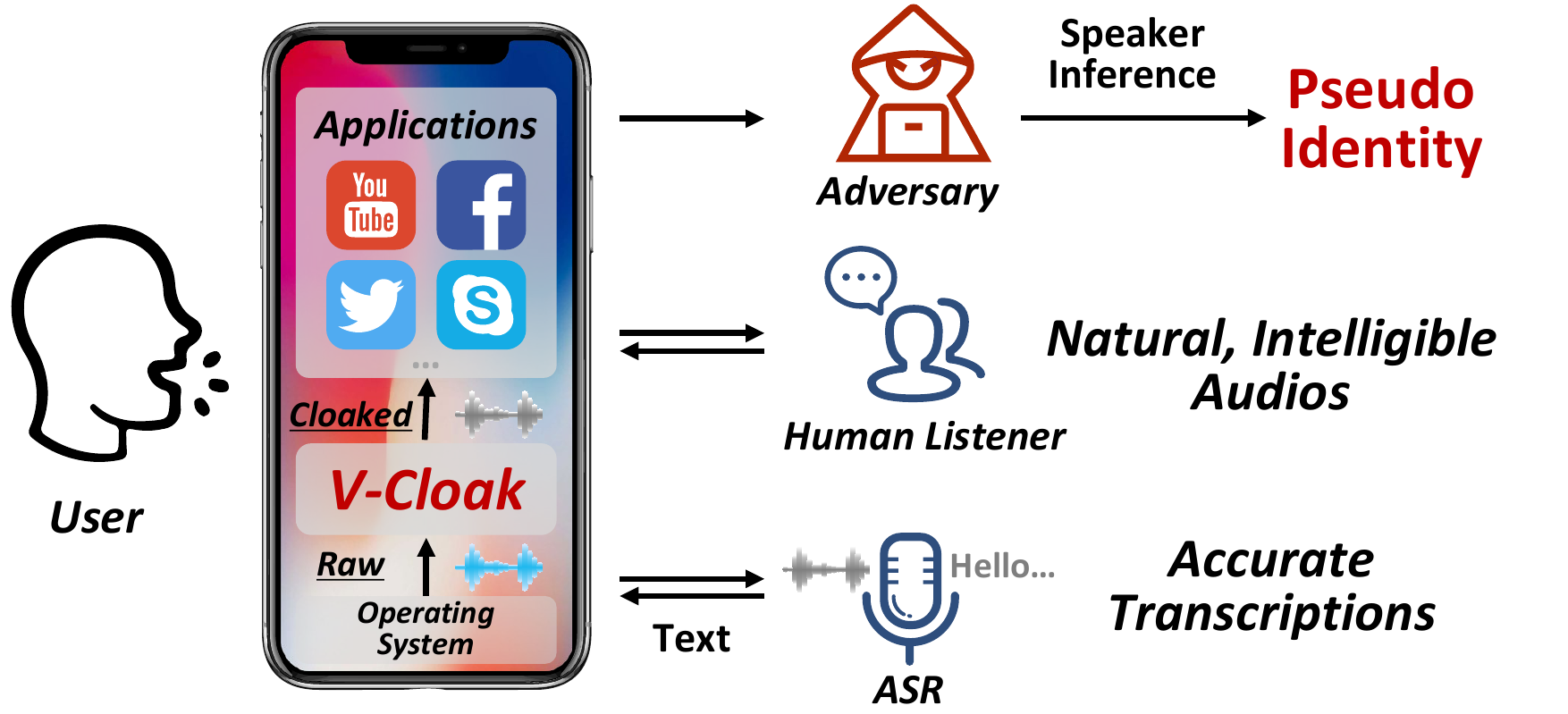}

\caption{Voiceprint in voice data may be leveraged by malicious adversaries for identity inference or identity theft. The raw audio is cloaked with \sys before being passed to applications, thus malicious service providers or third parties can only obtain a pseudo identity/voiceprint.}\label{fig:scenarios}
\end{figure}

\begin{table*}\centering
\begin{threeparttable}[t]
\setlength{\abovecaptionskip}{5pt}%
\setlength{\belowcaptionskip}{0pt}%

\caption{\sys versus existing works.}

\begin{tabular}{@{}lcccccc@{}}
\toprule
 \textbf{Method} 	& \textbf{Type$^*$}	 & \textbf{Intelligibility$^\#$} & \textbf{Naturalness$^\#$} & \textbf{Timbre-preserving} 				 & \textbf{Real-Time Coef.$^\downarrow$} 	& \textbf{User-agnostic$^\dagger$} \\ \midrule
VoiceMask~\cite{qian2018hidebehind, qian2019speech}   & VC                               		& \XSolidBrush           & \XSolidBrush                                                         & \XSolidBrush      & 0.041    & \Checkmark                                                              \\
Yoo~\cite{DBLP:journals/access/YooLLOKY20}       & VC                          	& \XSolidBrush           & \Checkmark                                                       & \XSolidBrush         & \emph{N.K.}   & \XSolidBrush                                                              \\
NSF~\cite{fang2019speaker, han2020voice}      	& VS                           			& \Checkmark         & \Checkmark                                                      & \XSolidBrush          & 0.110   & \XSolidBrush                                                                 \\
HFGAN~\cite{DBLP:journals/corr/abs-2202-13097}	& VS									& \Checkmark			 &\Checkmark														& \XSolidBrush		 & 0.104 	& \Checkmark																\\
Justin~\cite{DBLP:conf/fg/JustinSDVIM15}       	& VS                            	& \Checkmark             & \Checkmark                                                        & \XSolidBrush      & \emph{N.K.}     & \Checkmark                                                                \\
McAdams~\cite{DBLP:conf/interspeech/0001TTNE21} 	& SP                           		& \XSolidBrush              & \XSolidBrush                                                         & \XSolidBrush       & 0.030   & \Checkmark                                                             \\
Vaidya~\cite{DBLP:conf/sp/VaidyaS19}           	& SP                    		& \XSolidBrush             & \XSolidBrush                                                         & \XSolidBrush     & \emph{N.K.}	    & \Checkmark                                                             \\
\textbf{\sys (Ours)}                                                    & \textbf{Adv}                        & \Checkmark             & \Checkmark                                                           & \Checkmark         & \textbf{0.011}     & \Checkmark                                                                \\ \bottomrule
\end{tabular}
\begin{tablenotes}[flushleft]
\item[] \vspace{-2pt}\hspace{-2pt}\small (i) $^*$: Voice Conversion (VC). Voice Synthesis (VS). Signal Processing (SP). Adversarial examples (Adv). (ii) $^\#$: whether the method has explicit constraints on intelligibility or naturalness. (iii) $^\downarrow$: Real-time coefficient (RTC), the ratio between the processing time and the duration of the audio. \textbf{The lower the RTC, the more efficient the method.} We measure the five methods under the same computing resource conditions. \emph{N.K.}, not known, the authors did not evaluate the efficiency of their methods or make their codes available.   (iv) $^\dagger$: whether the method needs to be trained for a new user. 
\end{tablenotes}

\label{tab:compare}
\end{threeparttable}
\end{table*}

\section{Introduction}

{Voiceprint is a critical biometric that can uniquely identify a person. As massive personal data is collected and processed by online services, there are rising concerns for privacy leakage. In 2018, the European Union enforced the General Data Protection Regulation (GDPR) \cite{GDPR} for personal data protection, especially for biometric data. However, an avalanche of voice data is generated daily on social media (e.g. Facebook/Meta, WeChat, TikTok) and in communication applications (e.g. Zoom, Slack, Microsoft Teams, Ding Talk), and automated processing methods, e.g., ASV, can easily extract voiceprint for ill use. For example, as shown in Figure~\ref{fig:scenarios}, an adversary may infer the speaker identity of a private conversation from voice messages uploaded to the cloud with an ASV~\cite{Facebook, TikTok, iFlytek}. Therefore, there is an urgent demand for voice anonymization to help users protect voiceprint while enjoying voice-related services (e.g., speech recognition by ASR) and interpersonal communication (e.g., human listeners can identify the speaker).}

Existing voice anonymization methods are mainly based on voice signal processing (SP), voice conversion (VC) and voice synthesis (VS). SP~\cite{DBLP:conf/interspeech/0001TTNE21, DBLP:conf/sp/VaidyaS19} methods directly apply signal processing techniques to modify speaker-related features in audios to obscure voiceprints. Nonetheless, SP-based voice anonymization usually induces large quality degradation as intelligibility and naturalness are not considered. VC~\cite{qian2018hidebehind,  qian2019speech, DBLP:journals/access/YooLLOKY20, srivastava2020evaluating} and VS \cite{fang2019speaker, han2020voice, DBLP:journals/corr/abs-2202-13097,DBLP:conf/fg/JustinSDVIM15} methods convert the original audio into another audio that sounds completely different from the original speaker. Although VC and VS may achieve anonymity, they are not suitable for scenarios where the user wants to hide their identity from ASVs but hopes to preserve their personal timbre to human audiences, e.g., posts of celebrities on social media, voice messages with acquaintances.

In this paper, we make the first attempt to design a real-time voice anonymization system, named \sys, which achieves anonymity while preserving  intelligibility, naturalness, and timbre of the audios. A comparison of \sys with existing works is shown in Table~\ref{tab:compare}. Nonetheless, to realize these design goals with a practical real-time system is challenging in three aspects.

\begin{itemize}
    \item \emph{How to achieve real-time voice anonymity against adaptive attacks?}
\end{itemize}

Different from traditional signal processing and voice conversion \& synthesis, we are inspired by the adversarial examples that can trick ASV into misidentifying the speaker but induce imperceptible differences to the human auditory system. Nonetheless, directly applying adversarial examples to voice anonymization has two major issues. First, most of the existing ASV adversarial examples~\cite{zheng2021black, li2020practical, li2020advpulse, chen2021real, DBLP:conf/ccs/DuJLGWB20} are constructed via iterative updates, which cannot 
achieve real-time voice anonymization. As far as we are concerned, there is only one ASV adversarial attack named FAPG that creates adversarial examples using a one-shot generative model~\cite{xie2020enabling}. Unfortunately, FAPG needs to train a feature map for each potential target speaker and
the original paper only evaluates for an ASV with 10 speakers. Furthermore, the adversary may be informed of the anonymization method and the model (anonymizer), and then launches an adaptive attack to de-anonymize the anonymized audio. 

To tackle these problems, 
we adapt a lightweight generative model Wave-U-Net~\cite{DBLP:conf/ismir/StollerED18} for \sys. We equip Wave-U-Net with two novel components, i.e., \emph{VP-Modulation} and \emph{Throttle}. \emph{VP-Modulation} modulates the feature elements of the original audio at each frequency level according to the voiceprint of a target speaker.
\emph{Throttle} adjusts the weights of features of the original audio at different frequency levels to conform to the constraint on the anonymization perturbations. The trained anonymizer can produce anonymized audios targeting any speaker/voiceprint under any anonymization perturbation constraint without re-training. Furthermore, we conduct theoretical analysis and experiments to verify the anonymity of \sys in the case of adaptive attacks.

\begin{itemize}
    \item \emph{How to maintain objective and subjective intelligibility of anonymized audios?}
\end{itemize}

It is desirable for the anonymized audios to be intelligible to ASRs (objective intelligibility) such that the users can still enjoy speech-to-text services; and to humans (subjective intelligibility) such that voice messages can be understood. However, SP- and VC-based anonymization, as well as voice adversarial examples, do not consider intelligibility constraint and may introduce noises that greatly degrade intelligibility.

To address this issue, we impose an intelligibility loss when training the anonymizer. The intelligibility loss is based on the decoding error rate of the ASR. Instead of the commonly-used Connectionist Temporal Classification (CTC) loss of ASR, we acquire the graphemic posteriorgram (GPG) loss, which preserves the full alignment of the transcription and the grapheme of each frame.
The subjective intelligibility is achieved by constraining the anonymization perturbations by our proposed \emph{Throttle} module and better masking the anonymization perturbations based on psychoacoustics.

\begin{itemize}
    \item \emph{How to preserve naturalness and timbre of anonymized audios?}
\end{itemize}

Naturalness and timbre preservation are important to human audiences or listeners of anonymized audios. Signal processing and existing ASV adversarial examples did not consider naturalness such that the processed audios may sound mechanical. In addition, signal processing, voice conversion and voice synthesis all distort the timbre of the original speaker such that the anonymized audio sounds unlike being spoken by the original speaker (e.g., a friend or a celebrity). 

To cope with this problem, we introduce a naturalness \& timbre loss when training the anonymizer based on the psychoacoustic theory of masking effects. Our user study verifies that the anonymized audios of \sys receive high naturalness and timbre scores.

We implement a fully-functional prototype of \sys, evaluated with extensive experiments on five ASVs (anonymity) and eleven ASRs (intelligibility) with datasets of four languages (English, Chinese, French, Italian). 
The comparison with five baselines demonstrates that \sys achieves the best anonymization performance with the second-best intelligibility performance. Cross-language experiments show that the anonymizer of \sys trained on one ASV and one ASR can be transferred to other ASVs and ASRs (with different languages). A user study with 102 volunteers confirms the intelligibility-, naturalness- and timbre-preserving properties of \sys. 

We summarize our main contributions as follows.
\begin{itemize}
    \item We propose \sys, an intelligibility-, naturalness- and timbre-preserving voice anonymization system. \sys is proved and evaluated to fulfil the anonymization goals of unidentifiability and unlinkability against naive and adaptive adversaries.
    
    \item We develop a real-time anonymizer that transforms the original audio into targeted or untargeted anonymized audios. The anonymizer is trained with anonymity, intelligibility, naturalness and timbre loss, generalizing to any new original speaker or new target speaker without the need for re-training.
    
    \item We conduct extensive experiments to verify the effectiveness and efficiency of \sys under various testing conditions and a user study to confirm the practicality and applicability of \sys.
\end{itemize}

\section{Background}

\subsection{Voice Data}

In the digital world, voice data of a user is massively generated and distributed for various purposes, e.g., communications via voice messages or video posts on social media. These wildly exposed voice data may be easily collected by service providers or third parties. For instance, Facebook is collecting audio data from voice messages on its social network platform, and even attempts to transcribe the content of these private messages~\cite{Facebook}. TikTok revised its privacy policy to legitimize faceprints and voiceprints collection from the videos uploaded by users, and even claimed the possibility of data sharing for business purposes~\cite{TikTok}.

Voice data contains two kinds of information, i.e., speech contents and phonetic features. 

\begin{itemize}
    \item {Speech contents}. Speech contents refer to the linguistic information contained in the voice data, i.e., "what are the words spoken." Speech contents determine the intelligibility of the voice data. 
    \item {Phonetic features}. Phonetic features refer to the way the speech contents are conveyed in the voice data, i.e., "how are the words spoken." Phonetic features affect the timbre of the voice data. 
\end{itemize}

Voiceprint is a phonetic feature that can uniquely identify a speaker. However, voiceprint contained in voice data may be abused for identity inference or identity theft. On the one hand, voiceprint may be used to infer the identity of speakers of a private conversation by automatic speaker verification (ASV) systems. On the other hand, the voiceprint of a speaker may be extracted from audios to synthesize audios to pass voiceprint-based authentication systems. For example, WeChat, a popular messaging app in China, allows users to login via voiceprint~\cite{wechat}. In face of these potential privacy leakages, it is essential for users to anonymize voice data before sending voice messages or publishing videos on social media.

\subsection{{Psychoacoustics}}

Psychoacoustics is the study of the relationship between subjective psychological perceptions (e.g., perceived volume, pitch) and objective physical parameters (e.g., sound pressure level, frequency)~\cite{yost2015psychoacoustics}. The masking effect is one of the most common psychoacoustic phenomena~\cite{gelfand2017hearing}. There are two forms of masking: temporal and spectral. Temporal masking refers to the situation where a sound cannot be perceived if a sudden louder sound appears immediately preceding or following the first one. The louder sound is called the \emph{masker}. Spectral masking refers to the imperceptibility of a sound component due to other frequency components played simultaneously. The \emph{perception threshold} of this component varies due to both sound signals (e.g., frequency) and the listener. We leverage the spectral masking effect to make anonymization perturbations more imperceptible to human users.

\subsection{Automatic Speaker Verification \& Automatic Speech Recognition}

An Automatic Speaker Verification (ASV) system aims to deduce the speakers of audios based on their voiceprints. Speaker inference via an ASV system includes the enrollment phase and the inference phase. In the enrollment phase, clean audio samples of the speaker to be recognized are fed into the ASV such that the voiceprint can be extracted and stored in the ASV. In the inference phase, the ASV takes an audio sample as input and outputs whether the input audio belongs to the enrolled speaker.
There are two mainstream methods of extracting and matching voiceprints, i.e., statistical models and Deep Neural Network (DNN)-based models. Gaussian mixture model (GMM) is a traditional statistical model to extract ivector voiceprints. ivector-PLDA is a popular ASV implementation that matches ivector voiceprints via probabilistic linear discriminant analysis (PLDA). X-vector is a DNN-based voiceprint extractor, which outperforms GMM as DNNs are more effective in extracting feature representations from large-scale voice datasets. ECAPA-TDNN \cite{desplanques2020ecapa} 
is the state-of-the-art ASV implementation using end-to-end training, i.e., training the front-end and the back-end jointly as an integrated network~\cite{wang2017deep}.

An Automatic Speech Recognition (ASR) system aims to transcribe the speech contents from audio samples (without the need to know the speaker). In the training process, audios are first transformed into a sequence of spectral frames. Each frame is then transformed into a feature vector. Commonly used features include Filter Bank (FBank) \cite{richmond2002estimating}, Mel-Frequency Cepstral Coefficients (MFCC) \cite{muda2010voice}, Spectral Subband Centroid (SSC) \cite{thian2004spectral} and Perceptual Linear Predictive (PLP) \cite{hermansky1990perceptual}. Then, the posterior probability of the lingueme (e.g., phoneme, grapheme, or word) contained in each frame is estimated. The linguemes are usually represented as tokens. For example, 29 tokens are used for the English language, i.e., letters a$\sim$z, space, apostrophe and the special {blank} token $\phi$. Next, the Connectionist Temporal Classification (CTC) module sums the probability of all possible alignments that reduce to the ground-truth sequence. For example, a three-frame sequence of $[a~b~\phi], [a~\phi~b]$ and $[\phi~a~b]$ will all be reduced to the ground-truth sequence of $[a~b]$. Finally, the model is updated to increase the probability of producing the ground-truth sequence. In the inference phase, a language model may be used to provide a prior probability to find the lingueme sequence of the highest probability.

\begin{figure}[tt]
\centering
\setlength{\abovecaptionskip}{5pt}%
\setlength{\belowcaptionskip}{-0.0cm}%
\includegraphics[width=3.2in, trim=220 40 180 55, clip]{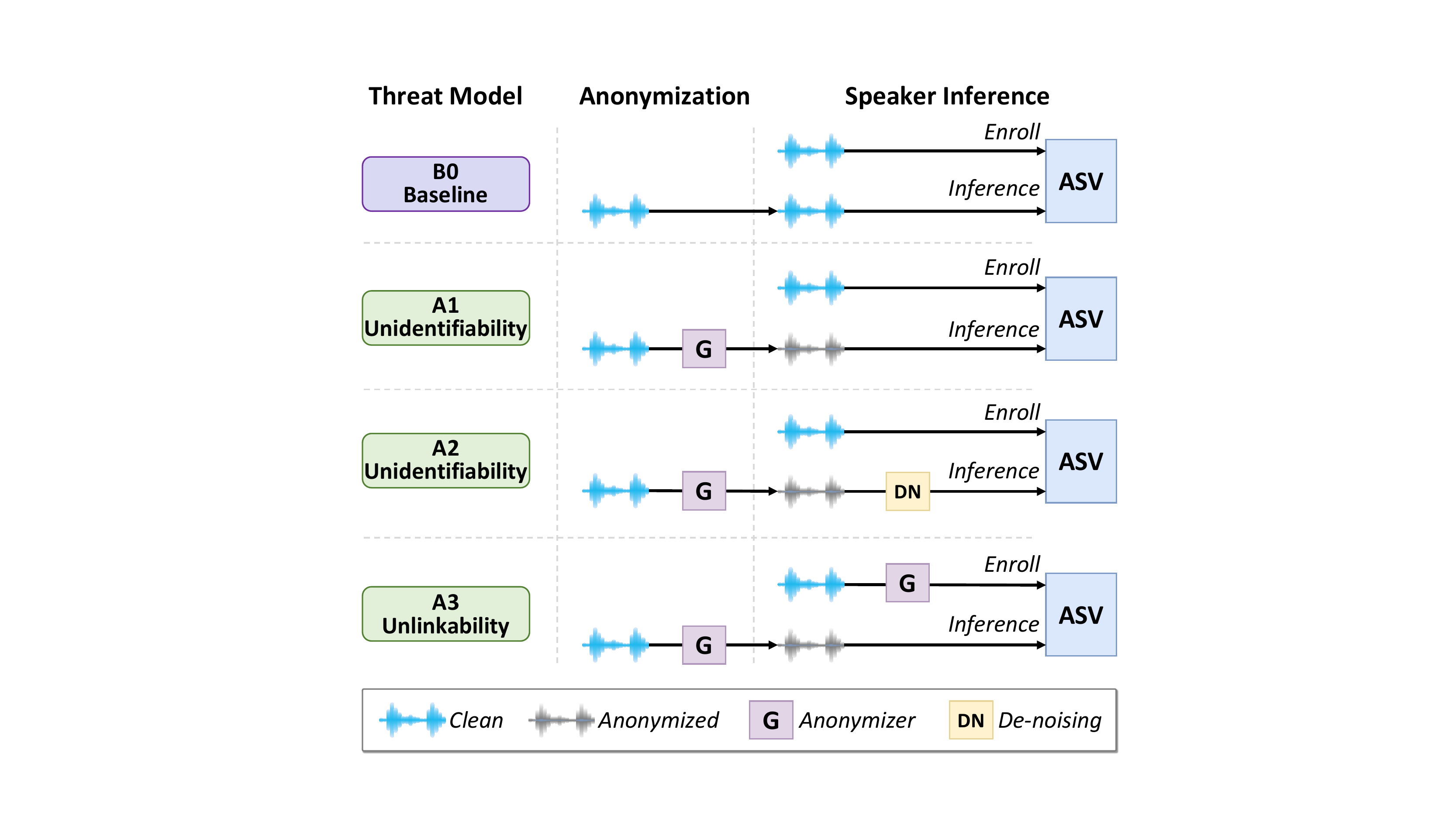}
\caption{Threat model. \textbf{A1:} \emph{ignorant} adversary who enrolls clean audios into the ASV and feeds anonymized audio into the ASV to infer the speaker. \textbf{A2:} \emph{semi-informed} adversary who enrolls clean audios into the ASV and feeds de-noised anonymized audio into the ASV to infer the speaker. \textbf{A3:} \emph{informed} adversary who enrolls anonymizer-processed audios into the ASV and feeds the anonymized audio into the ASV to infer the speaker.}
\label{fig:metrics}
\end{figure}
\raggedbottom

\subsection{Voice Anonymization}\label{sec:voice-anonymization}

Voice anonymization refers to the practice of removing voiceprint from voice data. A voice anonymization system needs to satisfy various requirements to fulfil different purposes. Regarding the digital voice data privacy, we aim to achieve the following performance goals.

\textbf{Anonymity}. An anonymized audio should not reveal the identity of the speaker. More specifically, we consider concealing speaker identities in the digital domain from ASVs. 

\textbf{Intelligibility}. An anonymized audio should be intelligible to both humans and ASRs. More specifically, the speech contents of the anonymized audio can be correctly understood by humans and transcribed by ASRs.

\textbf{Naturalness \& Timbre}. An anonymized audio should sound natural and like the timbre of the original speaker to humans. Studies show that most people find highly mechanical audios irritating and discomforting to listen to, thus natural-sounding anonymized audios are more user-friendly~\cite{taylor2009text}. For voice messages and video posts on social media, it is ideal to make the anonymized audios sound authentic as the original speaker to audiences, especially for communications between acquaintances and publicity of celebrities.

Voice anonymization can be realized in various ways, as summarized in Table~\ref{tab:compare}.

\emph{Voice signal processing}. Signal processing techniques attempt to contort the voiceprint by directly modifying the voice signals in terms of formant positions, pitch, tempo, or pause~\cite{DBLP:conf/interspeech/0001TTNE21, DBLP:conf/sp/VaidyaS19}. Though simple and fast, signal processing may degrade the intelligibility and naturalness of audios. 

\emph{Voice conversion \& synthesis}. Voice conversion \& synthesis techniques aim to replace the voiceprint of the original speaker in an audio with the voiceprint of another speaker~\cite{qian2018hidebehind,  qian2019speech, DBLP:journals/access/YooLLOKY20, srivastava2020evaluating, fang2019speaker, han2020voice, DBLP:journals/corr/abs-2202-13097,DBLP:conf/fg/JustinSDVIM15}. Voice conversion \& synthesis preserve the intelligibility and naturalness, but alter the timbre so that the audio sounds unlike the original speaker. This may reduce the authenticity of voice messages to acquaintances and video posts of celebrities.

\emph{Voice adversarial examples}. Adversarial example attacks against ASVs add imperceptible noises to audios such that the ASV cannot recognize the speaker~\cite{zheng2021black, li2020practical, li2020advpulse, chen2021real, DBLP:conf/ccs/DuJLGWB20, xie2020enabling}. Adversarial perturbations can be generated in two ways. 

\begin{itemize}
\item \emph{Iterative optimization.} Optimization-based methods formulate the problem of adversarial perturbation generation as a constrained optimization problem~\cite{zheng2021black, li2020practical, li2020advpulse, chen2021real, DBLP:conf/ccs/DuJLGWB20}. As the formulated optimization problems are usually NP-hard, the solutions can only be approximated through iterative updates, which is quite time-consuming. Therefore, iterative optimization methods cannot be applied to real-time services.

\item \emph{One-shot generative model.} Generative models can be trained to produce adversarial perturbations in one shot. Commonly used generative models include Generative Adversarial Networks (GAN) and autoencoders~\cite{DBLP:conf/cvpr/PoursaeedKGB18, DBLP:conf/nips/SongSKE18, DBLP:conf/aaai/PhanXLC020}. As far as we know, there is only one study on generative model-based adversarial examples against ASV, named FAPG~\cite{xie2020enabling}. However, FAPG mainly focuses on deceiving ASVs but not preserving intelligibility and naturalness of audios. 
\end{itemize}


\subsection{Threat Model}\label{subsec:threatmodel}

We define the threat model in terms of the adversary's knowledge and capability, then we elaborate the performance goals of voice anonymization under the defined threat model. As shown in Figure~\ref{fig:metrics}, we consider three kinds of adversaries, i.e., \emph{ignorant} (A1), \emph{semi-informed} (A2), and \emph{informed} (A3), with different knowledge and capabilities.

\textbf{Knowledge.} The adversary has an anonymized audio whose speaker is unknown. The adversary has collected a few clean samples of a pool of potential speakers to help with identity inference.  Adversary A1 does not know that the audio is anonymized. Adversary A2 knows that the audio is anonymized but does not know the specific anonymizer. Adversary A3 has full knowledge of the anonymizer.

\textbf{Capability.} Adversary A1, A2, and A3 can use any ASVs to infer the speaker of the anonymized audio. As shown in Figure~\ref{fig:metrics}, A1 and A2 enroll potential speakers in the ASV using clean audios, and A3 enrolls potential speakers in the ASV using audio samples processed by the anonymizer. In the inference phase, A1 directly feeds the anonymized audio into the ASV; A2 applies de-noising methods to the anonymized audio and feeds the de-noised audio into the ASV; A3 also directly feeds the anonymized audio into the ASV.

In the face of the knowledge and the capability of adversaries, we further elaborate the goal of achieving anonymity regarding different types of adversaries. More specifically, in the case of \emph{ignorant} and \emph{semi-informed} adversaries, the speaker of the anonymized audio should be unidentifiable, and in the case of \emph{informed} adversaries, the speaker and the anonymized audio should be unlinkable. 

\textbf{Unidentifiability.} For A1 and A2 who enroll clean voiceprints into the ASV, the speaker of an anonymized audio should not be identified during the inference phase.

\textbf{Unlinkability.} For adversary A3 who enrolls anonymizer-processed voiceprints into the ASV, the speaker of an anonymized audio should be undistinguishable from other speakers.

\section{Problem Formulation}

Before delving into the design details of \sys, in this section, we formally formulate the voice anonymization as a constrained optimization problem. 

Given an audio sample ${x}=[x_1, \cdots, x_{D}]\in \mathbb{R}^{1\times D}$, where $\mathbb{R}^{1\times D}$ is a $D$-dimensional real number field, and $D$ is the length of the audio. Without loss of generality, we assume $x_i\in[-1,1]$. We aim to obtain an anonymized audio $\tilde{x}$ such that the ASV cannot match the voiceprint of $\tilde{x}$ with that of $x$.  Let $\mathcal{V}:\mathbb{R}^{1\times \cdot}\rightarrow\mathbb{R}^{1\times N}$, denote the voiceprint extraction function that outputs a voiceprint of a fixed length $N$, and $\mathcal{G}:\mathbb{R}^{1\times D}\rightarrow\mathbb{R}^{1\times D_o}$ denote the anonymizer function.

\noindent\textbf{\underline{Basic Formulation:}}
\begin{equation}\label{equ:obj1}
\begin{aligned}
&\mathop{\min}\limits_{\mathcal{G}} ~\ L_{\mathrm{ASV}} \\
&~\mathrm{s.t.}~\ \|\tilde{x}- x\|_{\infty}\leq \epsilon~\mathrm{and}~x, \tilde{x}\in[-1,1],\\
 \mathrm{where}&\\
&L_{\mathrm{ASV}} = 
\begin{cases}
\mathcal{S}\big(\mathcal{V}(\tilde{x}), \mathcal{V}(x)\big), & \mathrm{untargeted~anonymization},\\
-\mathcal{S}\big(\mathcal{V}(\tilde{x}), v\big), & \mathrm{targeted~anonymization}.
\end{cases}\\
&\tilde{x} = 
\begin{cases}
\mathcal{G}(x), & \mathrm{untargeted~anonymization},\\
\mathcal{G}(x,v), & \mathrm{targeted~anonymization}.
\end{cases}
\end{aligned}
\end{equation}
where $\epsilon$ constrains the $l_{\infty}$ norm difference between $x$ and $\tilde{x}$, $\mathcal{S}(\cdot, \cdot)$ is the scoring function measuring the similarity between the voiceprints of $x$ and $\tilde{x}$, and $v$ is the voiceprint of a speaker other than $x$. With untargeted anonymization, the voiceprint of the anonymized audio is diverted from that of the original audio as much as possible, which guarantees \emph{unidentifiability}, i.e., the voiceprint of the anonymized audio will not match the voiceprint of the original audio. With targeted anonymization, the voiceprints of two anonymized audios with different original speakers but the same target speaker $v$ will both be matched with $v$ (thus be matched together), which guarantees both \emph{unidentifiability} and \emph{unlinkability}. We theoretically analyze the \emph{unidentifiability} and the \emph{unlinkability} of targeted and untargeted anonymizations in Appendix~\ref{sec:proof} and perform corresponding evaluations in \S\ref{sec:evaluation}.

The anonymized audio obtained by Equation (\ref{equ:obj1}) satisfies the basic goal of anonymity but may suffer from quality degradation in terms of intelligibility, naturalness and timbre. To tackle this problem, we equip the basic optimization problem with loss terms that address the performance goals of intelligibility, naturalness and timbre preservation. More specifically, we introduce an ASR-related loss term, which maintains the intelligibility of the anonymized audio for ASR. We also add a psychoacoustic-related loss term and an $l_2$-norm loss term to improve naturalness of the anonymized audio. Overall, the refined optimization problem is

\noindent\underline{\textbf{\sys Formulation:}}
\begin{equation}\label{equ:obj3}
\begin{aligned}
\mathop{\min}\limits_{\mathcal{G}} & \ \ L_{\mathrm{ASV}}+\alpha\cdot L_{\textrm{ASR}}(x, \tilde{x}) 
+\beta\cdot L_{\textrm{PSY}}(x,\tilde{x}) +\gamma\cdot\|\tilde{x}-x\|_2,\\
\textrm{s.t.}~&\|\tilde{x}- x\|_{\infty}\leq \epsilon~\textrm{and}~x, \tilde{x}\in[-1,1],
\end{aligned}
\end{equation}
where parameters $\alpha,~\beta,~\gamma$ balance the trade-off among the performance goals.

\begin{figure}[t]
    \centering
\setlength{\abovecaptionskip}{5pt}
\setlength{\belowcaptionskip}{-0.0cm}
\subfigcapskip = -0.05cm
\includegraphics[width=3.2in, trim=270 150 270 150, clip]{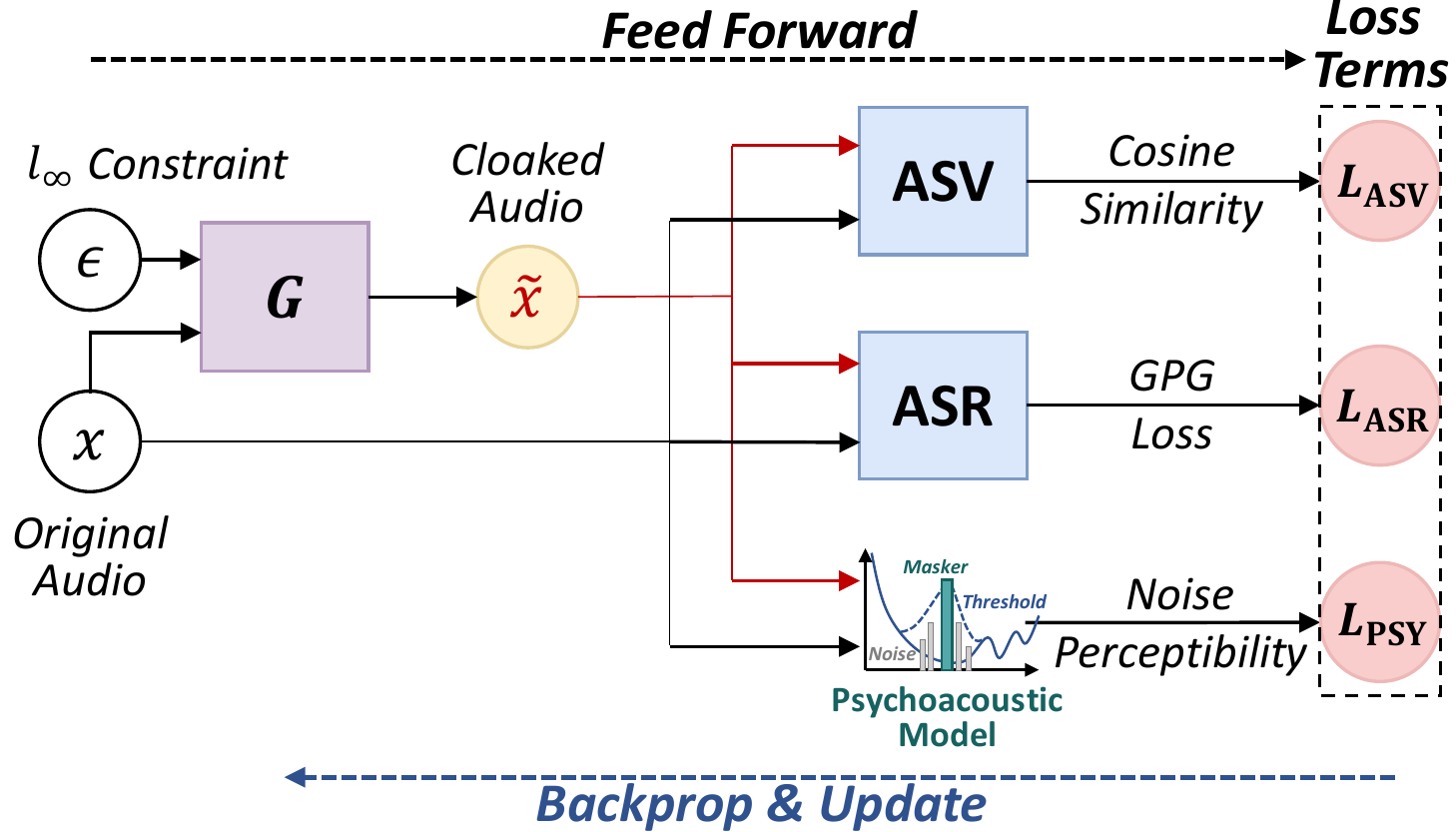}


\caption{Architecture of \sys. The anonymizer $G$ produces the anonymized audio $\tilde{x}$ given the original audio $x$ and the threshold $\epsilon$. $G$ is trained to minimize the loss function related to the performance goals of anonymity, intelligibility, naturalness and timbre.}\label{fig:training}
\end{figure}

\section{\sys: Design Details}

The optimization problem in Equation (\ref{equ:obj3}) is difficult to solve directly. Therefore, we propose a framework named \sys to derive the solution of Equation (\ref{equ:obj3}) based on a generative model. As shown in Figure~\ref{fig:training}, the main component of \sys is the anonymizer $\mathcal{G}$. $\mathcal{G}$ takes the original audio $x$ and the threshold $\epsilon$ as inputs, and creates the anonymized audio in one shot. We first introduce the model architecture of $\mathcal{G}$, and then elaborate the training process of $\mathcal{G}$. 

\begin{figure}[tt]
\centering
\setlength{\abovecaptionskip}{5pt}%
\setlength{\belowcaptionskip}{-0.3cm}%
\includegraphics[width=3in, trim=250 40 250 50, clip]{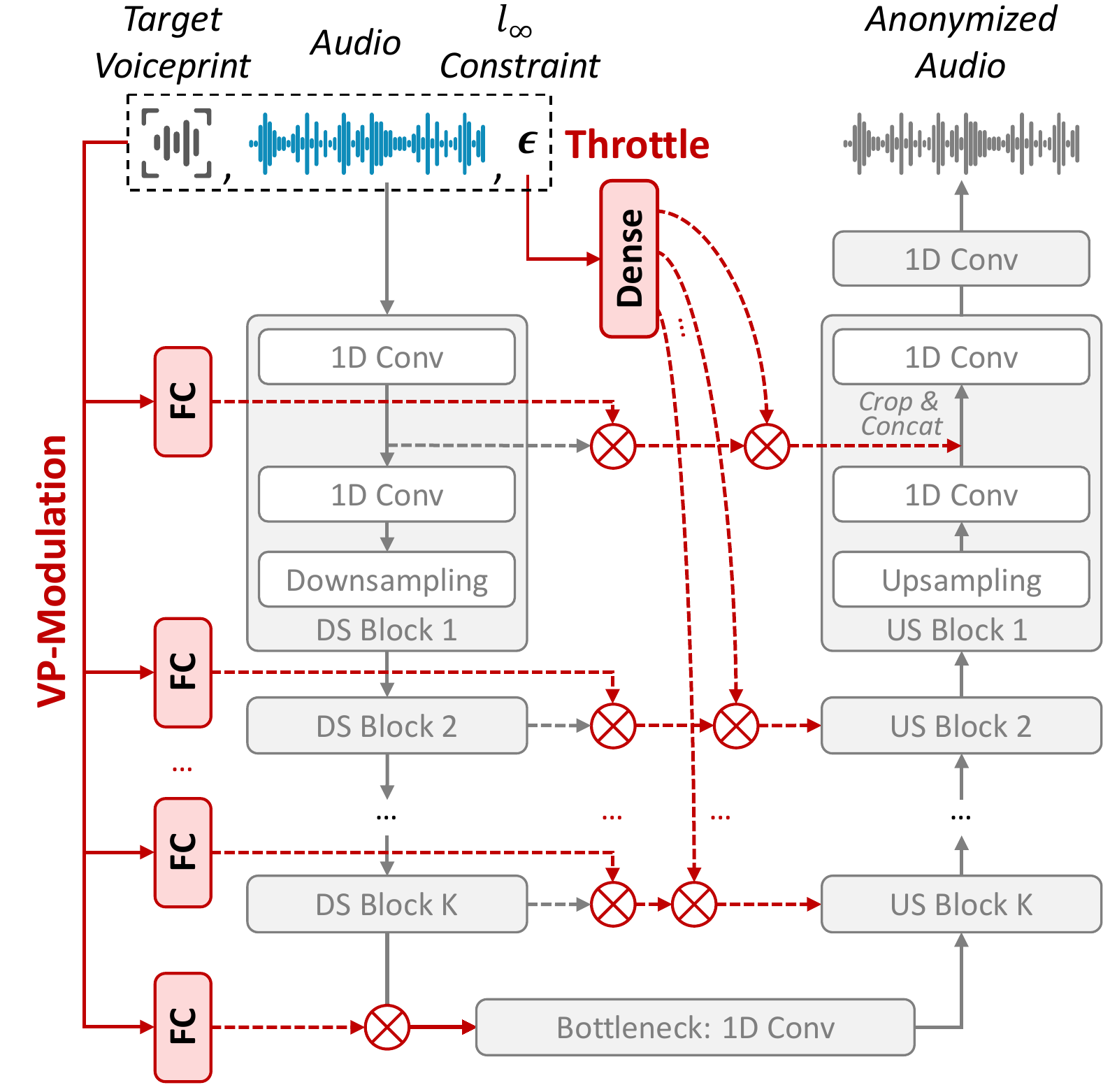}
\caption{The design of the anonymizer $\mathcal{G}$. \emph{VP-Modulation} modulates the elements in the feature vector extracted by each downsampling (DS) block (the same frequency level) based on the target voiceprint. \emph{Throttle} adjusts features at different frequency levels according to the constraint $\epsilon$.}
\label{fig:waveunet}
\end{figure}
\raggedbottom

\subsection{Anonymizer Design}

To realize real-time voice anonymization, we create anonymized audios through one-shot generation instead of iterative updates. We develop a generative model-based anonymizer based on Wave-U-Net~\cite{DBLP:conf/ismir/StollerED18}, as shown in Figure~\ref{fig:waveunet}. 

Wave-U-Net is originally used for audio source separation~\cite{DBLP:conf/ismir/StollerED18}. The vanilla Wave-U-Net is U-shaped with the downsampling (DS) and the upsampling (US) sub-networks, as illustrated by the grey blocks in Figure~\ref{fig:waveunet}. The input audio passes through a sequence of DS blocks, where a deeper DS block extracts a longer feature vector at a lower frequency level. There is a shortcut that transports the output of the first convolutional layer in each DS block to the final convolutional layer in each US block to combine the features at different frequency levels.

The anonymizer of \sys innovates Wave-U-Net in two ways with the \emph{VP-Modulation} and the \emph{Throttle} modules as shown in Figure~\ref{fig:waveunet}. 

\subsubsection{VP-Modulation}\label{subsec:sight}

Wave-U-Net in the previous work~\cite{xie2020enabling} creates targeted ASV adversarial examples by converting the audio of the original speaker towards a target speaker with a \emph{feature map} of the target speaker inserted in the \emph{bottleneck} layer at the bottom of the Wave-U-Net. Unfortunately, this suffers from two limitations. First, the feature map of every potential target speaker needs to be trained from scratch with relatively high overhead. The feature map needs to be replaced if another speaker is targeted, thus targeting an untrained speaker is not possible. Second, the feature map resides on the bottom of the Wave-U-Net, which means that the feature map represents the feature at the lowest frequency level. This lowest-frequency feature is too coarse-grained to capture the distinctive traits of different speakers, limiting the ability of the model to target a larger pool of speakers. 

To address this limitation, we design \emph{VP-Modulation} to guide the audio conversion process by the target voiceprint. As shown in Figure~\ref{fig:waveunet}, at each frequency level, the target voiceprint $v$ is transformed by a fully-connected layer (FC) into a modulation vector with a dimension consistent with the output at each shortcut and the output of the last DS block. The modulation vector rescales the shortcut features extracted by each DS block at each frequency level. In this way, the trained bottleneck layer needs no modification for a new target. In addition, the voiceprint of any target speaker can be fed into the network to realize a targeted anonymization without the need for re-training.  

\subsubsection{Throttle}\label{subsec:throttle}
Wave-U-Net in the previous work~\cite{xie2020enabling} imposes a fixed constraint on the converted audio during training to ensure that the difference between the converted audio and the original audio is beneath a threshold $\epsilon$. However, during the voice anonymization phase, the threshold $\epsilon$ cannot be flexibly changed according to different requirements.

To cope with this problem, we design \emph{Throttle} to learn to adapt anonymization perturbations under different constraints during training. {In particular, \emph{Throttle} takes constraint $\epsilon$ as input, and computes a $K$-dimensional adjustment vector, where $K$ is the number of DS/US blocks. The adjustment vector controls the magnitude of each shortcut feature during combination. The output perturbation will be clipped to conform to the $l_\infty$-norm constraint of $\epsilon$, and back-propagate the loss to the \emph{Throttle} to alter the adjustment vector. In this way, the \emph{Throttle} learns the optimal adjustment vector under different $l_\infty$-norm constraints. } Note that the modulation vector produced by \emph{VP-Modulation} weights the elements in the feature vector at a specific frequency level, and the adjustment vector produced by \emph{Throttle} weights the feature vectors at different frequency levels.

\subsection{Anonymizer Training}

To train the anonymizer $\mathcal{G}$ to fulfil the performance goals, we materialize the loss function in Equation (\ref{equ:obj3}) as follows.

\subsubsection{Anonymity Loss}

In $L_{\mathrm{ASV}}$, we utilize cosine similarity to measure the resemblance of two voiceprints $\mathcal{S}(\cdot, \cdot)$. In general, the voiceprint extraction function $\mathcal{V}$ can follow any ASV. In our experiments, we utilize the state-of-the-art DNN-based ASV, ECAPA-TDNN~\cite{desplanques2020ecapa}, to encode varied-length audios into fixed-length voiceprints, $\mathcal{V}:\mathbb{R}^{1\times \cdot}\rightarrow \mathbb{R}^{1\times 192}$. We demonstrate that the anonymized audio is transferable to different ASVs in our experiments. 

For untargeted anonymization, minimizing $L_{\mathrm{ASV}}$ pushes the voiceprint of $\tilde{x}$ away from that of $x$, such that the voiceprint of the anonymized audio cannot be matched with that of the original audio, i.e., unidentifiability is guaranteed. For targeted anonymization, minimizing $L_{\mathrm{ASV}}$ drives the voiceprint of $\tilde{x}$ towards the target voiceprint $v$, such that the anonymized audio cannot be matched with that of the original audio when a proper $v$ is selected, i.e., unidentifiability is guaranteed. Furthermore, for targeted anonymization, voiceprints of anonymized audios of two different speakers will be matched together (both matched with the same voiceprint $v$), i.e., unlinkability is guaranteed. A theoretical proof of unidentifiability and unlinkability is provided in Appendix \ref{sec:proof}.

\begin{figure}[tt]
\centering
\setlength{\abovecaptionskip}{5pt}%
\setlength{\belowcaptionskip}{-0.3cm}%
\includegraphics[width=2.8in, trim=280 150 280 150, clip]{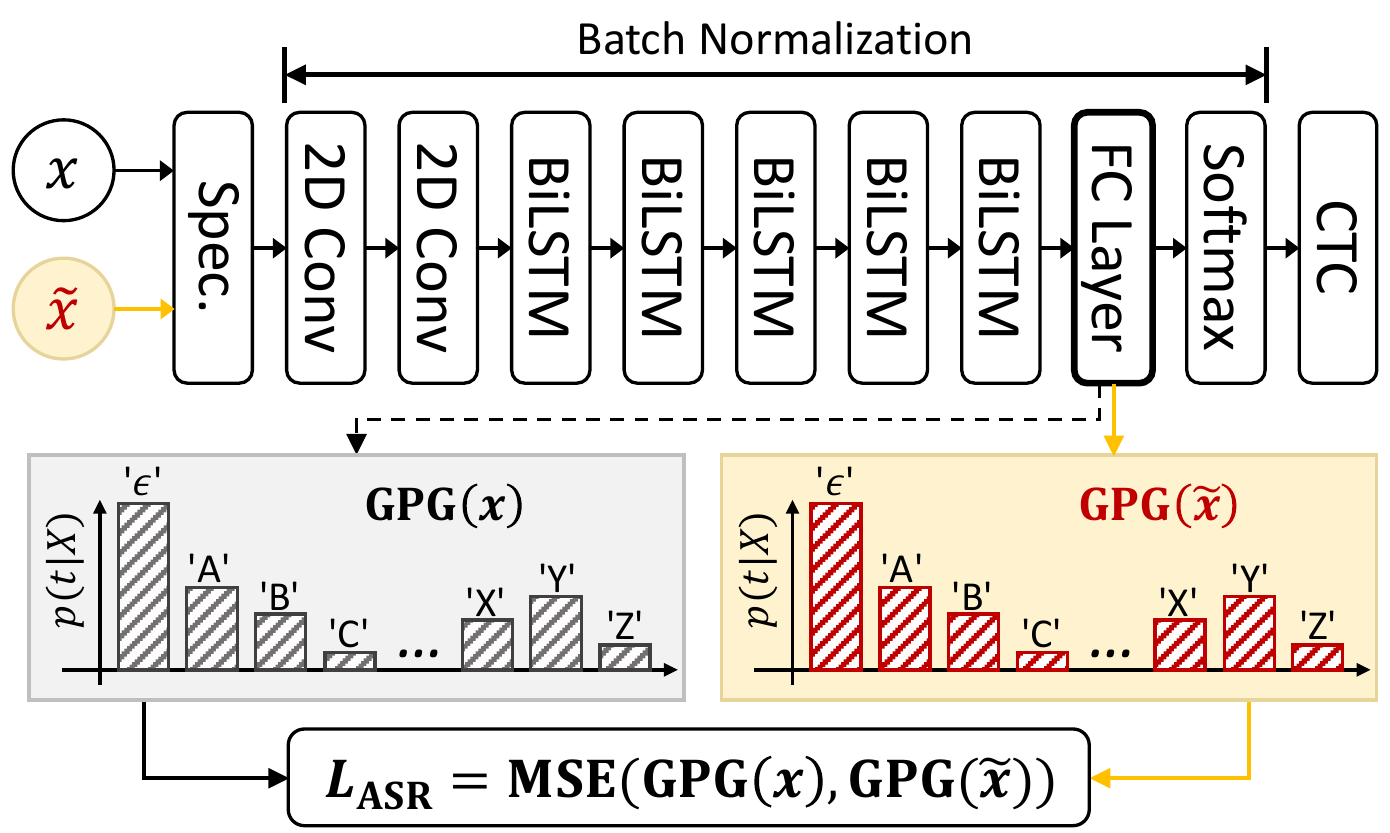}
\caption{Intelligibility Loss. Above is the structure of DeepSpeech2. Graphemic posteriorgram (GPG) output by the FC layer is used to constrain the linguistic distortion, and GPG loss is measured with the MSE between the original audio and the cloaked audio.}
\label{fig:asr_loss}
\end{figure}
\raggedbottom

\subsubsection{Intelligibility Loss}\label{subsec:asrloss}

For a similar reason, $L_{\mathrm{ASR}}$ can be instantiated with any ASR. In our experiments, we adopt DeepSpeech2~\cite{pmlr-v48-amodei16} with two 2D convolutional layers, five bidirectional LSTM layers, a fully-connected layer and a softmax layer, as shown in Figure~\ref{fig:asr_loss}. DeepSpeech2 uses graphemes, i.e., the smallest functional unit in a writing system in linguistics, as tokens. For the English language, the last softmax layer outputs the posterior probability of 29 tokens, i.e., a-z, space, apostrophe and the special $\phi$ token. The $\phi$ token indicates 'blank' or no label.

Connectionist temporal classification (CTC) loss is commonly used in ASR to train a sequence-to-sequence model when the alignment between the input spectral-frame sequence and the output token sequence is unknown. However, the CTC loss does not preserve the exact grapheme sequence. For instance, when the input is a three-frame audio, the final fully-connected layer in  DeepSpeech2 outputs $[a~\phi~b]$ and $[a~b~\phi]$, where $a, b$ and $\phi$ are the tokens of each frame. The CTC loss will ignore the blank token and reduce both sequences to $[a~b]$.

To better improve intelligibility, we utilize another loss term instead, i.e., the graphemic posteriorgram (GPG) loss. As shown in Figure~\ref{fig:asr_loss}, GPG loss is based on the posterior probability output by the softmax layer (we use the output of the preceding fully connected layer in practice).
Therefore, we have $L_{\mathrm{ASR}}(x, \tilde{x}) = \textrm{MSE}(\textrm{GPG}(x), \textrm{GPG}(\tilde{x}))$, where $\textrm{MSE}(\cdot, \cdot)$ is the mean squared error. We evaluate the effects of the CTC loss and the GPG loss in our ablation study in \S\ref{subsec:ablation}.

\subsubsection{Naturalness \& Timbre Loss}

We ensure naturalness and timbre preservation of the anonymized audio based on the psychoacoustic theory of masking effect. In particular, we leverage the spectral masking effect as the original audio and the anonymization perturbations are played at the same time. We treat the original audio as the \emph{masker}, which masks the presence of anonymization perturbations.

To materialize $L_{\mathrm{PSY}}$, we first compute the \emph{masking threshold}~\cite{DBLP:conf/icml/QinCCGR19} of the original audio, $\mathcal{M}(x)$, an $F$-dimensional vector, in which each element represents the maximum tolerable perturbation at a certain frequency level (a total of $F$ frequency components). Then $L_{\mathrm{PSY}}$ is computed as the sum of excesses of the perturbations in the anonymized audio. 
\begin{equation}\label{equ:masking}
\begin{aligned}
L_{\mathrm{PSY}}(\tilde{x}, x)=\min\{\psi,~\frac{1}{F}\mathrm{max}\{0,~\mathrm{PSD}(\tilde{x}-x)-\mathcal{M}(x)\}\},
\end{aligned}
\end{equation}
where $\mathrm{PSD}(\cdot)$ computes the log-magnitude power spectral density, and $\mathrm{max}\{0, \cdot\}$ preserves the positive but not negative parts of a function. 
We constrain the magnitude of the loss by $\psi$, since we find in our experiments that the $L_{\mathrm{PSY}}$ term is unstable and of large variance, especially in the early stage of training.  Note that we further add an $l_2$-norm $\|\tilde{x}-x\|_2$ in the loss function in order to limit the energy of the anonymization perturbations.

\section{Evaluation}\label{sec:evaluation}

\subsection{Experiment Setup}

\textbf{Prototype}.
We have implemented a prototype of \sys on the PyTorch~\cite{DBLP:conf/nips/PaszkeGMLBCKLGA19} platform and trained the model according to Equation~(\ref{equ:obj3}) using two NVIDIA 3090 GPUs. We set the default configuration as $D=41,641$, $D_o=32,089$, $N=192$, $F=1,025$, $K=5$, and a batchsize of 64. The $l_\infty$-norm constraint, $\epsilon$, is sampled from a normal distribution $\mathcal{N}(\mu, \sigma)$ with the mean $\mu=0.05$ and the variance $\sigma=0.05$. {Note that we randomize $\epsilon$, the constraint value of $l_\infty$ to train the \emph{Throttle} module in \sys to learn to adjust the magnitude of each feature under different constraints. } 
In the training phase, we use an Adam~\cite{kingma2014adam} optimizer to update the parameters of the anonymizer $\mathcal{G}$ for 50 epochs, with a learning rate of 4e-4. The default adversary is A1. 

\begin{table}\centering
\begin{threeparttable}[t]
\setlength{\abovecaptionskip}{5pt}%
\setlength{\belowcaptionskip}{0pt}%

\caption{{Dataset statistics.}}

\footnotesize
\setlength{\tabcolsep}{1.1mm}{
\begin{tabular}{@{}>{\hspace{0.1cm}}p{2.8cm}ccc>{\hspace{-0.1cm}\centering\arraybackslash}p{1.4cm}@{}}
\toprule
\textbf{Dataset}                       & \textbf{Subset}          & {\textbf{\#Speaker}} & \textbf{\#Utterance}  & \textbf{Duration (s)} \\ \midrule
LibriSpeech (English)                  & \textit{test-clean}      & {40}                 & 2,620                 & $1.3\sim35.0$         \\ \midrule
AISHELL (Chinese)                      & \textit{test}            & {20}                 & 7,176                 & $1.9\sim14.7$      \\ \midrule
CommonVoice (French)                   & \textit{test}            & {-$^*$}              & 5,000$^\dagger$       & $1.6\sim11.5$        \\ \midrule
CommonVoice (Italian)                  & \textit{test}            & {-$^*$}              & 5,000$^\dagger$       & $3.2\sim11.4$         \\ \bottomrule
\end{tabular}}

\begin{tablenotes}[flushleft]
\item[] \vspace{-1pt}\hspace{-2pt}\footnotesize (i) $^*$: CommonVoice datasets have no credible speaker identity information. \\(ii) $^\dagger$: We use the first 5,000 utterances in CommonVoice for evaluation.
\end{tablenotes}

\label{tab:dataset}
\end{threeparttable}
\end{table}
\begin{table}\centering
\begin{threeparttable}[t]
\setlength{\abovecaptionskip}{5pt}%
\setlength{\belowcaptionskip}{0pt}%
\setlength\tabcolsep{4.5pt}
\caption{ASVs used for evaluation.}

\footnotesize
\setlength{\tabcolsep}{2.3mm}{
\begin{tabular}{@{}llllr@{}}
\toprule
\textbf{Model} & \textbf{Alias} & \textbf{Category} & \textbf{Source}	&  \textbf{EER$^\dagger$(\%)}	\\ \midrule
ECAPA-TDNN     & \textbf{EP}    & DNN-based         & SpeechBrain   	& 0.70~                      	\\
X-vector         & \textbf{XV}    & DNN-based         & SpeechBrain    	& 6.53~                      	\\
GMM-UBM        & \textbf{GMM}   & Statistical       & Kaldi           	& 11.39~                     	\\
ivector-PLDA   & \textbf{IV}    & Statistical       & Kaldi         	& 6.03~                          \\
iFlytek        & \textbf{IF}    & Commercial        & iFlytek        	& 9.44~                          \\ \bottomrule
\end{tabular}}

\begin{tablenotes}[flushleft]
\item[] \vspace{-1pt}\hspace{-2pt}\footnotesize $^\dagger$: We test the EERs of the five ASVs on \emph{test-clean} of LibriSpeech (B0).
\end{tablenotes}

\label{tab:ASVs}
\end{threeparttable}
\end{table}
\noindent\textbf{Dataset}. We train \sys on VoxCeleb1~\cite{Nagrani17}, an English dataset with 352-hour audios from 1,251 speakers. Four widely-used datasets are adopted to evaluate the effectiveness of \sys, i.e., LibriSpeech (English)~\cite{panayotov2015librispeech}, AISHELL (Chinese)~\cite{bu2017aishell}, CommonVoice (French)~\cite{ardila2019common}, and CommonVoice (Italian)~\cite{ardila2019common}. We use the test sets of the four datasets for evaluation. Note that CommonVoice datasets are used for ASR with no 
credible speaker identity information, so we only use them to test whether the anonymized audios maintain intelligibility. More details about the datasets are listed in Table~\ref{tab:dataset}.

\noindent{\textbf{ASV}. As shown in Table~\ref{tab:ASVs}, we use five widely-used ASVs to test the effectiveness and transferability of \sys, i.e., ECAPA-TDNN, X-vector~\cite{snyder2018x}, ivector-PLDA~\cite{dehak2010front}, GMM-UBM~\cite{reynolds2000speaker} and iFlytek ASV~\cite{iFlytekOP}. ECAPA-TDNN and X-vector are DNN-based ASVs implemented using SpeechBrain~\cite{speechbrain} and trained on VoxCeleb1\&2. GMM-UBM and ivector-PLDA are traditional statistical model-based ASVs implemented using Kaldi toolkit~\cite{povey2011kaldi} and trained on VoxCeleb1\&2. iFlytek ASV is a commercial ASV API provided by iFlytek Open Platform~\cite{iFlytekOP}, with private training data that is unknown. We download pretrained models of the five ASVs. The baseline performance of the five ASVs presented in Table~\ref{tab:ASVs} is tested on the \emph{test-clean} set of LibriSpeech.
}

\noindent\textbf{ASR}. We evaluate the decoding error of anonymized audios on eleven ASRs, including five English ones, two Chinese ones, two French ones and two Italian ones. These ASRs are trained on different datasets or have different architectures. More details about the ASRs are listed in Table~\ref{tab:ASRs} (Appendix). The baseline performance of the ASRs presented in Table~\ref{tab:ASRs} is tested on the corresponding datasets in Table~\ref{tab:dataset}.

\begin{figure*}[t]
    \centering
\setlength{\abovecaptionskip}{0pt}
\setlength{\belowcaptionskip}{0cm}
\includegraphics[width=7in]{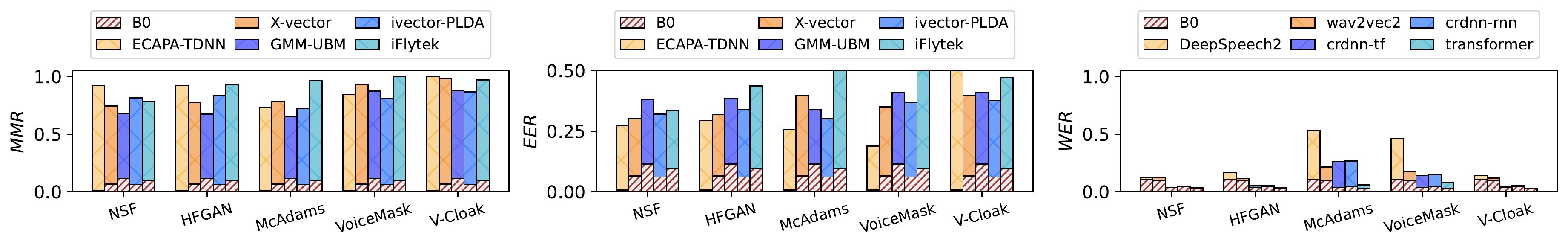}

\caption{{Comparison with existing works. (a) MMR. (b) EER. (c) WER. \sys yields the highest average MMR of 94.02\% and the highest average EER of 46.10\%. \sys obtains a low average WER of 7.65\% second only to the NSF (7.19\%).}}\label{fig:comparison}
\end{figure*}

\noindent\textbf{Evaluation metrics}.
Six metrics are used to evaluate the performance of voice anonymization. 
\begin{itemize}
    \item \emph{Miss-Match Rate (MMR)}, the probability that the voiceprint of the anonymized audio cannot be matched with that of the original speaker by the ASV, which is like the False Rejection Rate (FRR) of the ASV. 
    \item \emph{Wrong-Match Rate (WMR)}, the probability that the voiceprint of the anonymized audio is matched with that of a wrong speaker, which is like the False Acceptance Rate (FAR) of the ASV.
    \item \emph{Equal Error Rate (EER)}, the rate at which MMR equals WMR (FRR equals FAR), which measures the overall anonymization power. 
    \item \emph{Word Error Rate (WER)/Character Error Rate (CER)}, metrics that measure the differences between the transcription given by the ASR and the ground-truth. WER (resp. CER) is calculated as,
\begin{equation*}\label{equ:wer}
\begin{aligned}
\textrm{WER~(resp.~CER)} = \frac{N_{sub}+N_{del}+N_{ins}}{N_{ref}},
\end{aligned}
\end{equation*}

\noindent where $N_{sub}$, $N_{del}$ and $N_{ins}$ are the numbers of substitution, deletion, and insertion errors of words (resp. characters), respectively. $N_{ref}$ is the ground-truth number of words (resp. characters).

\item \emph{Signal-to-noise Ratio (SNR)}, computed as $\mathrm{SNR}(\mathrm{dB}) = 10\log_{10}(P_x/P_\delta)$, where $\delta=\tilde{x}-x$, $P_x$ and $P_\delta$ are the average power of the original audio and the anonymization perturbation, respectively. SNR is used to evaluate the objective naturalness of the anonymized audios. We also evaluate the subjective naturalness and timbre of the anonymized audios with a user study.
\item \emph{Real-time coefficient (RTC)}, computed as $\textrm{RTC}=T_{cvt}/T_{audio}$, where $T_{audio}$ is the duration of the original audio, and $T_{cvt}$ is the time to anonymize the audio. RTC is used to evaluate the efficiency of the voice anonymization system. 

{MMR (i.e., false negative/rejection rate) and WMR  (i.e., false positive/acceptance rate) are commonly-used metrics for speaker verification systems. EER and WER (CER) 
are widely used by existing works, e.g., NSF, HFGAN, and McAdams.  SNR is a commonly-used metric to measure the imperceptibility of adversarial perturbations. RTC is often used to gauge the efficiency of voice processing methods \cite{qian2018hidebehind}. To achieve desirable anonymization performance, MMR, WMR, and EER are better to be higher. To maintain intelligibility, WER is better to be lower, and the SNR is better to be higher. To realize real-time voice anonymization, RTC is better to be lower.}

\end{itemize}

\noindent\textbf{Baselines}.
We compare \sys with the baselines. The performance of ASVs and ASRs with clean/unprocessed audios is referred to as {B0}. Moreover, we reproduce four state-of-the-art voice anonymization methods, i.e., NSF~\cite{fang2019speaker}, HFGAN~\cite{DBLP:journals/corr/abs-2202-13097}, McAdams~\cite{DBLP:conf/interspeech/0001TTNE21}, and VoiceMask~\cite{qian2018hidebehind, qian2019speech}. Details of the baselines are in the Appendix \ref{appdix:baseline}.

\begin{figure}[t]
    \centering
\setlength{\abovecaptionskip}{0pt}
\setlength{\belowcaptionskip}{0cm}
\hspace{-0.5cm}\includegraphics[width=3in, trim=5 5 5 5, clip]{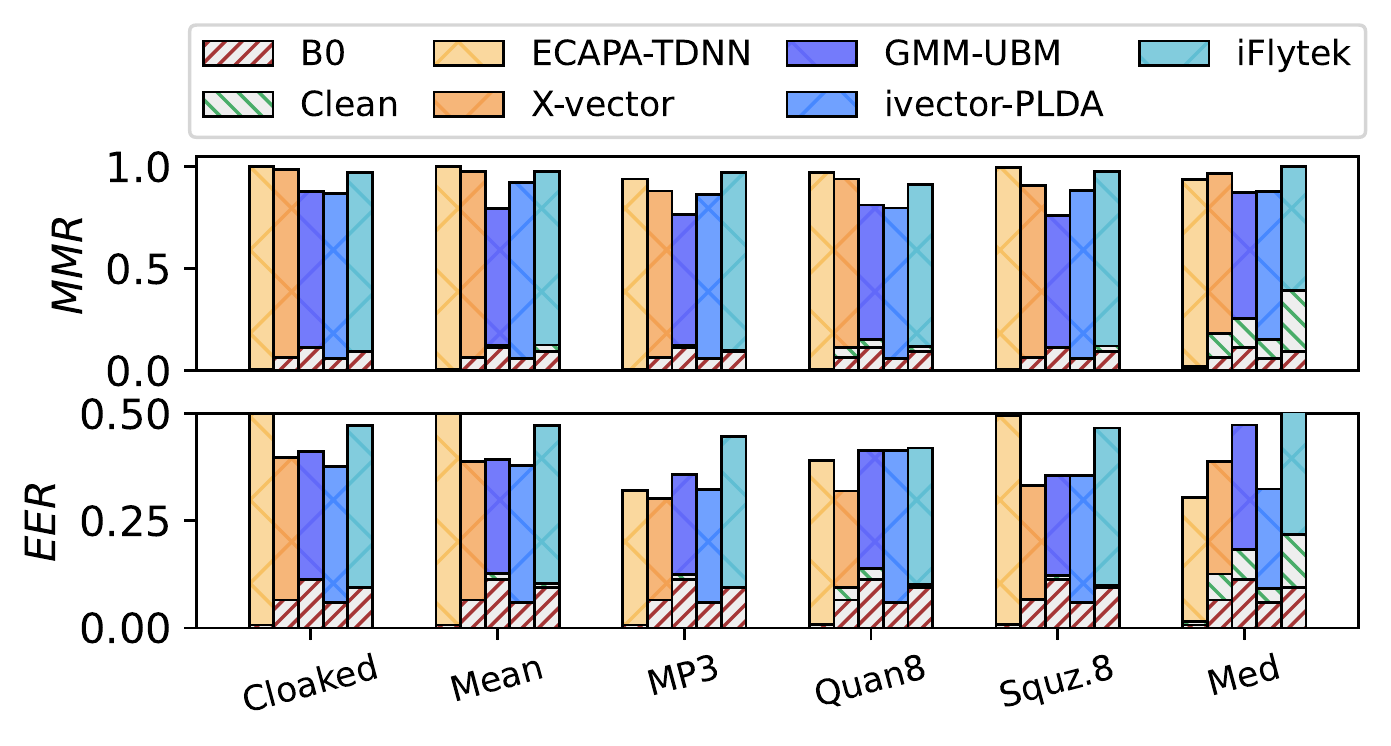}

\caption{{Unidentifiability under adversary A2. The most effective de-noising method, MP3 compression, only causes a decrease in the MMR of 3.27\% and the EER of 9.26\%.}}\label{fig:denoise}
\end{figure}

\noindent{\textbf{Hyperparameter tuning}. Hyperparameters affect the convergence, performance and generalization of the anonymizer. When balancing different optimization goals of anonymity, intelligibility and naturalness, we need to tune the hyperparameters $\alpha$, $\beta$ and $\gamma$. We adopt a stepwise hyperparameter tuning, which increases the weights of other loss terms as the anonymity loss decreases. When $L_{\mathrm{ASV}}\geq 0.3$, $\alpha=0.5, \beta=$1e-6, $\gamma=0.1$. When $0.15\leq L_{\mathrm{ASV}}<0.3$, $\alpha=0.8, \beta=$2e-6, $\gamma=0.1$. When $L_{\mathrm{ASV}}<0.15$, $\alpha=1, \beta=$3e-6, $\gamma=0.1$. }

\subsection{Comparison With Existing Works}\label{subsec:comparsion}
As shown in Figure~\ref{fig:comparison} (Table~\ref{tab:existingwork} in the Appendix), we compare \sys with four existing anonymization methods on the \emph{test-clean} set of LibriSpeech against adversary A1. {\sys ($\epsilon=0.1$) achieves the highest average MMR of 94.02\% and the highest average EER of 46.10\%.} {In the worst case where the attacker adopts the best ASV, the lowest EER of \sys is still 8.23\% higher than the lowest EER of all other baselines.} \sys gives a low average WER of 7.65\% second only to NSF (7.19\%). Although NSF achieves the lowest WER, it only gains an MMR of 78.74\% and an EER of 32.27\%. Although VoiceMask has high unidentifiable performance, it causes severe linguistic distortions, with an average WER of 20.05\%. Overall, \sys provides the best anonymity while maintaining a low ASR decoding error (high intelligibility).

\textbf{Discussion.} As far as we are concerned, existing works on voice anonymization have not standardized the threshold of EER for \emph{enough} anonymization.  
We expect that in the future, standards on data anonymization, especially on biometric data, will be implemented, e.g., by NIST. 

\subsection{Cross-Language Performance}
Apart from English, we also test \sys on Chinese, French and Italian datasets. Note that \sys is trained only on the English ASV and ASR. 

As shown in Figure~\ref{fig:chinese} and Table~\ref{tab:chinese} in the Appendix, \sys can effectively anonymize Chinese audios, with an average MMR of 98.19\% and an EER of 51.44\%. The cross-language transferability of \sys may be attributed to the language-agnostic ECAPA-TDNN used in the training process. Moreover, we test \sys with two Chinese ASRs, and the CER results are presented in Figure~\ref{fig:chinese}. \sys induces a CER increase of only 2.68\% with $\epsilon=0.1$. Due to a lack of identity information provided by CommonVoice datasets, we only use the French and the Italian datasets for ASR intelligibility test. As shown in Figure~\ref{fig:fr_it} (Table~\ref{tab:fr}\&\ref{tab:it} in the Appendix), \sys leads to a WER increase of 6.59\% with French audios (from 16.33\% to 22.91\%) and 6.55\% with Italian audios (from 15.11\% to 21.66\%). 

It demonstrates that even if only an English dataset is used for training, the intelligibility-preserving property of \sys can  generalize to significantly different languages, i.e., Chinese, French, and Italian in our case.

\subsection{{Unidentifiability Under Adaptive Attacker A2}}\label{subsec:robustness}
{We compare the unidentifiability of \sys and baselines against adversary A2, who applies de-noising techniques to try to remove the anonymization perturbation. We consider six methods that are commonly used to remove adversarial perturbations, i.e., smoothing (mean filter and median filter with a kernel of 3, band-pass filter with a passband of 50$\sim$7,500Hz), quantization (from 32-bits to 8-bits), audio squeezing (0.8$\times$ the sampling rate), and MP3 compression. The results are shown in Figure~\ref{fig:denoise} (Table~\ref{tab:meanfilter}-\ref{tab:mp3} in the Appendix). Most of the de-noising techniques have little influence on the effectiveness of \sys. Among them, MP3 compression has the most obvious influence, i.e., an average MMR decrease of 3.27\% (to 90.75\%) and an EER decrease of 9.26\% (to 36.84\%). Compared with the four baselines, \sys achieves the highest EERs in the worst-case scenario against all six de-noising methods. It indicates that the anonymized audios of \sys are robust in terms of unidentifiability under de-noising techniques. }

\begin{table*}\centering
\begin{threeparttable}[tt]
\setlength{\abovecaptionskip}{5pt}%
\setlength{\belowcaptionskip}{0pt}%

\caption{The performance under adaptive attacker A3.}

\scriptsize
\setlength{\tabcolsep}{1mm}{
\begin{tabular}{cl|>{\centering}p{1.5cm}|>{\centering}p{1.5cm}|>{\centering}p{1.5cm}|>{\centering}p{1.5cm}|>{\centering}p{1.5cm}|ccc}
\hline
\multicolumn{2}{c|}{\multirow{2}{*}{\textbf{Model}}} & \multicolumn{1}{c|}{\multirow{2}{*}{\textbf{B0 (\%)}}}      & \multicolumn{1}{c|}{\multirow{2}{*}{\textbf{NSF (\%)}}}     & \multicolumn{1}{c|}{\multirow{2}{*}{\textbf{HFGAN (\%)}}} & \multicolumn{1}{c|}{\multirow{2}{*}{\textbf{McAdams (\%)}}} & \multicolumn{1}{c|}{\multirow{2}{*}{\textbf{VoiceMask (\%)}}}         & \multicolumn{3}{c}{\textbf{\sys (\%)}}                                                                                                                                     \\
\multicolumn{2}{c|}{}                                & \multicolumn{1}{c|}{}                                       & \multicolumn{1}{c|}{}                                       & \multicolumn{1}{c|}{}                                   & \multicolumn{1}{c|}{}                                   & \multicolumn{1}{c|}{}                                          & \multicolumn{1}{c}{\textbf{Untargeted}}               & \multicolumn{1}{c}{\textbf{Targeted w/o key$^\dagger$}}                          & \multicolumn{1}{c}{\textbf{Targeted w key$^\dagger$}}                      \\ \hline
\multirow{6}{*}{\textbf{ASV}}      & \textbf{EP}     & 0.70                                  & 24.50                                    & 24.79                                    & 9.01                                        & 8.80                                             & 22.03                                                 & 46.93                                                    & 32.98                                                 \\
                                   & \textbf{XV}     & 6.53                                  & 27.56                                    & 27.11                                    & 9.13                                        & 14.75                                            & 18.20                                                 & 44.50                                                    & 29.67                                                 \\
                                   & \textbf{GMM}    & 11.39                                 & 30.36                                    & 31.40                                    & 23.33                                       & 32.58                                            & 39.50                                                 & 43.44                                                    & 51.15                                                 \\
                                   & \textbf{IV}     & 6.03                                  & 11.17                                    & 26.51                                    & 8.76                                        & 18.25                                            & 32.90                                                 & 40.42                                                    & 37.22                                                 \\
                                   & \textbf{IF}     & 9.44                                  & 28.24                                    & 27.85                                    & 13.31                                       & 18.95                                            & 19.93                                                 & 37.56                                                    & 31.34                                                 \\\hline
                                   & \textbf{AVG}    & 6.82                                  & 24.37                                    & 27.53                                    & 12.71                                       & 18.67                                            & 26.51                                                 & 42.57                                                    & 36.47                                                 \\ 
                                   & \textbf{WCS}    & -                                     & 11.17                                    & 24.79                                    & 8.76                                        & 8.80                                             & 18.20                                                 & 37.56                                                    & 29.67                                                 \\ \hline
\end{tabular}}

\begin{tablenotes}[flushleft]
\item[] \vspace{-1pt}\hspace{-2pt}\footnotesize (i) $^\dagger$: w/ or w/o key means that the voiceprint of the target speaker is known or unknown to the adversary. (ii) \textbf{AVG}: the average-case scenario, \textbf{WCS}: the worst-case scenario. \textbf{EP}: ECAPA-TDNN, \textbf{XV}: X-vector, \textbf{GMM}: GMM-UBM, \textbf{IV}: ivector-PLDA, \textbf{IF}: iFlytek.
\end{tablenotes}

\label{tab:A3}
\end{threeparttable}
\end{table*}

\subsection{{Unlinkability Under Adaptive Attacker A3}}
{We compare the performance of \sys and baselines against adaptive attacker A3, who has the anonymizer $\mathcal{G}$ and attempts to link the voiceprint of an anonymized audio to that of an anonymizer-processed audio. As shown in Table~\ref{tab:A3}, with the untargeted anonymization, \sys achieves the second highest anonymization performance (only slighly worse than HFGAN). With the targeted anonymization, \sys achieves the highest EER among all methods. For adversary A3, without the voiceprint key, the anonymizer-processed audio (enrollment audio) still cannot be matched with the anonymized audio (test trial), achieving a highest EER of 42.57\% in the average case and 37.56\% in the worst case. In the most challenging scenario where the adversary has access to the anonymizer and the voiceprint key $v$, audio samples anonymized with $v$ from any speakers (not only the original speaker of the anonymized audio, but also other speakers)  will be matched to the same speaker $v$, thus producing a highest EER of 36.47\%. Overall, our experiment results show that unlinkability is achieved in the face of adversary A3. To verify the targeted anonymization effectiveness of \sys, we convert audios of \textit{test-clean} to 10 speakers in \textit{dev-clean}, and test the success rate. As shown in Table \ref{tab:exactlymatch}, 99.99\% of the converted audios are successfully matched with the target speakers.}

\begin{figure}[t]
    \centering
\setlength{\abovecaptionskip}{0pt}
\setlength{\belowcaptionskip}{0cm}
\includegraphics[width=2.3in]{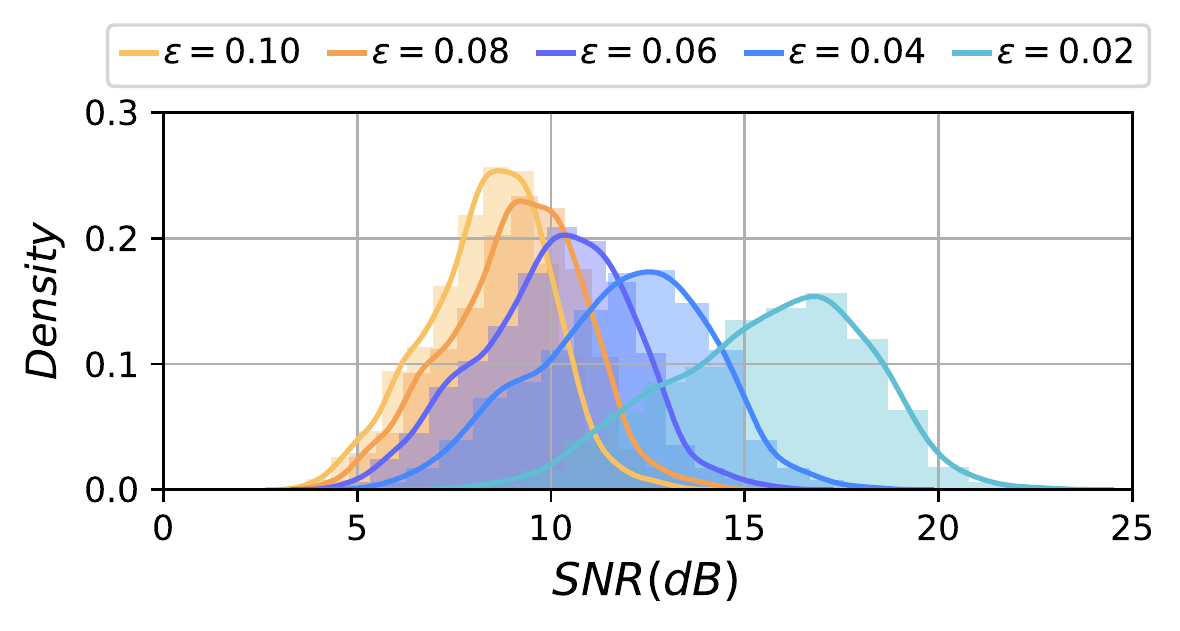}

\caption{The SNRs at different anonymization levels.}\label{fig:snr}
\end{figure}

\subsection{Anonymization Levels}\label{subset:level}
As shown in Figure~\ref{fig:epsilon} and Table~\ref{tab:eps} in the Appendix, we test the performance of \sys at different anonymization levels, i.e., vary the $\epsilon$ from $0.02\sim0.10$, on the \emph{test-clean} set of LibriSpeech. We can observe that the MMRs and EERs of \sys are much higher than those of clean audios (B0). As $\epsilon$ increases, i.e., larger anonymization perturbations, both MMRs and EERs increase. Note that although the anonymizer of \sys is trained using the ECAPA-TDNN ASV, the anonymization is also effective for the other four ASVs including a commercial ASV, iFlytek, whose architecture and training set are unknown. The transferability is gained probably because we optimize at the intermediate voiceprint layer rather than the last classification layer.

As for intelligibility, we can see that only a slight increase of WER is induced by the anonymization perturbation, i.e., increased by 1.46\% from 6.19\% to 7.65\% ($\epsilon=0.1$). Although only the DeepSpeech2 ASR is used in the training process, the intelligibility of \sys generalizes to other ASRs of different architectures and training sets.

As shown in Figure~\ref{fig:snr} in the Appendix. The average SNRs of  anonymized audios range from 8.4dB ($\epsilon=0.1$) to 15.5dB ($\epsilon=0.02$). The lower the $\epsilon$, the larger the average SNR and the variance, which means that for those \emph{hard-to-anonymize} audios, \sys adaptively generates larger anonymization perturbations, thus \sys can maintain high MMR (87.86\%) even when $\epsilon=0.02$.

\begin{figure*}[tt]
    \centering
\setlength{\abovecaptionskip}{-0.0cm}
\setlength{\belowcaptionskip}{-0.0cm}
\subfigcapskip = -0.25cm
\subfigure{
    \vspace{2cm}\hspace{0.5cm}
    \includegraphics[height=0.6cm, trim=0 155 0 0, clip]{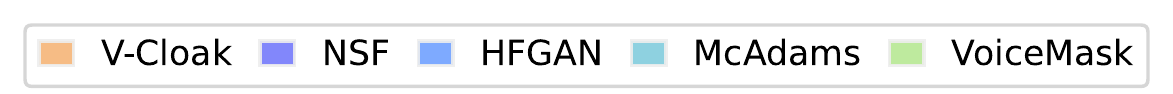}
    }
\\\vspace{-0.5cm}\setcounter{subfigure}{0}
\subfigure{
    \includegraphics[width=6.5in, trim=0 0 0 0, clip]{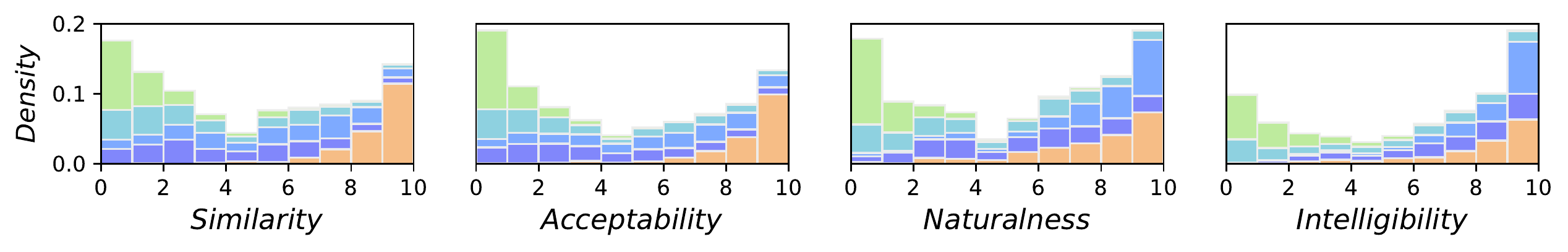}
    }
\vspace{-0.0cm}\caption{{The results of user study.}}\label{fig:userstudy}
\end{figure*}


\subsection{Ablation Study}\label{subsec:ablation}

\emph{Intelligibility loss.} As shown in Table~\ref{tab:NoASR} in the Appendix, without the intelligibility loss, the anonymizer can achieve a higher MMR of 97.16\% and a higher EER of 53.53\%, since the constraint on the anonymization process is looser. However, the intelligibility is greatly reduced (an average WER of 107.66\%) without the intelligibility loss term. In addition, we evaluate the effectiveness of the CTC loss  and the GPG loss. We find that GPG is more effective in decreasing the ASR decoding error and stabilizing the convergence of ASV loss.

\emph{Throttle.} We remove \emph{Throttle} from the anonymizer and present the results in Figure~\ref{fig:wothrottle} and Table~\ref{tab:throttle} in the Appendix. We can see that removing \emph{Throttle} induces an obvious performance degradation on GMM-UBM, ivector-PLDA, and iFlytek, which demonstrates the necessity of \emph{Throttle}.

\subsection{{Data Separation}}

{In the default experiment setup, the training datasets of \sys and baselines are partially overlapped with the training datasets of ASVs (note that the test datasets are not overlapped with any training datasets). In this section, we examine the performance of \sys and baselines when their training datasets do not overlap with those of the ASVs. More specifically, we train \sys on VoxCeleb1\&2 and LibriSpeech (train-clean-100, train-other-500). NSF and HFGAN are trained on VoxCeleb1\&2, LibriSpeech (train-clean-100, train-other-500), and LibriTTS (train-clean-100). McAdams and VoiceMask are model-free methods and do not need to be trained.
For ASVs, we train ECAPA-TDNN, X-vector, and DeepSpeaker \cite{li2017deepspeaker} all on LibriSpeech (train-clean-360). The test set is LibriSpeech (test-clean). As shown in Table \ref{tab:separation}, \sys achieves the highest average and worst-case MMR and EER compared with four baselines. This verifies the transferability of \sys, i.e., \sys is effective in fooling ASVs trained on totally different datasets.}

\begin{table*}\centering
\begin{threeparttable}[tt]
\setlength{\abovecaptionskip}{5pt}%
\setlength{\belowcaptionskip}{0pt}%

\caption{Comparison with existing works under data separation.}

\scriptsize
\setlength{\tabcolsep}{1.5mm}{
\begin{tabular}{cl|c|crr|crr|crr|crr|crr}
\hline
\multicolumn{2}{c|}{\multirow{2}{*}{\textbf{Model}}} & \textbf{B0 (\%)}  & \multicolumn{3}{c|}{\textbf{NSF (\%)}}                                                    & \multicolumn{3}{c|}{\textbf{HFGAN (\%)}}                                                    & \multicolumn{3}{c|}{\textbf{McAdams (\%)}}                                                    & \multicolumn{3}{c|}{\textbf{VoiceMask (\%)}}                                                     & \multicolumn{3}{c}{\textbf{\sys (\%)}}                                                      \\
\multicolumn{2}{c|}{}                                & \textbf{EER} & \textbf{MMR}               & \multicolumn{1}{c}{\textbf{WMR}} & \multicolumn{1}{c|}{\textbf{EER}} & \textbf{MMR}              & \multicolumn{1}{c}{\textbf{WMR}} & \multicolumn{1}{c|}{\textbf{EER}} & \textbf{MMR}              & \multicolumn{1}{c}{\textbf{WMR}} & \multicolumn{1}{c|}{\textbf{EER}} & \textbf{MMR}               & \multicolumn{1}{c}{\textbf{WMR}} & \multicolumn{1}{c|}{\textbf{EER}} & \textbf{MMR}               & \multicolumn{1}{c}{\textbf{WMR}} & \multicolumn{1}{c}{\textbf{EER}} \\ \hline
\multirow{3}{*}{\textbf{ASV}}     & \textbf{EP}      & 3.72         & 88.89  & 3.89                             & 38.09                             & 87.33 & 3.89                             & 42.21                             & 46.53 & 3.89                            & 20.69                             & 70.15  & 3.89                            & 23.40                             & 97.90    & 3.89                            & 42.21                            \\
                                  & \textbf{XV}      & 5.74         & 87.33  & 4.73                             & 34.47                             & 88.89 & 4.73                             & 39.05                             & 84.28 & 4.73                            & 40.19                             & 95.73  & 4.73                            & 37.79                             & 100.0    & 4.73                            & 44.73                            \\
                                  & \textbf{DP}      & 3.72         & 93.97  & 4.05                             & 39.70                             & 89.39 & 4.05                             & 33.13                             & 80.15 & 4.05                            & 35.00                             & 99.24  & 4.05                            & 41.37                             & 99.77    & 4.05                            & 49.47                            \\ \hline
                                  & \textbf{AVG}     & 4.39         & 90.06  & 4.22                             & 37.42                             & 88.54 & 4.22                             & 38.13                             & 70.32 & 4.22                            & 31.96                             & 88.37  & 4.22                            & 34.19                             & 99.22    & 4.22                            & 45.47                            \\ 
                                  & \textbf{WCS}     & -            & 87.33  & 3.89                             & 34.47                             & 87.33 & 3.89                             & 33.13                             & 46.53 & 3.89                            & 20.69                             & 70.15  & 3.89                            & 23.40                             & 97.90    & 3.89                            & 42.21                            \\ \hline
\end{tabular}}

\begin{tablenotes}[flushleft]
\item[] \vspace{-1pt}\hspace{-2pt}\footnotesize \textbf{AVG}: average, \textbf{WCS}: worst-case scenario. \textbf{EP}: ECAPA-TDNN, \textbf{XV}: X-vector, \textbf{DP}: DeepSpeaker.
\end{tablenotes}

\label{tab:separation}
\end{threeparttable}
\end{table*}

\subsection{User Study}\label{subsec:userstudy}

To demonstrate the intelligibility-, naturalness-, and timbre-preserving properties of \sys, we conduct a user study, which is approved by the Institutional Review Board (IRB) of our institutes. 

{\textbf{Setup.} We have recruited 102 participants to answer two sets of questions. The participants are aged 18$\sim$28 with 72 males and 30 females.}  Before answering each question, users are asked to listen to one or two audios and give a score according to the quality of the audio(s). Users can listen to the audios for multiple times if they are unsure of the answer. The audios are played via a JBL GO2 loudspeaker in a quiet environment. {Each audio sample is clipped to 10s. Note that we anonymize an audio piece by piece so that the anonymization will not degrade over long pieces of audio.} The user study is carried out in a controlled laboratory environment, where we guarantee that no bots, scripts or automated answering tools are used in the process. In addition, we randomly insert attention-check questions to ensure that the participants pay attention to the questions.

\textbf{Timbre.} The first set of questions evaluates the timbre of the anonymized audios. We select 5 audios and anonymize each audio with \sys and four baselines, resulting in a total of 25 pairs of audios. Each participant is asked to listen to the original audio and the anonymized audio from the same speaker and to rate the similarity between the two with a score from 1$\sim$10 points (1 for completely different and 10 for completely the same). The user is also asked to assume that the speaker of the original audio is a celebrity or their acquaintances and rate whether they accept the anonymized audio as from the same speaker. There is a total of 25 similarity ratings and 25 acceptability ratings given by each participant.

\textbf{Intelligibility \&  Naturalness.} The second set of questions evaluates the intelligibility and the naturalness of the anonymized audios.  We select 5 audios and anonymize each audio with \sys and the four baselines, resulting in a total of 25 audios. Each participant is asked to listen to one audio sample and to rate the intelligibility of the audio with a score from 1$\sim$10 points (1 for completely unintelligible and 10 for completely intelligible). The user is also asked to rate the naturalness of the audio with a score from 1$\sim$10 points (1 for completely unnatural and 10 for completely natural). There is a total of 25 naturalness ratings and 25 intelligibility ratings given by each participant.

{\textbf{Results.} As shown in Figure~\ref{fig:userstudy}, \sys gains the highest scores in similarity and acceptability, with most scores distributed between 7$\sim$10, which means that \sys preserves the timbre of audios for users to trust that the audio is from the genuine speaker. For naturalness and intelligibility, \sys obtains scores as high as HFGAN. User studies verify that \sys preserves intelligibility, naturalness and timbre of the audios. }

\subsection{Efficiency}

We test \sys and the baselines \cite{qian2018hidebehind, qian2019speech, fang2019speaker, DBLP:journals/corr/abs-2202-13097, DBLP:conf/interspeech/0001TTNE21} under the same computing resources on the \emph{test-clean} set of LibriSpeech. The results are shown in Table~\ref{tab:compare}, which indicates that \sys has the highest efficiency.

\section{Related Work}
In this section, we briefly review the most related works on voice anonymization, including signal processing, voice conversion \& synthesis and voice adversarial examples.

\subsection{Voice Signal Processing}
Conventional signal processing techniques are used to alter the voice signals. Patino et al.~\cite{DBLP:conf/interspeech/0001TTNE21} proposed to utilize the McAdams coefficient to shift the formant positions in an utterance for speaker anonymization. Vaidya et al.~\cite{DBLP:conf/sp/VaidyaS19} modified the pitch, tempo, pause, and MFCCs of an audio to alter the voice characteristics. Voice signal processing methods do not consider preserving naturalness of audios, thus they induce large distortions. In comparison, \sys limits the distortion with a psychoacoustics-based loss in the training phase to mask the introduced anonymization perturbations.

\subsection{Voice Conversion \& Synthesis}
Voice Conversion (VC)/ Synthesis (VS) methods aim to convert the speaker features of an original audio into those of the target speaker while preserving naturalness. 

\textbf{VTLN-based VC.} VTLN is a traditional voice conversion method that utilizes a frequency warping function to rescale the frequency axis of the voice spectrogram~\cite{VTLN}. Qian et al.~\cite{qian2018hidebehind, qian2019speech} utilized a bilinear warping function with randomly-chosen parameters to conceal the original voiceprint. However, they do not consider intelligibility and naturalness of the sanitized audios. According to our user study, the method has low scores of intelligibility and naturalness. Srivastava et al.~\cite{srivastava2020evaluating} investigated the performance of different target speaker selection strategies in two VTLN-based VC and one DNN-based VC. Their results show that VTLN-based VC methods suffer from a 6.5\%$\sim$10.4\% WER increase on ASR.  In contrast, we consider intelligibility and naturalness in training the anonymizer to achieve lower WER/CER and high subjective scores in user studies.

\textbf{DNN-based VC/VS.} Yoo et al.~\cite{DBLP:journals/access/YooLLOKY20} proposed to anonymize audios by VC technique based on CycleVAE-GAN, which modifies the speaker identity vectors of the VAE input.
Fang et al.~\cite{fang2019speaker} proposed to disentangle linguistic and speaker identity features from an utterance, replace the latter with a pseudo identity, and re-synthesize an anonymized audio. Han et al.~\cite{han2020voice} designed a voiceprint privacy metric according to differential privacy~\cite{dwork2006calibrating} and adapted Fang~\cite{fang2019speaker} to a voice data release mechanism that satisfies the privacy metric. Miao et al.~\cite{DBLP:journals/corr/abs-2202-13097} followed the basic framework of Fang~\cite{fang2019speaker} but proposed to use a HuBERT-based content encoder, an ECAPA-TDNN speaker encoder, and a HiFi GAN to re-synthesize the speech. Justin et al.~\cite{DBLP:conf/fg/JustinSDVIM15} proposed to transform only the linguistic content of an audio into an audio of another speaker with a speech synthesis system. However, the features of pitch, rhythm, tempo, and pause in the audio are all lost. DNN-based VC/VS methods convert the original speaker features into those of another speaker, which may not be suitable for instant messaging and social media applications. In comparison, we preserve the timbre of the original speaker while hiding the voiceprint of the speaker from the ASV. 

\subsection{Voice Adversarial Examples}

As far as we know, there is only one work on generative model-based audio adversarial attacks, called FAPG. FAPG~\cite{xie2020enabling} trains an audio adversarial example generator and a series of feature maps. Each feature map is trained for each target speaker. A feature map can be concatenated with the anonymizer to produce adversarial examples of a specific target speaker. \sys differs from FAPG in four aspects. Firstly, FAPG attacks a speaker classification model with fixed and known classes and requires a re-training of feature maps for unseen classes/speakers. Secondly, FAPG utilizes the last softmax layer of the classification model, which is training-set-specific, resulting in low transferability~\cite{DBLP:conf/iccv/HuangKGHBL19}. In comparison, \sys can realize targeted anonymization with the input of any target speaker without the need to retrain the anonymizer. Thirdly, the size of the FAPG model increases with the number of feature maps, while the \sys model has a constant size. Finally, the design of FAPG only considers one fixed constraint on the added noise. In contrast, \sys introduces a learnable structure to allow diversified constraints on the anonymization perturbation.

\section{{Ethics Discussion}}

{\sys is designed to protect voiceprint, a sensitive biometric, as the General Data Protection Regulation (GDPR) \cite{GDPR} enacted by the European Union grants natural persons ``the right to the personal data protection". The processing of biometric data for the purpose of uniquely identifying a natural person is prohibited except for certain cases \cite{GDPR}, the prominent ones of which include}
\begin{itemize}\setlength{\itemsep}{0pt}
\item {Criminal convictions and offences.}
\item {Social security.}
\item {Scientific or historical research.}
\item {Public health.}
\item {Consent by the data owner.}  
\end{itemize}

{We shall take proper measures to prevent the abuse of \sys in these legitimate cases, including but not limited to 1) providing necessary details of \sys to bodies that can legally conduct voice analysis, 2) withholding the release of the code of \sys for 90 days after notification. }

\section{Conclusion \& Future Work}

In this work, we present the design, implementation and evaluation of \sys, a real-time voice anonymization system, which preserves intelligibility, naturalness and timbre of the audios. Extensive experiments on four language datasets with various ASVs and ASRs confirm the effectiveness and transferability of \sys. The user study demonstrates the high perceptual quality of the anonymized audios generated by \sys. In the future, \sys can be further improved in several aspects.

\emph{Psychoacoustics}. Apart from spectral masking, there are other psychoacoustic effects that may be leveraged to further improve the performance of \sys. For instance, we may find out the non-silent segments of audios utilizing Voice Activity Detection (VAD)~\cite{DBLP:journals/spl/SohnKS99}, and only anonymize the non-silent parts, which may further improve naturalness.

\emph{Analog voice data}. We mainly consider anonymization for digital voice data in applications such as voice messaging and social media. Anonymization for analog voice data or over-the-air digital voice data may be necessary in the case where the adversary physically records public speeches or private conversations. To realize over-the-air or analog voice data anonymization, a possible way is to incorporate room impulse responses (RIRs)~\cite{DBLP:conf/icdsp/JeubSV09} in the optimization problem of voice anonymization. This is our future direction.

{\emph{Other attacks}. We assume that the adversary knows that voice anonymization is performed and only focuses on de-anonymization. However, it is possible that the adversary does not know whether voice anonymization is conducted and tries to detect its presence. Anonymization detection can be performed by A1 $\sim$A3 before de-anonymization. The attacker may also train the ASV with (denoised) anonymized samples and feed denoised samples into ASV during inference. However, a possible loophole is that the performance of ASV on clean samples may degrade. We consider these attacks as our future direction.}

{\emph{Extension to other applications}. The approach we design for \sys may be extended to other tasks. For example, we may train an adversarial model with a similar architecture against an audio Deepfake model, preventing the generation of Deepfake audio, similar to Fawkes for facial images \cite{shan2020fawkes}. We consider the extensions of \sys as our future work.
}

\section*{Acknowledgments}
We sincerely thank our Shepherd and all the anonymous reviewers for their valuable comments. This work is supported by China NSFC Grant 61925109.

\clearpage
\bibliographystyle{plain}
\bibliography{sections/mybib}
\clearpage

\appendix

\section{Theoretical Analysis}\label{sec:proof}
In this part, we prove the unidentifiability achieved by the untargeted and targeted anonymization, and prove the unlinkability achieved by the targeted anonymization.

{We first define two probabilities $p_A$ and $p_R$ related to the distinguishability of ASV, }
\begin{subequations}
\begin{align}
&\forall X , \exists t_A, ~\mathbb{P}\Big(\forall x_1, x_2 \in X,~\mathcal{S}(\mathcal{V}(x_1), \mathcal{V}(x_2)) \geq t_A\Big)=p_A,\label{equ:same}\\
&\forall X, Y, \exists t_R, ~\mathbb{P}\Big(\forall x \in X, y \in Y,~\mathcal{S}(\mathcal{V}(x), \mathcal{V}(y)) <t_R\Big)=p_R,\label{equ:diff}
\end{align}
\end{subequations}
\noindent where $X, Y$ are the data distributions of two different speakers. $t_A, t_R\in(-1,1)$ are two thresholds, and we have $t_A\geq t_R$. $\mathcal{S}(\cdot, \cdot)$ is the cosine similarity function. Without loss of generality, we further assume that $\forall x, \|\mathcal{V}(x)\|=1$. (\ref{equ:same}) states that if two utterances are from the same speaker, the ASV outputs the score no less than $t_A$ with a probability of $p_A$. (\ref{equ:diff}) states that if two utterances are from two different speakers, the ASV outputs the score less than $t_R$ with a probability of $p_R$. 
The ASV guarantees that both $p_A$ and $p_R$ are close to 1.

\subsection{Untargeted Anonymization}
According to the basic formulation (\ref{equ:obj1}), we train an untargeted anonymizer $\mathcal{G}$ that satisfies
\begin{equation}\label{equ:untarget}
    \mathbb{P}\Big(\forall x,~\mathcal{S}(\mathcal{V}(\tilde{x}), \mathcal{V}(x)) < k_{u}\Big)=p_G,
\end{equation}
\noindent where $\tilde{x}=\mathcal{G}(x), k_u\in(-1, 1)$. (\ref{equ:untarget}) states that for any audio $x$, $\mathcal{G}$ outputs $\tilde{x}$ such that the score between $x$ and $\tilde{x}$ is less than $k_u$ with a probability of $p_G$ close to 1.

\begin{theorem}
$\forall x_0, x_1 \in X$, $\forall t_R\in(\sqrt{5}-2, 1)$, $\exists k_u\in (-1, 1)$, such that $\mathcal{S}(\mathcal{V}(\tilde{x}_0), \mathcal{V}(x_1))<t_R$ with a probability higher than $p_A\cdot p_G$.
\end{theorem}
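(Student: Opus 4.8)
The plan is to translate the claim into a statement about three unit vectors. Write $a=\mathcal{V}(\tilde{x}_0)$, $b=\mathcal{V}(x_0)$, $c=\mathcal{V}(x_1)$, all of norm $1$, so that $\mathcal{S}$ is just their pairwise inner products. Two events do the work: the anonymizer event $E_G=\{\langle a,b\rangle<k_u\}$, which by \eqref{equ:untarget} has probability $p_G$, and the same-speaker event $E_A=\{\langle b,c\rangle\ge t_A\}$, which by \eqref{equ:same} (applied to the pair $x_0,x_1\in X$) has probability at least $p_A$. Treating the randomness in the two statements as independent, as the model implicitly does, $E_A\cap E_G$ has probability at least $p_Ap_G$, so it suffices to exhibit a $k_u\in(-1,1)$ for which $\langle a,c\rangle<t_R$ whenever $\langle a,b\rangle<k_u$ and $\langle b,c\rangle\ge t_A$.

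For the deterministic core I would project $a$ and $c$ onto $b$ and onto $b^{\perp}$ and apply Cauchy--Schwarz to the orthogonal parts, obtaining the ``spherical triangle'' inequality $\langle a,c\rangle\le\langle a,b\rangle\langle b,c\rangle+\sqrt{(1-\langle a,b\rangle^2)(1-\langle b,c\rangle^2)}$. I would then restrict to $k_u<0$, the regime in which the anonymizer actually operates (it drives $\langle a,b\rangle$ well below $0$). Since $\langle a,b\rangle<k_u<0$ and $0<t_A\le\langle b,c\rangle\le 1$, the first term is at most $k_ut_A$, and since $1-\langle a,b\rangle^2\le 1-k_u^2$ and $1-\langle b,c\rangle^2\le 1-t_A^2$ the square root is at most $\sqrt{(1-k_u^2)(1-t_A^2)}$. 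Using $t_A\ge t_R>0$ (so that $k_ut_A\le k_ut_R$ and $1-t_A^2\le 1-t_R^2$) and then AM--GM on the square root, we get $\langle a,c\rangle< k_ut_R+1-\tfrac12(k_u^2+t_R^2)$.

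Now I would choose $k_u$ to force the right-hand side below $t_R$. That inequality rearranges to $(k_u-t_R)^2\ge 2(1-t_R)$, so the natural choice is $k_u=t_R-\sqrt{2(1-t_R)}$, which attains equality there. This $k_u$ is automatically below $t_R<1$; the only remaining constraint is $k_u>-1$, i.e.\ $t_R+1>\sqrt{2(1-t_R)}$, which after squaring is $t_R^2+4t_R-1>0$, whose positive root is exactly $\sqrt{5}-2$ --- this is where the hypothesis $t_R>\sqrt5-2$ is used. (When $t_R$ is already large, specifically $t_R\ge\sqrt3-1>1/\sqrt2$, the above $k_u$ is nonnegative; there one simply takes $k_u=0$, so that $\langle a,b\rangle<0$ makes $\langle a,b\rangle\langle b,c\rangle<0$ and hence $\langle a,c\rangle<\sqrt{1-t_A^2}\le\sqrt{1-t_R^2}<t_R$.) Either way $\langle a,c\rangle<t_R$ holds on $E_A\cap E_G$, completing the proof.

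The step I expect to be most delicate is the sign bookkeeping in the geometric inequality: the bounds $\langle a,b\rangle\langle b,c\rangle\le k_ut_A$ and $\sqrt{1-\langle a,b\rangle^2}\le\sqrt{1-k_u^2}$ both rely on $k_u$ (and $\langle a,b\rangle$) being negative and on $\langle b,c\rangle$ being positive, so the argument has to be phrased inside the anonymizer's operating regime rather than for an arbitrary threshold, and the AM--GM relaxation is what turns the messy bound into the clean quadratic that yields $\sqrt5-2$. A second point to handle carefully is the probabilistic combination: passing from ``each event holds with its stated probability'' to ``the conjunction holds with probability $\ge p_Ap_G$'' relies on the independence assumption baked into the threat model, which should be stated explicitly.
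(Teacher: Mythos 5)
Your proof is correct, and it lands on exactly the same quantitative condition as the paper ($k_u\le t_R-\sqrt{2-2t_R}$, hence the root of $t_R^2+4t_R-1=0$ at $\sqrt{5}-2$), but it gets there by a different decomposition. The paper writes $\mathcal{V}(\tilde{x}_0)\cdot\mathcal{V}(x_1)=\mathcal{V}(\tilde{x}_0)\cdot[\mathcal{V}(x_1)-\mathcal{V}(x_0)]+\mathcal{V}(\tilde{x}_0)\cdot\mathcal{V}(x_0)$ and applies Cauchy--Schwarz to the first term as a whole, giving $\|\mathcal{V}(x_1)-\mathcal{V}(x_0)\|=\sqrt{2-2\mathcal{V}(x_0)\cdot\mathcal{V}(x_1)}\le\sqrt{2-2t_A}$ and hence the bound $\sqrt{2-2t_A}+k_u$ in one step, with no sign conditions on $k_u$ and no case split; you instead project onto $b$ and $b^{\perp}$ to get the sharp ``law of cosines'' bound $\langle a,b\rangle\langle b,c\rangle+\sqrt{(1-\langle a,b\rangle^2)(1-\langle b,c\rangle^2)}$ and then deliberately relax it, which forces the extra hypotheses $k_u<0$ and $t_A\ge t_R>0$ (both legitimate here, since $\sqrt5-2>0$ and existence of one $k_u$ suffices) and the separate branch at $t_R\ge\sqrt3-1$ where you switch to $k_u=0$. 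The paper's route is shorter and cleaner; yours starts from a strictly tighter inequality --- tight enough that, had you not discarded the slack via AM--GM, it would establish the conclusion for all $t_R>0$ rather than only $t_R>\sqrt5-2$ --- but the AM--GM step gives all of that back, so the two arguments are quantitatively identical in the end. You are also right, and more careful than the paper, in flagging that the factorization $\mathbb{P}(E_A\cap E_G)\ge p_Ap_G$ (as opposed to the union bound $p_A+p_G-1$) tacitly assumes independence of the two events; the paper asserts the product bound without comment.
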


\begin{proof}
\begin{equation*}
\begin{aligned}
&~\mathcal{S}(\mathcal{V}(\tilde{x}_0), \mathcal{V}(x_1))=\mathcal{V}(\tilde{x}_0)\cdot \mathcal{V}(x_1)\\
&=\mathcal{V}(\tilde{x}_0)\cdot \big[\mathcal{V}(x_1)-\mathcal{V}(x_0)\big]+\mathcal{V}(\tilde{x}_0)\cdot \mathcal{V}(x_0)\\
&\leq\|\mathcal{V}(\tilde{x}_0)\|\cdot\|\mathcal{V}(x_1)-\mathcal{V}(x_0)\|+\mathcal{V}(\tilde{x}_0)\cdot \mathcal{V}(x_0)\\
&=\|\mathcal{V}(\tilde{x}_0)\|\cdot\sqrt{\|\mathcal{V}(x_1)\|^2+\|\mathcal{V}(x_0)\|^2-2\mathcal{V}(x_1)\cdot \mathcal{V}(x_0)}\\
&~~~~~~~~~~~~+\mathcal{V}(\tilde{x}_0)\cdot \mathcal{V}(x_0)\\
&=\sqrt{2-2\mathcal{V}(x_1)\cdot \mathcal{V}(x_0)}+\mathcal{V}(\tilde{x}_0)\cdot \mathcal{V}(x_0)\\
\end{aligned}
\end{equation*}
From (\ref{equ:same}),
\begin{equation}
    \mathbb{P}\Big(\sqrt{2-2\mathcal{V}(x_1)\cdot \mathcal{V}(x_0)}\leq \sqrt{2-2t_A}\Big)=p_A.
\end{equation}
From (\ref{equ:untarget}),
\begin{equation}
    \mathbb{P}\Big(\mathcal{V}(\tilde{x}_0)\cdot \mathcal{V}(x_0)< k_u\Big)=p_G.
\end{equation}
Thus, 
\begin{equation*}
\begin{aligned}
&~\mathbb{P}\Big(\mathcal{S}(\mathcal{V}(\tilde{x}_0), \mathcal{V}(x_1))\leq\sqrt{2-2\mathcal{V}(x_1)\cdot \mathcal{V}(x_0)}\\&~~~~~~~~~~~~+\mathcal{V}(\tilde{x}_0)\cdot \mathcal{V}(x_0)< \sqrt{2-2t_A}+k_u\Big)
\geq p_A\cdot p_G.
\end{aligned}
\end{equation*}
$\forall t_R\in (\sqrt{5}-2,1)$, $t_A\geq t_R$, 
\begin{equation*}
\begin{aligned}
&~\mathcal{S}(\mathcal{V}(\tilde{x}_0), \mathcal{V}(x_1))< \sqrt{2-2t_A}+k_u \leq \sqrt{2-2t_R}+k_u,\\
&\sqrt{2-2t_R}+k_u<t_R\Leftrightarrow k_u<\big(t_R-\sqrt{2-2t_R}\big)
\in(-1, 1).
\end{aligned}
\end{equation*}
Therefore, $\exists k_u\in (-1, 1)$, such that the probability of $\mathcal{S}(\mathcal{V}(\tilde{x}_0), \mathcal{V}(x_1))<t_R$ is higher than $p_A\cdot p_G$, which is close to 1.
\end{proof}

\subsection{Targeted Anonymization}
\subsubsection{Unidentifiability}
According to the basic formulation (\ref{equ:obj1}), we train a targeted anonymizer $\mathcal{G}$ that satisfies
\begin{equation}\label{equ:target}
    \mathbb{P}\Big(\forall x,~\mathcal{S}(\mathcal{V}(\tilde{x}), v) > k_t\Big)=p_G,
\end{equation}
\noindent where $\tilde{x}=\mathcal{G}(x, v), k_t\in(-1, 1)$. (\ref{equ:target}) states that for any audio $x$, the $\mathcal{G}$ outputs $\tilde{x}$ such that the similarity score between $v$ and $\tilde{x}$ is larger than $k_t$ with a probability of $p_G$.
We assume that the target speaker has a voiceprint $v$ that is far away from that of the original speaker, i.e.,
\begin{equation}\label{equ:voiceprint}
    \mathbb{P}\Big(\forall x\in X,~\mathcal{S}(\mathcal{V}(x), v) < t_R-\eta\Big)=p_R, 0\leq\eta<t_R+1,
\end{equation}
\noindent where $(t-\eta)$ stands for how far the target voiceprint $v$ is from the user's, $\mathcal{V}(x)$, and $\eta$ is a margin.

\begin{theorem}
$\forall x_0, x_1 \in X$, $\forall t_R\in(-\frac{1}{2}, 1)$, $\exists k_t\in(-1, 1), \eta\in(0, t_R+1)$, such that $\mathcal{S}(\mathcal{V}(\tilde{x}_0), \mathcal{V}(x_1))<t_R$ with a probability higher than $p_R \cdot p_G$.
\end{theorem}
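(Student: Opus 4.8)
The plan is to mimic the untargeted case (Theorem~1), with the target voiceprint $v$ playing the role of the pivot. By the targeted‑anonymizer guarantee~(\ref{equ:target}), $\mathcal{V}(\tilde{x}_0)$ is pinned to within cosine similarity $k_t$ of $v$, while assumption~(\ref{equ:voiceprint}) keeps $v$ far (cosine similarity below $t_R-\eta$) from $\mathcal{V}(x_1)$; a triangle‑type estimate should then force $\mathcal{V}(\tilde{x}_0)$ away from $\mathcal{V}(x_1)$. So I would upper‑bound $\mathcal{S}(\mathcal{V}(\tilde{x}_0),\mathcal{V}(x_1))$ by a quantity controlled by $k_t$ and by $t_R-\eta$, and finally pick $k_t$ and $\eta$ to push that bound below $t_R$.

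Concretely, first write $\mathcal{S}(\mathcal{V}(\tilde{x}_0),\mathcal{V}(x_1)) = \mathcal{V}(\tilde{x}_0)\cdot\mathcal{V}(x_1) = \bigl(\mathcal{V}(\tilde{x}_0)-v\bigr)\cdot\mathcal{V}(x_1) + v\cdot\mathcal{V}(x_1)$, apply Cauchy--Schwarz to the first summand together with $\|\mathcal{V}(x_1)\|=1$, and use $\|\mathcal{V}(\tilde{x}_0)-v\| = \sqrt{2-2\,\mathcal{V}(\tilde{x}_0)\cdot v}$ (valid since all voiceprints are unit vectors) to obtain $\mathcal{S}(\mathcal{V}(\tilde{x}_0),\mathcal{V}(x_1)) \le \sqrt{2-2\,\mathcal{V}(\tilde{x}_0)\cdot v} + v\cdot\mathcal{V}(x_1)$. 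Then invoke~(\ref{equ:target}) --- with probability $p_G$, $\mathcal{V}(\tilde{x}_0)\cdot v > k_t$, so the first term is below $\sqrt{2-2k_t}$ --- and~(\ref{equ:voiceprint}) with $x_1\in X$ --- with probability $p_R$, $v\cdot\mathcal{V}(x_1) < t_R-\eta$ --- and combine the two events exactly as in Theorem~1 to get $\mathcal{S}(\mathcal{V}(\tilde{x}_0),\mathcal{V}(x_1)) < \sqrt{2-2k_t} + t_R - \eta$ with probability at least $p_R\cdot p_G$.

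It then remains to choose $k_t$ and $\eta$. Any choice with $\sqrt{2-2k_t}<\eta$ makes this bound strictly below $t_R$. I would fix a concrete margin, e.g.\ $\eta=\tfrac12$, which is admissible exactly when $\tfrac12<t_R+1$, i.e.\ $t_R>-\tfrac12$; this feasibility constraint on the margin (the analogue of the requirement $k_u\in(-1,1)$, hence $t_R>\sqrt{5}-2$, that appeared at the end of Theorem~1) is what singles out the stated range of $t_R$. Then take any $k_t$ with $\sqrt{2-2k_t}<\tfrac12$, e.g.\ $k_t\in(\tfrac78,1)\subset(-1,1)$; for these choices the bound becomes $\mathcal{S}(\mathcal{V}(\tilde{x}_0),\mathcal{V}(x_1)) < \tfrac12+t_R-\tfrac12 = t_R$, as required.

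The Cauchy--Schwarz and normalization estimates, and the way the two probabilistic events are combined, are routine and essentially identical to Theorem~1; so I expect the only delicate point to be the closing bookkeeping --- exhibiting one pair $(k_t,\eta)$ that simultaneously satisfies $k_t\in(-1,1)$, $\eta\in(0,t_R+1)$ and $\sqrt{2-2k_t}<\eta$ uniformly over the target interval, and identifying which constraint is the binding one that forces $t_R>-\tfrac12$.
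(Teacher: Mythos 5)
Your proof is correct, and it closes the argument via a decomposition that is the mirror image of the paper's. The paper groups the pivot $v$ with $\mathcal{V}(x_1)$: it writes $\mathcal{V}(\tilde{x}_0)\cdot\mathcal{V}(x_1)=\mathcal{V}(\tilde{x}_0)\cdot\big[\mathcal{V}(x_1)+v\big]-\mathcal{V}(\tilde{x}_0)\cdot v\leq\sqrt{2+2\,\mathcal{V}(x_1)\cdot v}-\mathcal{V}(\tilde{x}_0)\cdot v$, so that assumption (\ref{equ:voiceprint}) lands inside the square root and (\ref{equ:target}) in the linear term; with $\eta=\tfrac12$ the closing condition becomes $\sqrt{1+2t_R}-k_t<t_R$, i.e.\ $k_t>\sqrt{1+2t_R}-t_R$, a $t_R$-dependent threshold. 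You instead group $v$ with $\mathcal{V}(\tilde{x}_0)$, bounding $\|\mathcal{V}(\tilde{x}_0)-v\|<\sqrt{2-2k_t}$ via (\ref{equ:target}) and keeping $v\cdot\mathcal{V}(x_1)<t_R-\eta$ linear via (\ref{equ:voiceprint}), which yields the condition $\sqrt{2-2k_t}<\eta$. Both routes rest on the same Cauchy--Schwarz estimate, the same two probabilistic events combined the same way, and the same choice $\eta=\tfrac12$ (whose feasibility $\tfrac12<t_R+1$ is, as you say, exactly what pins down $t_R>-\tfrac12$). What your version buys is a cleaner ending: your admissible set $k_t\in(\tfrac78,1)$ is uniform in $t_R$, whereas the paper's threshold $\sqrt{1+2t_R}-t_R$ attains the value $1$ at $t_R=0$ (the paper asserts it lies in $(\tfrac12,1)$, which fails at that point, leaving no valid $k_t<1$ there), so your bookkeeping is in fact the more robust of the two.
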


\begin{proof}
\begin{equation}\label{equ:target_uniden}
\begin{aligned}
&~\mathcal{S}(\mathcal{V}(\tilde{x}_0), \mathcal{V}(x_1))=\mathcal{V}(\tilde{x}_0)\cdot \mathcal{V}(x_1)\\
&=\mathcal{V}(\tilde{x}_0)\cdot \big[\mathcal{V}(x_1)+v\big]-\mathcal{V}(\tilde{x}_0)\cdot v\\
&\leq\|\mathcal{V}(\tilde{x}_0)\|\cdot\|\mathcal{V}(x_1)+v\|-\mathcal{V}(\tilde{x}_0)\cdot v\\
&=\|\mathcal{V}(\tilde{x}_0)\|\cdot\sqrt{\|\mathcal{V}(x_1)\|^2+\|v\|^2+2\mathcal{V}(x_1)\cdot v}\\
&~~~~~~~~~~~~-\mathcal{V}(\tilde{x}_0)\cdot v\\
&=\sqrt{2+2\mathcal{V}(x_1)\cdot v}-\mathcal{V}(\tilde{x}_0)\cdot v\\
\end{aligned}
\end{equation}
From (\ref{equ:diff}),
\begin{equation*}
    \mathbb{P}\Big(\sqrt{2+2\mathcal{V}(x_1)\cdot v}< \sqrt{2+2(t_R-\eta)}\Big)=p_R.
\end{equation*}
From (\ref{equ:target}),
\begin{equation*}
    \mathbb{P}\Big(\mathcal{V}(\tilde{x}_0)\cdot v> k_t\Big)=p_G.
\end{equation*}
Thus,
\begin{equation*}
\begin{aligned}
&~\mathbb{P}\Big(\mathcal{S}(\mathcal{V}(\tilde{x}_0), \mathcal{V}(x_1))\leq\sqrt{2+2\mathcal{V}(x_1)\cdot v}\\
&~~~~~~~~~~~~-\mathcal{V}(\tilde{x}_0)\cdot v< \sqrt{2+2(t_R-\eta)}-k_t\Big)\geq p_R\cdot p_G.
\end{aligned}
\end{equation*}
Let $\eta=\frac{1}{2}$, $\forall t_R\in(-\frac{1}{2}, 1)$,
\begin{equation*}
\begin{aligned}
&~\sqrt{2+2(t_R-1/2)}-k_t<t_R\\
&\Leftrightarrow k_t>\big(\sqrt{1+2t_R}-t_R)
\in(\frac{1}{2}, 1).
\end{aligned}
\end{equation*}
Therefore, $\exists k_t\in(-1, 1)$, such that the probability of $\mathcal{S}(\mathcal{V}(\tilde{x}_0), \mathcal{V}(x_1))<t_R$ is higher than $p_R\cdot p_G$, which is close to 1.
\end{proof}

\subsubsection{Unlinkability}
\textbf{A3 does not have the key.} We first prove that if A3 does not have the voiceprint key $v$ and converts the clean samples of potential speakers with a random key $u$, then the anonymized audio $\tilde{x}_0$ will not be matched with any enrollment samples with high probability.

\begin{theorem}
$\forall x_0 \in X$, $\forall y$, $\exists k_t\in(-1, 1)$, such that $\mathcal{S}(\mathcal{V}(\tilde{x}_0), \mathcal{V}(\tilde{y}))<t_R$ with a high probability, where $\tilde{x}_0=\mathcal{G}(x_0, v)$, $\tilde{y}=\mathcal{G}(y, u)$ and $u\sim U(\{u\in \mathbb{R}^{1\times N}:\|u\|=1\})$.
\end{theorem}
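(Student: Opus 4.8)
The plan is to follow the template of the proofs of Theorems~1--3 --- a Cauchy--Schwarz bound that inserts auxiliary voiceprints and then invokes $\|a-b\|^2 = 2 - 2\,a\cdot b$ for unit vectors --- while adding one new ingredient: a concentration-of-measure estimate for the inner product of a fixed voiceprint with a uniformly random key. First I would record the two one-sided guarantees supplied by $\mathcal{G}$. Applying (\ref{equ:target}) with target $v$ gives $\mathcal{S}(\mathcal{V}(\tilde{x}_0), v) > k_t$ with probability at least $p_G$, where $\tilde{x}_0 = \mathcal{G}(x_0, v)$; applying the same guarantee with the random target $u$ (this uses that $\mathcal{G}$'s targeting property is uniform over target keys, the intended reading of (\ref{equ:target})) gives $\mathcal{S}(\mathcal{V}(\tilde{y}), u) > k_t$ with probability at least $p_G$ conditionally on each realization of $u$, where $\tilde{y} = \mathcal{G}(y, u)$.

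Next I would bound $\mathcal{S}(\mathcal{V}(\tilde{x}_0), \mathcal{V}(\tilde{y})) = \mathcal{V}(\tilde{x}_0)\cdot\mathcal{V}(\tilde{y})$ by inserting both $u$ and $v$:
\begin{equation*}
\mathcal{V}(\tilde{x}_0)\cdot\mathcal{V}(\tilde{y}) = \mathcal{V}(\tilde{x}_0)\cdot[\mathcal{V}(\tilde{y})-u] + [\mathcal{V}(\tilde{x}_0)-v]\cdot u + v\cdot u.
\end{equation*}
By Cauchy--Schwarz and the unit-vector identity above, on the two $p_G$-events the first term is at most $\sqrt{2-2k_t}$ and the second is at most $\sqrt{2-2k_t}$, so $\mathcal{S}(\mathcal{V}(\tilde{x}_0),\mathcal{V}(\tilde{y})) \le 2\sqrt{2-2k_t} + v\cdot u$ with probability at least $p_G^2$ (or at least $2p_G-1$ via a union bound).

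The main obstacle, and the only step not already present in the earlier proofs, is controlling $v\cdot u$. Since $u$ is uniform on $S^{N-1}$ and $v$ is a fixed unit vector, $v\cdot u$ has the law of one coordinate of a uniform point on $S^{N-1}$, with density proportional to $(1-t^2)^{(N-3)/2}$, so $\mathbb{P}(|v\cdot u|\ge\delta) \le 2(1-\delta^2)^{(N-2)/2} \le 2e^{-(N-2)\delta^2/2}$; for $N$ of moderate size (here $N=192$) and a fixed small $\delta>0$ this is negligible, giving $v\cdot u < \delta$ with high probability. Combining the three events yields $\mathcal{S}(\mathcal{V}(\tilde{x}_0),\mathcal{V}(\tilde{y})) \le 2\sqrt{2-2k_t}+\delta$ with probability close to $1$, and it then suffices to take any $k_t > 1 - (t_R-\delta)^2/8$, which lies in $(-1,1)$ as soon as $t_R>\delta$; since $\delta$ may be taken arbitrarily small this covers every $t_R\in(0,1)$. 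An equivalent but cleaner route replaces the Cauchy--Schwarz steps by the triangle inequality for geodesic distance on the sphere: $\mathcal{V}(\tilde{x}_0)$ and $\mathcal{V}(\tilde{y})$ lie within angle $\arccos k_t$ of $v$ and of $u$ respectively, while $v$ and $u$ are within $\arcsin\delta$ of orthogonal, so the angular separation of $\mathcal{V}(\tilde{x}_0)$ and $\mathcal{V}(\tilde{y})$ exceeds $\pi/2 - 2\arccos k_t - \arcsin\delta$, and hence $\mathcal{S}(\mathcal{V}(\tilde{x}_0),\mathcal{V}(\tilde{y})) \le \sin(2\arccos k_t + \arcsin\delta)$. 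I expect the delicate points to be only stating the spherical concentration bound with clean constants and making explicit the mild uniformity assumption on $\mathcal{G}$; the rest is the same bookkeeping as in Theorems~1--3.
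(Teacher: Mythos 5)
Your proof is correct and follows essentially the same route as the paper's: insert the keys $v$ and $u$ into the inner product, use the targeting guarantee (\ref{equ:target}) twice to bound $\|\mathcal{V}(\tilde{x}_0)-v\|$ and $\|\mathcal{V}(\tilde{y})-u\|$ by $\sqrt{2-2k_t}$, control $u\cdot v$ by concentration of the uniform measure on the unit sphere in $\mathbb{R}^{N}$, and push $k_t$ toward $1$. The only differences are cosmetic: the paper uses a four-term decomposition whose bilinear term $[\mathcal{V}(\tilde{x}_0)-v]\cdot[\mathcal{V}(\tilde{y})-u]$ gives the slightly looser bound $2-2k_t+2\sqrt{2-2k_t}+u\cdot v$ versus your $2\sqrt{2-2k_t}+v\cdot u$, and it states the spherical-concentration step as the exact hyperspherical-cap area $\tfrac{1}{2}+\tfrac{1}{2}I_{\alpha^2 t_R^2}\big(\tfrac{1}{2},\tfrac{N-1}{2}\big)$ rather than your exponential tail, while the uniformity of (\ref{equ:target}) over the random key $u$ that you flag explicitly is used implicitly in the paper's ``Similarly'' step.
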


\begin{proof}
\begin{equation}\label{equ:target_unlink}
\begin{aligned}
&~\mathcal{S}(\mathcal{V}(\tilde{x}_0), \mathcal{V}(y))\\
&=\big[\mathcal{V}(\tilde{x}_0)-v\big]\cdot\big[\mathcal{V}(y)-u\big]+u\cdot \mathcal{V}(\tilde{x}_0)+v\cdot \mathcal{V}(y) - u\cdot v\\
&=\big[\mathcal{V}(\tilde{x}_0)-v\big]\cdot\big[\mathcal{V}(y)-u\big]+u\cdot \big[\mathcal{V}(\tilde{x}_0)-v\big]\\
&~~~~~~~~~~~~+v\cdot \big[\mathcal{V}(y)-u\big] + u\cdot v\\
&\leq\|\mathcal{V}(\tilde{x}_0)-v\|\cdot\|\mathcal{V}(y)-u\|+\|u\|\cdot\|\mathcal{V}(\tilde{x}_0)-v\|\\
&~~~~~~~~~~~~+\|v\|\cdot\|\mathcal{V}(y)-u\|+ u\cdot v,
\end{aligned}
\end{equation}
\noindent of which, 
\begin{equation*}
\begin{aligned}
&~\mathbb{P}\Big(\|\mathcal{V}(\tilde{x}_0)-v\|=\sqrt{\|\mathcal{V}(\tilde{x}_0)\|^2+\|v\|^2-2\mathcal{V}(\tilde{x}_0)\cdot v}\\
&\leq\sqrt{2-2\mathcal{V}(\tilde{x}_0)\cdot v}<\sqrt{2-2k_t}\Big)=p_G.
\end{aligned}
\end{equation*}
\noindent Similarly, 
\begin{equation*}
\begin{aligned}
\mathbb{P}\Big(\|\mathcal{V}(y)-u\|< \sqrt{2-2k_t}\Big)=p_G.
\end{aligned}
\end{equation*}

\begin{lemma}
\cite{li2011concise}~The surface area of an $N$-sphere in $N$-dimensional Euclidean space, of radius $R$, can be given in the closed form:
\begin{equation}
\begin{aligned}
A_{N}(R)=\frac{2\pi^{\frac{N}{2}}}{\Gamma(\frac{N}{2})}R^{N-1},
\end{aligned}
\end{equation}
\noindent where $\Gamma$ is the gamma function. An $N$-sphere can be cut into two caps, by a hyperplane. We denote the colatitude angle, i.e., the angle between a vector of the sphere and its positive $N^{th}$-axis, as $\varphi$. We only consider the smaller cap in the following, i.e., $0\leq\varphi\leq\frac{\pi}{2}$.
\end{lemma}
 
\begin{lemma}
\cite{li2011concise}~The surface area of a hyperspherical cap described above can be given in the closed form:
\begin{equation}
\begin{aligned}
A_{N}^{cap}(R, \varphi)=\frac{1}{2}A_{N}(R)I_{sin^2_{\varphi}}\Big(\frac{N-1}{2}, \frac{1}{2}\Big),
\end{aligned}
\end{equation}
\noindent where $I_{sin^2_{\varphi}}(\cdot, \cdot)$ is the regularized incomplete beta function.
\end{lemma}
Thus,
\begin{equation}
\begin{aligned}
&~\mathbb{P}(u\cdot v\geq \mathrm{cos}\varphi) = \frac{A_{N}^{cap}(R, \varphi)}{A_{N}(R)}=\frac{1}{2}I_{sin^2_{\varphi}}\Big(\frac{N-1}{2}, \frac{1}{2}\Big)
\end{aligned}
\end{equation}
\noindent Let $\varphi=\mathrm{arccos}(\alpha \cdot t_R), 0<\alpha<1$,
\begin{equation}
\begin{aligned}
&~\mathbb{P}(u\cdot v< \alpha t_R) = 1-\frac{1}{2}I_{sin^2_{\varphi}}\Big(\frac{N-1}{2}, \frac{1}{2}\Big)\\
&=\frac{1}{2}+\frac{1}{2}I_{1-sin^2_{\varphi}}\Big(\frac{1}{2}, \frac{N-1}{2}\Big)=\frac{1}{2}+\frac{1}{2}I_{cos^2_{\varphi}}\Big(\frac{1}{2}, \frac{N-1}{2}\Big)\\
&=\frac{1}{2}+\frac{1}{2}I_{\alpha^2 t_R^2}\Big(\frac{1}{2}, \frac{N-1}{2}\Big)
\end{aligned}
\end{equation}
Thus,
\begin{equation}
\begin{aligned}
&~\mathbb{P}\Big(\mathcal{S}(\mathcal{V}(\tilde{x}_0), \mathcal{V}(y))<2-2k_t+2\sqrt{2-2k_t}+\alpha\cdot t_R\Big)\\
&\geq p_G^2\cdot \Bigg(\frac{1}{2}+\frac{1}{2}I_{\alpha^2 t_R^2}\Big(\frac{1}{2}, \frac{N-1}{2}\Big)\Bigg)\\
&\rightarrow 1 (p_G\rightarrow1, N\rightarrow\infty)\\
&\lim\limits_{k_t\to1} \big(2-2k_t+2\sqrt{2-2k_t}+\alpha\cdot t_R\big) = \alpha\cdot t_R< t_R
\end{aligned}
\end{equation}

Therefore, $\forall t_R\in(-1, 1)$, $\exists k_t\in (-1, 1)$ such that $\mathcal{S}(\mathcal{V}(\tilde{x}_0), \mathcal{V}(y))<t_R$ with a probability close to 1.
\end{proof}

\textbf{A3 has the key.} Next, we prove that if A3 has the voiceprint key $v$, and converts the clean samples of potential speakers with the exact $v$, then the anonymized audio $\tilde{x}_0$ will be matched with any enrollment samples with high probability.

\begin{theorem}
$\forall x_0 \in X$, $t_A \in (-1, 1)$, $\exists k_t\in(-1, 1)$, $\forall y$, $\mathcal{S}(\mathcal{V}(\tilde{x}_0), \mathcal{V}(\tilde{y}))>t_A$ with a high probability.
\end{theorem}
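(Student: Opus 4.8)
The plan is to follow the template of the proofs of Theorems~1--3: rewrite the cosine similarity as an inner product, decompose both voiceprints about the common target key $v$, and then use the targeting guarantee (\ref{equ:target}) to show that both $\mathcal{V}(\tilde{x}_0)$ and $\mathcal{V}(\tilde{y})$ lie in a small spherical cap around $v$, so their inner product is forced above $t_A$. Throughout I use the normalization $\|\mathcal{V}(\cdot)\|=1$ and $\|v\|=1$ (normalize the key if needed), and I set $\tilde{y}=\mathcal{G}(y,v)$, i.e., $y$ is anonymized with the \emph{same} key $v$ used for $x_0$.

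First I would write, with $a=\mathcal{V}(\tilde{x}_0)$ and $b=\mathcal{V}(\tilde{y})$,
\begin{equation*}
\begin{aligned}
\mathcal{S}(a,b) &= a\cdot b = (a-v)\cdot(b-v) + v\cdot(b-v) + v\cdot(a-v) + \|v\|^2\\
&\geq 1 - \|a-v\|\cdot\|b-v\| - \|a-v\| - \|b-v\|,
\end{aligned}
\end{equation*}
using Cauchy--Schwarz and $\|v\|^2=1$. Next, from (\ref{equ:target}) we have $a\cdot v>k_t$ with probability $p_G$, hence $\|a-v\|^2 = 2 - 2\,a\cdot v < 2-2k_t$ with probability $p_G$; since (\ref{equ:target}) is quantified over all input audios, the same bound holds for $\|b-v\|$. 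Combining the two events — treating the two anonymizations as independent (they come from distinct source audios), or more conservatively via a union bound — both hold simultaneously with probability at least $p_G^2$, and on that event
\begin{equation*}
\mathcal{S}(\mathcal{V}(\tilde{x}_0),\mathcal{V}(\tilde{y})) \geq 2k_t - 1 - 2\sqrt{2-2k_t}.
\end{equation*}
Finally, since the right-hand side tends to $1$ as $k_t\to 1$, for any $t_A\in(-1,1)$ there is a $k_t\in(-1,1)$ with $2k_t-1-2\sqrt{2-2k_t}>t_A$; this choice of $k_t$ depends only on $t_A$ and not on $x_0$ or $y$, which settles the quantifier order, and it yields $\mathcal{S}(\mathcal{V}(\tilde{x}_0),\mathcal{V}(\tilde{y}))>t_A$ with probability at least $p_G^2$, close to $1$ when the anonymizer meets (\ref{equ:target}).

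The argument is essentially bookkeeping and there is no deep obstacle. The two points that need a line of care are (i) uniformity in $y$ — immediate because (\ref{equ:target}) holds for every input, so $\mathcal{G}(\cdot,v)$ maps everything into the same cap around $v$ — and (ii) the probability combination, where I would either invoke independence of the two anonymized samples or fall back on the union bound $2p_G-1$; either way the guarantee is close to $1$. The conceptual content is simply that anonymizing toward a shared key collapses all outputs near $v$, so any two of them are mutually close and the ASV links them.
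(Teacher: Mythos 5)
Your proof is correct, but it takes a genuinely different route from the paper's. The paper's own proof asserts the identity $\mathcal{S}(\mathcal{V}(\tilde{x}_0), \mathcal{V}(\tilde{y}))=\big[\mathcal{V}(\tilde{x}_0)\cdot v\big]\big[\mathcal{V}(\tilde{y})\cdot v\big]$, concludes $\mathcal{S}>k_t^2$ with probability at least $p_G^2$ directly from the two events $\mathcal{V}(\tilde{x}_0)\cdot v>k_t$ and $\mathcal{V}(\tilde{y})\cdot v>k_t$, and then picks $k_t=\sqrt{t_A+\alpha(1-t_A)}$ so that $k_t^2>t_A$. That ``identity'' is not actually an identity: writing $a=(a\cdot v)v+a_\perp$ and $b=(b\cdot v)v+b_\perp$ gives $a\cdot b=(a\cdot v)(b\cdot v)+a_\perp\cdot b_\perp$, and the cross term can be negative, so the honest version of the paper's bound is $a\cdot b\geq k_t^2-(1-k_t^2)=2k_t^2-1$, which still tends to $1$ as $k_t\to 1$ and hence still proves the theorem after adjusting the choice of $k_t$. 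Your decomposition about the shared key $v$ plus Cauchy--Schwarz --- the same template the paper itself uses for the no-key case --- sidesteps this issue entirely and yields the rigorous lower bound $2k_t-1-2\sqrt{2-2k_t}$, which likewise tends to $1$ as $k_t\to 1$. Your bound is numerically weaker than the corrected multiplicative one (it forces $k_t$ closer to $1$ for the same $t_A$), but it is airtight, explicitly uniform in $y$, and more careful about combining the two probabilistic events ($p_G^2$ under independence or $2p_G-1$ by a union bound, versus the paper's unqualified $p_G^2$). Both arguments deliver the same conclusion; yours is the more defensible derivation.
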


\begin{proof}
From (\ref{equ:target}),
\begin{equation*}
\begin{aligned}
    &\mathbb{P}\Big(\mathcal{V}(\tilde{x}_0)\cdot v > k_t\Big)=p_G,\\
    &\mathbb{P}\Big(\mathcal{V}(\tilde{y})\cdot v> k_t\Big)=p_G.
\end{aligned}
\end{equation*}
Thus, $\forall y$,
\begin{equation*}
\begin{aligned}
&~\mathbb{P}\Big(\mathcal{S}(\mathcal{V}(\tilde{x}_0), \mathcal{V}(\tilde{y}))=\big[\mathcal{V}(\tilde{x}_0)\cdot v\big]\big[\mathcal{V}(\tilde{x}_1)\cdot v\big]> k_t^2\Big)\geq p_G^2.
\end{aligned}
\end{equation*}
Therefore, $\forall t_A\in [-1, 1]$, let $k_t=\sqrt{t_A+\alpha(1-t_A)}, 0<\alpha<1$, such that, 
\begin{equation*}
\begin{aligned}
&~\mathcal{S}(\mathcal{V}(\tilde{x}_0), \mathcal{V}(\tilde{y}))> k_t^2 = t_A+\alpha(1-t_A) > t_A.
\end{aligned}
\end{equation*}

\end{proof}

\onecolumn
\clearpage
\section{Baseline}\label{appdix:baseline}
\begin{itemize}
    \item {NSF}. NSF~\cite{fang2019speaker} is a voice synthesis-based anonymization method that separates the speech content and the voiceprint, and then replaces the voiceprint with a pseudo one to re-synthesize an anonymized audio.
    
    \item  {HFGAN}. HFGAN~\cite{DBLP:journals/corr/abs-2202-13097} is a voice synthesis-based anonymization method that uses a HuBERT-based content encoder, an ECAPA-TDNN speaker encoder, and a HiFi GAN to re-synthesize the anonymized audio. 
    
    \item {McAdams}. McAdams~\cite{DBLP:conf/interspeech/0001TTNE21} is a signal processing-based anonymization method that utilizes the McAdams coefficient to shift the formant positions of the audio.
    
    \item {VoiceMask}. VoiceMask~\cite{qian2018hidebehind, qian2019speech} is a voice conversion-based anonymization method that utilizes a bilinear warping function on the frequency domain to convert the original audio into another audio.
\end{itemize}
NSF and HFGAN are implemented following the VoicePrivacy 2022 Challenge~\cite{vpc2022}. A pool of 500 voiceprints from the \emph{LibriTTS-train-other-500} dataset are prepared for the synthesis. Among them, the 200 voiceprints that are farthest from the original speaker are chosen with the probabilistic linear discriminant analysis (PLDA). Finally, 100 voiceprints are randomly sampled from these 200 voiceprints and averaged as the pseudo voiceprint. McAdams is also implemented following the VoicePrivacy 2022 Challenge with a fixed McAdams coefficient of $0.8$. 
VoiceMask is implemented using the Github repository\footnote{\url{https://github.com/yuunin/time-invariant-anonymization}}. The warping factor in the bilinear function determines the strength of anonymization. We set the warping factor as $0.1$ following existing works \cite{qian2018hidebehind, qian2019speech}, which yields the best anonymization effect. 

\begin{table}[h]\centering
\begin{threeparttable}
\setlength{\abovecaptionskip}{0pt}%
\setlength{\belowcaptionskip}{0pt}%

\caption{ASRs used for evaluation.}

\footnotesize
\setlength{\tabcolsep}{5mm}{
\begin{tabular}{@{}lllclr@{}}
\toprule
\textbf{Model}       &\textbf{Alias}			& \textbf{LM$^*$}   & \multicolumn{1}{l}{\textbf{Language}}                                            & \textbf{Dataset}									& \textbf{WER(\%)}		\\ \midrule
DeepSpeech2          & \textbf{DS} 				& no            & \multirow{5}{*}{English}                                                         & LibriSpeech										& 10.46					\\
wav2vec2             & \textbf{WV}				& no            &                                                                                  & CV-en-train$^\#$										& 9.66					\\
crdnn                & \textbf{CD1}				& transformer   &                                                                                  	   & LibriSpeech										& 3.61					\\
crdnn                & \textbf{CD2} 			& rnn           &                                                                                  & LibriSpeech										& 4.23					\\
transformer          & \textbf{TF}				& transformer  &                                                                                  	   & LibriSpeech										& 3.01					\\ \midrule
\begin{tabular}[c]{@{}l@{}}wav2vec2-transformer\end{tabular} & \textbf{WV} 				& no            & \multirow{2}{*}{\begin{tabular}[c]{@{}c@{}}Mandarin\\Chinese\end{tabular}} & AISHELL      & 5.58$^\ddagger$      			\\
transformer          & \textbf{TF} 				& no            &                                                                                  & AISHELL             								& 6.77$^\ddagger$					\\ \midrule
wav2vec2             & \textbf{WV} 				& no            & \multirow{2}{*}{French}                                                          & CV-fr-train$^\#$ 									& 13.18					\\
crdnn                & \textbf{CD} 				& no            &                                                                                  & CV-fr-train  									& 19.47					\\ \midrule
wav2vec2             & \textbf{WV} 				& no            & \multirow{2}{*}{Italian}                                                         & CV-it-train$^\#$  									& 12.74					\\
crdnn                & \textbf{CD} 				& no            &                                                                                  & CV-it-train									& 17.48					\\ \bottomrule
\end{tabular}}

\begin{tablenotes}[flushleft]
\item[] \vspace{-1pt}\hspace{-2pt}\footnotesize (i) $^*$: LM, language model. (ii) $^\#$: CV, CommonVoice. CV-en-train, CV-fr-train and CV-it-train are the English, French and Italian training sets of CV respectively. (iii)$^\ddagger$: The performance of Chinese models are evaluated using CER.
\end{tablenotes}

\label{tab:ASRs}
\end{threeparttable}
\end{table}
\begin{table*}\centering
\begin{threeparttable}[tt]
\setlength{\abovecaptionskip}{5pt}%
\setlength{\belowcaptionskip}{0pt}%

\caption{Comparison with existing works.}

\scriptsize
\setlength{\tabcolsep}{2mm}{
\begin{tabular}{cl|c|crr|crr|crr|crr|crr}
\hline
\multicolumn{2}{c|}{\multirow{2}{*}{\textbf{Model}}} & \textbf{B0 (\%)}  & \multicolumn{3}{c|}{\textbf{NSF (\%)}}                                                    & \multicolumn{3}{c|}{\textbf{HFGAN (\%)}}                                                    & \multicolumn{3}{c|}{\textbf{McAdams (\%)}}                                                    & \multicolumn{3}{c|}{\textbf{VoiceMask (\%)}}                                                     & \multicolumn{3}{c}{\textbf{\sys (\%)}}                                                      \\
\multicolumn{2}{c|}{}                                & \textbf{EER} & \textbf{MMR}               & \multicolumn{1}{c}{\textbf{WMR}} & \multicolumn{1}{c|}{\textbf{EER}} & \textbf{MMR}              & \multicolumn{1}{c}{\textbf{WMR}} & \multicolumn{1}{c|}{\textbf{EER}} & \textbf{MMR}              & \multicolumn{1}{c}{\textbf{WMR}} & \multicolumn{1}{c|}{\textbf{EER}} & \textbf{MMR}               & \multicolumn{1}{c}{\textbf{WMR}} & \multicolumn{1}{c|}{\textbf{EER}} & \textbf{MMR}               & \multicolumn{1}{c}{\textbf{WMR}} & \multicolumn{1}{c}{\textbf{EER}} \\ \hline
\multirow{6}{*}{\textbf{ASV}}     & \textbf{EP}     & 0.70          & 92.06  & 1.07                             & 27.31                             & 92.44 & 1.07                             & 29.52                             & 73.32 & 1.07                             & 25.67                             & 84.66  & 1.07                             & 18.80                             & 100.0  & 1.07                             & 64.58                            \\
                                   & \textbf{XV}     & 6.53         & 74.31  & 6.91                             & 30.21                             & 77.90 & 6.87                             & 31.97                             & 78.32 & 6.87                             & 39.89                             & 93.47  & 6.87                             & 35.04                             & 98.40  & 6.87                             & 39.77                            \\
                                   & \textbf{GMM}    & 11.39        & 67.60  & 11.38                            & 38.18                             & 67.44 & 11.38                            & 38.68                             & 65.12 & 11.38                            & 33.87                             & 87.48  & 11.38                            & 41.02                             & 87.89  & 11.38                            & 41.17                            \\
                                   & \textbf{IV}     & 6.03         & 81.45  & 6.03                             & 32.06                             & 83.43 & 6.03                             & 34.00                             & 72.19 & 6.03                             & 30.10                             & 81.11  & 6.03                             & 37.05                             & 86.81  & 6.03                             & 37.75                            \\
                                   & \textbf{IF}     & 9.44         & 78.28  & 6.79                             & 33.59                             & 93.09 & 6.79                             & 43.78                             & 96.37 & 6.79                             & 76.49                             & 100.0  & 6.79                             & 86.49                             & 97.02  & 6.79                             & 47.22                            \\\hline
                                   & \textbf{AVG}    & 6.82         & 78.74  & 6.44                             & 32.27                             & 82.86 & 6.43                             & 35.59                             & 77.06 & 6.43                             & 41.20                             & 89.34  & 6.43                             & 43.68                             & 94.02  & 6.43                             & 46.10                            \\ 
                                   & \textbf{WCS}    & -            & 67.60  & 1.07                             & 27.31                             & 67.44 & 1.07                             & 29.52                             & 65.12 & 1.07                             & 25.67                             & 81.11  & 1.07                             & 18.80                             & 86.81  & 1.07                             & 37.75                            \\ \hline\hline

\multicolumn{2}{c|}{\multirow{2}{*}{\textbf{Model}}} & \textbf{B0 (\%)}  & \multicolumn{3}{c|}{\textbf{NSF (\%)}}                                                    & \multicolumn{3}{c|}{\textbf{HFGAN (\%)}}                                                    & \multicolumn{3}{c|}{\textbf{McAdams (\%)}}                                                    & \multicolumn{3}{c|}{\textbf{VoiceMask (\%)}}                                                     & \multicolumn{3}{c}{\textbf{\sys (\%)}}                                                      \\
\multicolumn{2}{c|}{}                                & \textbf{WER}                      & \multicolumn{3}{c|}{\textbf{WER}}                                                                          & \multicolumn{3}{c|}{\textbf{WER}}                                                                          & \multicolumn{3}{c|}{\textbf{WER}}                                                                          & \multicolumn{3}{c|}{\textbf{WER}}                                                                          & \multicolumn{3}{c}{\textbf{WER}}                                                                         \\ \hline
\multirow{6}{*}{\textbf{ASR}}     & \textbf{DS}     & 10.46        & \multicolumn{3}{c|}{12.08}                                                                       & \multicolumn{3}{c|}{16.72}                                                                       & \multicolumn{3}{c|}{52.98}                                                                       & \multicolumn{3}{c|}{46.10}                                                                        & \multicolumn{3}{c}{13.92}                                                                        \\
                                   & \textbf{WV}     & 9.66         & \multicolumn{3}{c|}{12.35}                                                                       & \multicolumn{3}{c|}{11.24}                                                                       & \multicolumn{3}{c|}{21.26}                                                                       & \multicolumn{3}{c|}{17.17}                                                                        & \multicolumn{3}{c}{11.74}                                                                        \\
                                   & \textbf{CD1}    & 3.61         & \multicolumn{3}{c|}{3.74}                                                                        & \multicolumn{3}{c|}{5.00}                                                                        & \multicolumn{3}{c|}{26.03}                                                                       & \multicolumn{3}{c|}{13.99}                                                                        & \multicolumn{3}{c}{4.59}                                                                         \\
                                   & \textbf{CD2}    & 4.23         & \multicolumn{3}{c|}{4.60}                                                                        & \multicolumn{3}{c|}{5.58}                                                                        & \multicolumn{3}{c|}{26.57}                                                                       & \multicolumn{3}{c|}{14.86}                                                                        & \multicolumn{3}{c}{4.95}                                                                         \\
                                   & \textbf{TF}     & 3.01         & \multicolumn{3}{c|}{3.16}                                                                        & \multicolumn{3}{c|}{3.69}                                                                        & \multicolumn{3}{c|}{5.86}                                                                        & \multicolumn{3}{c|}{8.15}                                                                         & \multicolumn{3}{c}{3.06}                                                                         \\
                                   & \textbf{AVG}    & 6.19         & \multicolumn{3}{c|}{7.19}                                                                        & \multicolumn{3}{c|}{8.45}                                                                        & \multicolumn{3}{c|}{26.54}                                                                       & \multicolumn{3}{c|}{20.05}                                                                        & \multicolumn{3}{c}{7.65}                                                                         \\ \hline
\end{tabular}}

\begin{tablenotes}[flushleft]
\item[] \vspace{-1pt}\hspace{-2pt}\footnotesize \textbf{AVG}: the average-case scenario, \textbf{WCS}: the worst-case scenario. \textbf{EP}: ECAPA-TDNN, \textbf{XV}: X-vector, \textbf{GMM}: GMM-UBM, \textbf{IV}: ivector-PLDA, \textbf{IF}: iFlytek.\\ \textbf{DS}: DeepSpeech2, \textbf{WV}: wav2vec2, \textbf{CD1}: crdnn-transformer, \textbf{CD2}: crdnn-rnn, \textbf{TF}: transformer.
\end{tablenotes}

\label{tab:existingwork}
\end{threeparttable}
\end{table*}

\begin{table*}\centering
\begin{threeparttable}[tt]
\setlength{\abovecaptionskip}{5pt}%
\setlength{\belowcaptionskip}{0pt}%

\caption{Cross-language performance with Chinese language.}

\scriptsize
\setlength{\tabcolsep}{2mm}{
\begin{tabular}{cl|c|rrr|rrr|rrr|rrr|rrr}
\hline
\multicolumn{2}{c|}{\multirow{2}{*}{\textbf{Model}}} & \textbf{B0 (\%)}       & \multicolumn{3}{c|}{\textbf{$\epsilon=0.02$ (\%)}}                                                      & \multicolumn{3}{c|}{\textbf{$\epsilon=0.04$ (\%)}}                                                      & \multicolumn{3}{c|}{\textbf{$\epsilon=0.06$ (\%)}}                                                      & \multicolumn{3}{c|}{\textbf{$\epsilon=0.08$ (\%)}}                                                      & \multicolumn{3}{c}{\textbf{$\epsilon=0.1$ (\%)}}                                                       \\
\multicolumn{2}{c|}{}                                & \textbf{EER}           & \multicolumn{1}{c}{\textbf{MMR}} & \multicolumn{1}{c}{\textbf{WMR}} & \multicolumn{1}{c|}{\textbf{EER}} & \multicolumn{1}{c}{\textbf{MMR}} & \multicolumn{1}{c}{\textbf{WMR}} & \multicolumn{1}{c|}{\textbf{EER}} & \multicolumn{1}{c}{\textbf{MMR}} & \multicolumn{1}{c}{\textbf{WMR}} & \multicolumn{1}{c|}{\textbf{EER}} & \multicolumn{1}{c}{\textbf{MMR}} & \multicolumn{1}{c}{\textbf{WMR}} & \multicolumn{1}{c|}{\textbf{EER}} & \multicolumn{1}{c}{\textbf{MMR}} & \multicolumn{1}{c}{\textbf{WMR}} & \multicolumn{1}{c}{\textbf{EER}} \\ \hline
\multirow{6}{*}{\textbf{ASV}}     & \textbf{EP}      & 1.07                   & 100                              & 1.30                             & 69.06                             & 100                              & 1.30                             & 70.33                             & 100                              & 1.30                             & 70.78                             & 100.0                            & 1.30                             & 71.06                             & 100                              & 1.30                             & 71.36                            \\
                                   & \textbf{XV}     & 2.45                   & 99.99                            & 2.33                             & 37.83                             & 100                              & 2.33                             & 38.91                             & 100                              & 2.33                             & 40.34                             & 100.0                            & 2.33                             & 41.54                             & 100                              & 2.33                             & 42.27                            \\
                                   & \textbf{GMM}    & 10.27                  & 89.05                            & 10.33                            & 37.31                             & 89.05                            & 10.33                            & 39.14                             & 90.87                            & 10.33                            & 40.25                             & 92.08                            & 10.33                            & 40.81                             & 93.09                            & 10.33                            & 41.49                            \\
                                   & \textbf{IV}     & 6.15                   & 94.20                            & 5.03                             & 36.95                             & 96.28                            & 5.03                             & 38.25                             & 97.44                            & 5.03                             & 39.14                             & 98.02                            & 5.03                             & 39.69                             & 98.47                            & 5.03                             & 40.20                            \\
                                   & \textbf{IF}     & 6.37                   & 98.13                            & 4.42                             & 56.68                             & 99.21                            & 4.50                             & 58.61                             & 99.40                            & 4.57                             & 59.81                             & 99.33                            & 4.53                             & 58.27                             & 99.37                            & 4.74                             & 61.87                            \\\hline
                                   & \textbf{AVG}    & 5.26                   & 96.27                            & 4.68                             & 47.57                             & 96.91                            & 4.70                             & 49.05                             & 97.54                            & 4.71                             & 50.06                             & 97.89                            & 4.70                             & 50.27                             & 98.19                            & 4.74                             & 51.44                            \\ 
                                   & \textbf{WCS}    & -                      & 89.05                            & 1.30                             & 36.95                             & 89.05                            & 1.30                             & 38.25                             & 90.87                            & 1.30                             & 39.14                             & 92.08                            & 1.30                             & 39.69                             & 93.09                            & 1.30                             & 40.20                            \\ \hline\hline

\multicolumn{2}{c|}{\multirow{2}{*}{\textbf{Model}}} & \textbf{B0 (\%)}                  & \multicolumn{3}{c|}{\textbf{$\epsilon=0.02$ (\%)}}                                                     & \multicolumn{3}{c|}{\textbf{$\epsilon=0.04$ (\%)}}                                                     & \multicolumn{3}{c|}{\textbf{$\epsilon=0.06$ (\%)}}                                                     & \multicolumn{3}{c|}{\textbf{$\epsilon=0.08$ (\%)}}                                                     & \multicolumn{3}{c}{\textbf{$\epsilon=0.1$ (\%)}}                        \\
\multicolumn{2}{c|}{}                                & \textbf{CER}                      & \multicolumn{3}{c|}{\textbf{CER}}                                                                          & \multicolumn{3}{c|}{\textbf{CER}}                                                                          & \multicolumn{3}{c|}{\textbf{CER}}                                                                          & \multicolumn{3}{c|}{\textbf{CER}}                                                                          & \multicolumn{3}{c}{\textbf{CER}}                                                                         \\ \hline
\multirow{3}{*}{\textbf{ASR}}     & \textbf{WV}     & 5.58                   & \multicolumn{3}{c|}{7.53}                                                                               & \multicolumn{3}{c|}{7.73}                                                                               & \multicolumn{3}{c|}{7.89}                                                                               & \multicolumn{3}{c|}{8.12}                                                                               & \multicolumn{3}{c}{8.33}                                                                               \\
                                  & \textbf{TF}     & 6.77                   & \multicolumn{3}{c|}{8.69}                                                                               & \multicolumn{3}{c|}{8.88}                                                                               & \multicolumn{3}{c|}{9.06}                                                                               & \multicolumn{3}{c|}{9.18}                                                                               & \multicolumn{3}{c}{9.39}                                                                               \\
                                  & \textbf{AVG}    & 6.18                   & \multicolumn{3}{c|}{8.11}                                                                               & \multicolumn{3}{c|}{8.31}                                                                               & \multicolumn{3}{c|}{8.48}                                                                              & \multicolumn{3}{c|}{8.65}                                                                              & \multicolumn{3}{c}{8.86}                                                                               \\ \hline
\end{tabular}}

\begin{tablenotes}[flushleft]
\item[] \vspace{-1pt}\hspace{-2pt}\footnotesize \textbf{AVG}: the average-case scenario, \textbf{WCS}: the worst-case scenario. \textbf{EP}: ECAPA-TDNN, \textbf{XV}: X-vector, \textbf{GMM}: GMM-UBM, \textbf{IV}: ivector-PLDA, \textbf{IF}: iFlytek.\\ \textbf{WV}: wav2vec2-transformer, \textbf{TF}: transformer.
\end{tablenotes}

\label{tab:chinese}
\end{threeparttable}
\end{table*}
\begin{table*}\centering
\begin{threeparttable}[tt]
\setlength{\abovecaptionskip}{5pt}%
\setlength{\belowcaptionskip}{0pt}%

\caption{Cross-language performance with French language (WER).}

\scriptsize
\setlength{\tabcolsep}{2mm}{
\begin{tabular}{cl|c|ccccccccccccccc}
\hline
\multicolumn{2}{c|}{\textbf{Model}}& \textbf{B0 (\%)}         & \multicolumn{3}{c|}{\textbf{$\epsilon=0.02$ (\%)}}              & \multicolumn{3}{c|}{\textbf{$\epsilon=0.04$ (\%)}}              & \multicolumn{3}{c|}{\textbf{$\epsilon=0.06$ (\%)}}              & \multicolumn{3}{c|}{\textbf{$\epsilon=0.08$ (\%)}}              & \multicolumn{3}{c}{\textbf{$\epsilon=0.1$ (\%)}} \\\hline
\multirow{3}{*}{\textbf{ASR}}     & \textbf{WV}     & 13.18                    & \multicolumn{3}{c|}{16.50}                                       & \multicolumn{3}{c|}{16.98}                                       & \multicolumn{3}{c|}{17.36}                                       & \multicolumn{3}{c|}{17.61}                                       & \multicolumn{3}{c}{17.95}                        \\
                                   & \textbf{CD}     & 19.47                    & \multicolumn{3}{c|}{25.27}                                       & \multicolumn{3}{c|}{26.30}                                       & \multicolumn{3}{c|}{26.71}                                       & \multicolumn{3}{c|}{27.32}                                       & \multicolumn{3}{c}{27.88}                        \\
                                   & \textbf{AVG}    & 16.33                    & \multicolumn{3}{c|}{20.89}                                       & \multicolumn{3}{c|}{21.64}                                       & \multicolumn{3}{c|}{22.04}                                       & \multicolumn{3}{c|}{22.47}                                       & \multicolumn{3}{c}{22.92}                        \\ \hline
\end{tabular}}

\begin{tablenotes}[flushleft]
\item[] \vspace{-1pt}\hspace{-2pt}\footnotesize \textbf{AVG}: the average-case scenario. \textbf{WV}: wav2vec2, \textbf{TF}: transformer.
\end{tablenotes}

\label{tab:fr}
\end{threeparttable}
\end{table*}
\begin{table*}\centering
\begin{threeparttable}[tt]
\setlength{\abovecaptionskip}{5pt}%
\setlength{\belowcaptionskip}{0pt}%

\caption{Cross-language performance with Italian input (WER).}

\scriptsize
\setlength{\tabcolsep}{2mm}{
\begin{tabular}{cl|c|ccccccccccccccc}
\hline
\multicolumn{2}{c|}{\textbf{Model}}& \textbf{B0 (\%)}         & \multicolumn{3}{c|}{\textbf{$\epsilon=0.02$ (\%)}}              & \multicolumn{3}{c|}{\textbf{$\epsilon=0.04$ (\%)}}              & \multicolumn{3}{c|}{\textbf{$\epsilon=0.06$ (\%)}}              & \multicolumn{3}{c|}{\textbf{$\epsilon=0.08$ (\%)}}              & \multicolumn{3}{c}{\textbf{$\epsilon=0.1$ (\%)}} \\\hline
\multirow{3}{*}{\textbf{ASR}}     & \textbf{WV}     & 12.74                                 & \multicolumn{3}{c|}{16.06}                                       & \multicolumn{3}{c|}{16.42}                                       & \multicolumn{3}{c|}{16.81}                                       & \multicolumn{3}{c|}{17.12}                                       & \multicolumn{3}{c}{17.27}                        \\
                                   & \textbf{CD}     & 17.48                                 & \multicolumn{3}{c|}{24.38}                                       & \multicolumn{3}{c|}{25.15}                                       & \multicolumn{3}{c|}{25.57}                                       & \multicolumn{3}{c|}{25.75}                                       & \multicolumn{3}{c}{26.05}                        \\
                                   & \textbf{AVG}    & 15.11                                 & \multicolumn{3}{c|}{20.22}                                       & \multicolumn{3}{c|}{20.79}                                       & \multicolumn{3}{c|}{21.19}                                       & \multicolumn{3}{c|}{21.44}                                       & \multicolumn{3}{c}{21.66}                        \\ \hline
\end{tabular}}

\begin{tablenotes}[flushleft]
\item[] \vspace{-1pt}\hspace{-2pt}\footnotesize \textbf{AVG}: the average-case scenario. \textbf{WV}: wav2vec2, \textbf{TF}: transformer.
\end{tablenotes}

\label{tab:it}
\end{threeparttable}
\end{table*}

\begin{table*}\centering
\begin{threeparttable}[tt]
\setlength{\abovecaptionskip}{5pt}%
\setlength{\belowcaptionskip}{0pt}%

\caption{Comparison with existing works under A2 (mean filter).}

\scriptsize
\setlength{\tabcolsep}{2mm}{
\begin{tabular}{cl|c|crr|crr|crr|crr|crr}
\hline
\multicolumn{2}{c|}{\multirow{2}{*}{\textbf{Model}}} & \textbf{Raw (\%)}  & \multicolumn{3}{c|}{\textbf{NSF (\%)}}                                                    & \multicolumn{3}{c|}{\textbf{HFGAN (\%)}}                                                    & \multicolumn{3}{c|}{\textbf{McAdams (\%)}}                                                    & \multicolumn{3}{c|}{\textbf{VoiceMask (\%)}}                                                     & \multicolumn{3}{c}{\textbf{\sys (\%)}}                                                      \\
\multicolumn{2}{c|}{}                                & \textbf{EER} & \textbf{MMR}               & \multicolumn{1}{c}{\textbf{WMR}} & \multicolumn{1}{c|}{\textbf{EER}} & \textbf{MMR}              & \multicolumn{1}{c}{\textbf{WMR}} & \multicolumn{1}{c|}{\textbf{EER}} & \textbf{MMR}              & \multicolumn{1}{c}{\textbf{WMR}} & \multicolumn{1}{c|}{\textbf{EER}} & \textbf{MMR}               & \multicolumn{1}{c}{\textbf{WMR}} & \multicolumn{1}{c|}{\textbf{EER}} & \textbf{MMR}               & \multicolumn{1}{c}{\textbf{WMR}} & \multicolumn{1}{c}{\textbf{EER}} \\ \hline
\multirow{6}{*}{\textbf{ASV}}     & \textbf{EP}      & 0.73         & 90.00  & 1.49                             & 27.56                             & 89.73 & 1.49                            & 27.37                             & 71.68 & 1.49                            & 24.47                             & 78.63  & 1.49                            & 18.32                             & 99.85  & 1.49                            & 64.96                            \\
                                   & \textbf{XV}     & 6.37         & 100.0  & 0.00                             & 34.01                             & 100.0 & 0.00                            & 37.18                             & 100.0 & 0.00                            & 43.74                             & 100.0  & 0.00                            & 40.50                             & 100.0  & 0.00                            & 42.52                            \\
                                   & \textbf{GMM}    & 12.73        & 77.48  & 9.65                             & 39.41                             & 74.38 & 9.65                            & 39.22                             & 68.69 & 9.65                            & 38.11                             & 91.69  & 9.65                            & 47.40                             & 79.55  & 13.00                           & 39.30                            \\
                                   & \textbf{IV}     & 4.90         & 84.05  & 3.95                             & 28.93                             & 80.79 & 3.95                            & 28.88                             & 90.74 & 3.95                            & 35.65                             & 90.74  & 3.95                            & 35.66                             & 92.02  & 4.70                            & 37.91                            \\
                                   & \textbf{IF}     & 10.44        & 76.41  & 6.79                             & 35.99                             & 95.99 & 6.64                            & 49.75                             & 98.70 & 6.22                            & 77.42                             & 100.0  & 6.49                            & 91.01                             & 97.67  & 6.49                            & 47.21                            \\\hline
                                   & \textbf{AVG}    & 7.03         & 85.59  & 4.38                             & 33.18                             & 88.18 & 4.35                            & 36.48                             & 85.96 & 4.26                            & 43.88                             & 92.21  & 4.32                            & 46.58                             & 93.82  & 5.14                            & 46.38                            \\ 
                                   & \textbf{WCS}    & -            & 76.41  & 0.00                             & 27.56                             & 74.38 & 0.00                            & 27.37                             & 68.69 & 0.00                            & 24.47                             & 78.63  & 0.00                            & 18.32                             & 79.55  & 0.00                            & 37.91                            \\ \hline
\end{tabular}}

\begin{tablenotes}[flushleft]
\item[] \vspace{-1pt}\hspace{-2pt}\footnotesize \textbf{AVG}: the average-case scenario, \textbf{WCS}: the worst-case scenario. \textbf{EP}: ECAPA-TDNN, \textbf{XV}: X-vector, \textbf{GMM}: GMM-UBM, \textbf{IV}: ivector-PLDA, \textbf{IF}: iFlytek.
\end{tablenotes}

\label{tab:meanfilter}
\end{threeparttable}
\end{table*}
\begin{table*}
\centering
\begin{threeparttable}[tt]
\setlength{\abovecaptionskip}{5pt}%
\setlength{\belowcaptionskip}{0pt}%

\caption{Comparison with existing works under A2 (median filter).}

\scriptsize
\setlength{\tabcolsep}{2mm}{
\begin{tabular}{cl|c|crr|crr|crr|crr|crr}
\hline
\multicolumn{2}{c|}{\multirow{2}{*}{\textbf{Model}}} & \textbf{Raw (\%)}  & \multicolumn{3}{c|}{\textbf{NSF (\%)}}                                                    & \multicolumn{3}{c|}{\textbf{HFGAN (\%)}}                                                    & \multicolumn{3}{c|}{\textbf{McAdams (\%)}}                                                    & \multicolumn{3}{c|}{\textbf{VoiceMask (\%)}}                                                     & \multicolumn{3}{c}{\textbf{\sys (\%)}}                                                      \\
\multicolumn{2}{c|}{}                                & \multicolumn{1}{c|}{\textbf{EER}}    & \textbf{MMR}               & \multicolumn{1}{c}{\textbf{WMR}} & \multicolumn{1}{c|}{\textbf{EER}} & \textbf{MMR}              & \multicolumn{1}{c}{\textbf{WMR}} & \multicolumn{1}{c|}{\textbf{EER}} & \textbf{MMR}              & \multicolumn{1}{c}{\textbf{WMR}} & \multicolumn{1}{c|}{\textbf{EER}} & \textbf{MMR}               & \multicolumn{1}{c}{\textbf{WMR}} & \multicolumn{1}{c|}{\textbf{EER}} & \textbf{MMR}               & \multicolumn{1}{c}{\textbf{WMR}} & \multicolumn{1}{c}{\textbf{EER}} \\ \hline
\multirow{6}{*}{\textbf{ASV}}      & \textbf{EP}     & 1.49            & 89.24  & 1.49                             & 24.70                             & 93.05 & 1.49                            & 27.48                             & 75.46 & 1.49                            & 32.14                             & 86.64  & 1.49                            & 25.04                             & 93.51  & 1.49                            & 33.93                            \\
                                   & \textbf{XV}     & 12.63           & 100.0  & 0.00                             & 30.42                             & 100.0 & 0.00                            & 31.87                             & 100.0 & 0.00                            & 44.47                             & 100.0  & 0.00                            & 41.76                             & 100.0  & 0.00                            & 41.34                            \\
                                   & \textbf{GMM}    & 18.27           & 87.15  & 9.65                             & 46.46                             & 89.59 & 9.65                            & 48.54                             & 84.01 & 9.65                            & 45.30                             & 92.40  & 9.65                            & 52.53                             & 87.36  & 13.00                           & 47.35                            \\
                                   & \textbf{IV}     & 9.25            & 87.52  & 3.95                             & 30.45                             & 87.56 & 3.95                            & 31.45                             & 82.27 & 3.95                            & 33.54                             & 92.02  & 3.95                            & 39.63                             & 87.77  & 4.70                            & 32.43                            \\
                                   & \textbf{IF}     & 21.89           & 100.0  & 6.79                             & 62.52                             & 100.0 & 6.60                            & 65.21                             & 99.62 & 6.53                            & 68.09                             & 100.0  & 6.37                            & 77.48                             & 100.0  & 6.60                            & 69.67                            \\\hline
                                   & \textbf{AVG}    & 12.71           & 92.78  & 4.38                             & 38.91                             & 94.04 & 4.34                            & 40.91                             & 88.27 & 4.32                            & 44.71                             & 94.21  & 4.29                            & 47.29                             & 93.73  & 5.16                            & 44.94                            \\ 
                                   & \textbf{WCS}    & -               & 87.15  & 0.00                             & 24.70                             & 87.56 & 0.00                            & 27.48                             & 75.46 & 0.00                            & 32.14                             & 86.64  & 0.00                            & 25.04                             & 87.36  & 0.00                            & 32.43                            \\ \hline
\end{tabular}}

\begin{tablenotes}[flushleft]
\item[] \vspace{-1pt}\hspace{-2pt}\footnotesize \textbf{AVG}: the average-case scenario, \textbf{WCS}: the worst-case scenario. \textbf{EP}: ECAPA-TDNN, \textbf{XV}: X-vector, \textbf{GMM}: GMM-UBM, \textbf{IV}: ivector-PLDA, \textbf{IF}: iFlytek.
\end{tablenotes}

\label{tab:medfilter}
\end{threeparttable}
\end{table*}
\begin{table*}\centering
\begin{threeparttable}[tt]
\setlength{\abovecaptionskip}{5pt}%
\setlength{\belowcaptionskip}{0pt}%

\caption{{Comparison with existing works under A2 (band-pass filter 50$\sim$7,500Hz).}}

\scriptsize
\setlength{\tabcolsep}{2mm}{
\begin{tabular}{cl|c|crr|crr|crr|crr|crr}
\hline
\multicolumn{2}{c|}{\multirow{2}{*}{\textbf{Model}}} & \textbf{Raw (\%)}  & \multicolumn{3}{c|}{\textbf{NSF (\%)}}                                                    & \multicolumn{3}{c|}{\textbf{HFGAN (\%)}}                                                    & \multicolumn{3}{c|}{\textbf{McAdams (\%)}}                                                    & \multicolumn{3}{c|}{\textbf{VoiceMask (\%)}}                                                     & \multicolumn{3}{c}{\textbf{\sys (\%)}}                                                      \\
\multicolumn{2}{c|}{}                                & \textbf{EER} & \textbf{MMR}               & \multicolumn{1}{c}{\textbf{WMR}} & \multicolumn{1}{c|}{\textbf{EER}} & \textbf{MMR}              & \multicolumn{1}{c}{\textbf{WMR}} & \multicolumn{1}{c|}{\textbf{EER}} & \textbf{MMR}              & \multicolumn{1}{c}{\textbf{WMR}} & \multicolumn{1}{c|}{\textbf{EER}} & \textbf{MMR}               & \multicolumn{1}{c}{\textbf{WMR}} & \multicolumn{1}{c|}{\textbf{EER}} & \textbf{MMR}               & \multicolumn{1}{c}{\textbf{WMR}} & \multicolumn{1}{c}{\textbf{EER}} \\ \hline
\multirow{6}{*}{\textbf{ASV}}      & \textbf{EP}     & 0.73         & 90.57     & 1.49                             & 29.58                             & 90.95   & 1.49                            & 30.27                             & 73.05   & 1.49                            & 24.85                             & 80.84    & 1.49                            & 20.38                             & 99.96    & 1.49                          & 64.92                              \\
                                   & \textbf{XV}     & 5.95         & 100.0     & 0.00                             & 34.73                             & 100.0   & 0.00                            & 36.22                             & 100.0   & 0.00                            & 42.79                             & 100.0    & 0.00                            & 38.36                             & 100.0    & 0.00                          & 40.27                              \\
                                   & \textbf{GMM}    & 10.90        & 76.86     & 9.65                             & 39.25                             & 73.64   & 9.65                            & 39.21                             & 67.81   & 9.65                            & 39.37                             & 93.06    & 9.65                            & 48.35                             & 78.97    & 9.65                          & 37.00                              \\
                                   & \textbf{IV}     & 5.09         & 84.30     & 3.95                             & 28.83                             & 81.45   & 3.95                            & 30.65                             & 73.47   & 3.95                            & 29.12                             & 88.31    & 3.95                            & 36.00                             & 90.70    & 3.95                          & 36.34                              \\
                                   & \textbf{IF}     & 9.16         & 80.65     & 6.57                             & 36.24                             & 94.08   & 6.45                            & 45.53                             & 97.44   & 5.88                            & 77.26                             & 100.0    & 6.41                            & 89.58                             & 98.21    & 6.68                          & 48.15                              \\\hline
                                   & \textbf{AVG}    & 6.37         & 86.48     & 4.33                             & 33.73                             & 88.03   & 4.31                            & 36.38                             & 82.35   & 4.19                            & 42.68                             & 92.44    & 4.30                            & 46.53                             & 93.57    & 4.35                          & 45.34                              \\ 
                                   & \textbf{WCS}    & -            & 76.86     & 0.00                             & 28.83                             & 73.64   & 0.00                            & 30.27                             & 67.81   & 0.00                            & 24.85                             & 80.84    & 0.00                            & 20.38                             & 78.97    & 0.00                          & 36.34                              \\ \hline
\end{tabular}}

\begin{tablenotes}[flushleft]
\item[] \vspace{-1pt}\hspace{-2pt}\footnotesize \textbf{AVG}: the average-case scenario, \textbf{WCS}: the worst-case scenario. \textbf{EP}: ECAPA-TDNN, \textbf{XV}: X-vector, \textbf{GMM}: GMM-UBM, \textbf{IV}: ivector-PLDA, \textbf{IF}: iFlytek.
\end{tablenotes}

\label{tab:bandpass}
\end{threeparttable}
\end{table*}
\begin{table*}\centering
\begin{threeparttable}[tt]
\setlength{\abovecaptionskip}{5pt}%
\setlength{\belowcaptionskip}{0pt}%

\caption{{Comparison with existing works under A2 (quantization 32 bits $\rightarrow$ 8 bits).}}

\scriptsize
\setlength{\tabcolsep}{2mm}{
\begin{tabular}{cl|c|crr|crr|crr|crr|crr}
\hline
\multicolumn{2}{c|}{\multirow{2}{*}{\textbf{Model}}} & \textbf{Raw (\%)}  & \multicolumn{3}{c|}{\textbf{NSF (\%)}}                                                    & \multicolumn{3}{c|}{\textbf{HFGAN (\%)}}                                                    & \multicolumn{3}{c|}{\textbf{McAdams (\%)}}                                                    & \multicolumn{3}{c|}{\textbf{VoiceMask (\%)}}                                                     & \multicolumn{3}{c}{\textbf{\sys (\%)}}                                                      \\
\multicolumn{2}{c|}{}                                & \textbf{EER} & \textbf{MMR}               & \multicolumn{1}{c}{\textbf{WMR}} & \multicolumn{1}{c|}{\textbf{EER}} & \textbf{MMR}              & \multicolumn{1}{c}{\textbf{WMR}} & \multicolumn{1}{c|}{\textbf{EER}} & \textbf{MMR}              & \multicolumn{1}{c}{\textbf{WMR}} & \multicolumn{1}{c|}{\textbf{EER}} & \textbf{MMR}               & \multicolumn{1}{c}{\textbf{WMR}} & \multicolumn{1}{c|}{\textbf{EER}} & \textbf{MMR}               & \multicolumn{1}{c}{\textbf{WMR}} & \multicolumn{1}{c}{\textbf{EER}} \\ \hline
\multirow{6}{*}{\textbf{ASV}}      & \textbf{EP}     & 0.92         & 91.49   & 1.49                             & 29.14                             & 92.86  & 1.49                            & 33.17                             & 75.53  & 1.49                            & 27.82                             & 82.18   & 1.49                            & 22.75                             & 96.99    & 1.49                            & 44.58                              \\
                                   & \textbf{XV}     & 9.50         & 100.0   & 0.00                             & 32.18                             & 100.0  & 0.00                            & 35.23                             & 100.0  & 0.00                            & 48.70                             & 100.0   & 0.00                            & 44.66                             & 100.0    & 0.00                            & 34.81                              \\
                                   & \textbf{GMM}    & 13.88        & 79.09   & 9.65                             & 40.35                             & 82.36  & 9.65                            & 43.20                             & 72.77  & 9.65                            & 41.32                             & 89.71   & 9.65                            & 47.48                             & 81.12    & 13.00                           & 41.34                              \\
                                   & \textbf{IV}     & 5.12         & 83.72   & 3.95                             & 28.02                             & 83.60  & 3.95                            & 29.93                             & 77.19  & 3.95                            & 30.13                             & 86.57   & 3.95                            & 36.12                             & 79.67    & 4.70                            & 41.34                              \\
                                   & \textbf{IF}     & 10.10        & 92.82   & 6.79                             & 43.55                             & 98.05  & 6.64                            & 49.05                             & 89.12  & 6.60                            & 59.72                             & 100.0   & 6.37                            & 67.96                             & 90.99    & 6.49                            & 41.97                              \\\hline
                                   & \textbf{AVG}    & 7.90         & 89.42   & 4.38                             & 34.65                             & 91.38  & 4.35                            & 38.11                             & 82.92  & 4.34                            & 41.54                             & 91.69   & 4.29                            & 43.79                             & 89.75    & 5.14                            & 40.81                              \\ 
                                   & \textbf{WCS}    & -            & 79.09   & 0.00                             & 28.02                             & 82.36  & 0.00                            & 29.93                             & 72.77  & 0.00                            & 27.82                             & 82.18   & 0.00                            & 22.75                             & 79.67    & 0.00                            & 34.81                              \\ \hline
\end{tabular}}

\begin{tablenotes}[flushleft]
\item[] \vspace{-1pt}\hspace{-2pt}\footnotesize \textbf{AVG}: the average-case scenario, \textbf{WCS}: the worst-case scenario. \textbf{EP}: ECAPA-TDNN, \textbf{XV}: X-vector, \textbf{GMM}: GMM-UBM, \textbf{IV}: ivector-PLDA, \textbf{IF}: iFlytek.
\end{tablenotes}

\label{tab:quantization}
\end{threeparttable}
\end{table*}
\begin{table*}\centering
\begin{threeparttable}[tt]
\setlength{\abovecaptionskip}{5pt}%
\setlength{\belowcaptionskip}{0pt}%

\caption{{Comparison with existing works under A2 (squeezing 0.8$\times$ the sampling rate).}}

\scriptsize
\setlength{\tabcolsep}{2mm}{
\begin{tabular}{cl|c|crr|crr|crr|crr|crr}
\hline
\multicolumn{2}{c|}{\multirow{2}{*}{\textbf{Model}}} & \textbf{Raw (\%)}  & \multicolumn{3}{c|}{\textbf{NSF (\%)}}                                                    & \multicolumn{3}{c|}{\textbf{HFGAN (\%)}}                                                    & \multicolumn{3}{c|}{\textbf{McAdams (\%)}}                                                    & \multicolumn{3}{c|}{\textbf{VoiceMask (\%)}}                                                     & \multicolumn{3}{c}{\textbf{\sys (\%)}}                                                      \\
\multicolumn{2}{c|}{}                                & \textbf{EER} & \textbf{MMR}               & \multicolumn{1}{c}{\textbf{WMR}} & \multicolumn{1}{c|}{\textbf{EER}} & \textbf{MMR}              & \multicolumn{1}{c}{\textbf{WMR}} & \multicolumn{1}{c|}{\textbf{EER}} & \textbf{MMR}              & \multicolumn{1}{c}{\textbf{WMR}} & \multicolumn{1}{c|}{\textbf{EER}} & \textbf{MMR}               & \multicolumn{1}{c}{\textbf{WMR}} & \multicolumn{1}{c|}{\textbf{EER}} & \textbf{MMR}               & \multicolumn{1}{c}{\textbf{WMR}} & \multicolumn{1}{c}{\textbf{EER}} \\ \hline
\multirow{6}{*}{\textbf{ASV}}      & \textbf{EP}     & 0.80         & 91.15    & 1.49                             & 30.08                             & 90.80   & 1.49                            & 30.50                             & 71.60   & 1.49                            & 25.12                             & 71.37    & 1.49                            & 16.72                             & 99.62    & 1.49                           & 55.92                              \\
                                   & \textbf{XV}     & 6.64         & 100.0    & 0.00                             & 33.15                             & 100.0   & 0.00                            & 36.49                             & 100.0   & 0.00                            & 39.73                             & 100.0    & 0.00                            & 30.12                             & 100.0    & 0.00                           & 36.76                              \\
                                   & \textbf{GMM}    & 12.34        & 77.77    & 9.65                             & 39.48                             & 74.38   & 9.65                            & 40.01                             & 67.44   & 9.65                            & 36.82                             & 89.09    & 9.65                            & 45.90                             & 76.12    & 13.00                          & 35.63                              \\
                                   & \textbf{IV}     & 4.97         & 85.08    & 3.95                             & 29.20                             & 84.63   & 3.95                            & 31.85                             & 72.60   & 3.95                            & 29.82                             & 82.89    & 3.95                            & 32.50                             & 88.14    & 4.70                           & 35.59                              \\
                                   & \textbf{IF}     & 9.98         & 91.91    & 6.72                             & 42.31                             & 98.05   & 6.64                            & 49.89                             & 87.54   & 6.57                            & 67.53                             & 100.0    & 6.53                            & 67.29                             & 97.56    & 6.53                           & 46.66                              \\\hline
                                   & \textbf{AVG}    & 6.95         & 89.18    & 4.36                             & 34.84                             & 89.57   & 4.35                            & 37.75                             & 79.84   & 4.33                            & 39.80                             & 88.67    & 4.32                            & 38.50                             & 92.29    & 5.14                           & 42.11                              \\ 
                                   & \textbf{WCS}    & -            & 77.77    & 0.00                             & 29.20                             & 74.38   & 0.00                            & 30.50                             & 67.44   & 0.00                            & 25.12                             & 71.37    & 0.00                            & 16.72                             & 76.12    & 0.00                           & 35.59                              \\ \hline
\end{tabular}}

\begin{tablenotes}[flushleft]
\item[] \vspace{-1pt}\hspace{-2pt}\footnotesize \textbf{AVG}: the average-case scenario, \textbf{WCS}: the worst-case scenario. \textbf{EP}: ECAPA-TDNN, \textbf{XV}: X-vector, \textbf{GMM}: GMM-UBM, \textbf{IV}: ivector-PLDA, \textbf{IF}: iFlytek.
\end{tablenotes}

\label{tab:squeezing}
\end{threeparttable}
\end{table*}
\begin{table*}\centering
\begin{threeparttable}[tt]
\setlength{\abovecaptionskip}{5pt}%
\setlength{\belowcaptionskip}{0pt}%

\caption{{Comparison with existing works under A2 (MP3 compression).}}

\scriptsize
\setlength{\tabcolsep}{2mm}{
\begin{tabular}{cl|c|crr|crr|crr|crr|crr}
\hline
\multicolumn{2}{c|}{\multirow{2}{*}{\textbf{Model}}} & \textbf{Raw (\%)}  & \multicolumn{3}{c|}{\textbf{NSF (\%)}}                                                    & \multicolumn{3}{c|}{\textbf{HFGAN (\%)}}                                                    & \multicolumn{3}{c|}{\textbf{McAdams (\%)}}                                                    & \multicolumn{3}{c|}{\textbf{VoiceMask (\%)}}                                                     & \multicolumn{3}{c}{\textbf{\sys (\%)}}                                                      \\
\multicolumn{2}{c|}{}                                & \textbf{EER} & \textbf{MMR}               & \multicolumn{1}{c}{\textbf{WMR}} & \multicolumn{1}{c|}{\textbf{EER}} & \textbf{MMR}              & \multicolumn{1}{c}{\textbf{WMR}} & \multicolumn{1}{c|}{\textbf{EER}} & \textbf{MMR}              & \multicolumn{1}{c}{\textbf{WMR}} & \multicolumn{1}{c|}{\textbf{EER}} & \textbf{MMR}               & \multicolumn{1}{c}{\textbf{WMR}} & \multicolumn{1}{c|}{\textbf{EER}} & \textbf{MMR}               & \multicolumn{1}{c}{\textbf{WMR}} & \multicolumn{1}{c}{\textbf{EER}} \\ \hline
\multirow{6}{*}{\textbf{ASV}}      & \textbf{EP}     & 0.73         & 91.83  & 1.49                             & 30.23                             & 91.68 & 1.49                            & 31.64                             & 73.09 & 1.49                            & 25.19                             & 81.37  & 1.49                            & 20.34                             & 93.89   & 1.49                            & 36.18                             \\
                                   & \textbf{XV}     & 5.86         & 100.0  & 0.00                             & 33.70                             & 100.0 & 0.00                            & 36.22                             & 100.0 & 0.00                            & 43.11                             & 100.0  & 0.00                            & 38.40                             & 100.0   & 0.00                            & 35.31                             \\
                                   & \textbf{GMM}    & 12.52        & 78.64  & 9.65                             & 39.93                             & 77.19 & 9.65                            & 40.31                             & 69.88 & 9.65                            & 39.10                             & 90.95  & 9.65                            & 47.65                             & 76.45   & 13.00                           & 35.87                             \\
                                   & \textbf{IV}     & 5.11         & 83.51  & 3.95                             & 28.17                             & 81.86 & 3.95                            & 30.76                             & 75.50 & 3.95                            & 29.46                             & 88.18  & 3.95                            & 36.12                             & 86.24   & 4.70                            & 32.25                             \\
                                   & \textbf{IF}     & 8.97         & 78.28  & 6.79                             & 33.59                             & 93.09 & 6.79                            & 43.78                             & 96.37 & 6.79                            & 76.49                             & 100.0  & 6.79                            & 86.49                             & 97.17   & 5.84                            & 44.59                             \\\hline
                                   & \textbf{AVG}    & 6.64         & 86.45  & 4.38                             & 33.12                             & 88.76 & 4.38                            & 36.54                             & 82.97 & 4.38                            & 42.67                             & 92.10  & 4.38                            & 45.80                             & 90.75   & 5.01                            & 36.84                             \\ 
                                   & \textbf{WCS}    & -            & 78.28  & 0.00                             & 28.17                             & 77.19 & 0.00                            & 30.76                             & 69.88 & 0.00                            & 25.19                             & 81.37  & 0.00                            & 20.34                             & 76.45   & 0.00                            & 32.25                             \\ \hline
\end{tabular}}

\begin{tablenotes}[flushleft]
\item[] \vspace{-1pt}\hspace{-2pt}\footnotesize \textbf{AVG}: the average-case scenario, \textbf{WCS}: the worst-case scenario. \textbf{EP}: ECAPA-TDNN, \textbf{XV}: X-vector, \textbf{GMM}: GMM-UBM, \textbf{IV}: ivector-PLDA, \textbf{IF}: iFlytek.
\end{tablenotes}

\label{tab:mp3}
\end{threeparttable}
\end{table*}

\begin{table*}\centering
\begin{threeparttable}[tt]
\setlength{\abovecaptionskip}{5pt}%
\setlength{\belowcaptionskip}{0pt}%

\caption{Targeted conversion performance.}

\scriptsize
\setlength{\tabcolsep}{2mm}{
\begin{tabular}{cl|rrrrrrrrrr|r}
\hline
\multirow{3}{*}{\textbf{ECAPA}}    & \textbf{Target Speaker}     & 1272        & 1462    & 1673   & 174     & 1919    & 1988    & 1993   & 2035   & 2078   & 2086    & \textbf{Total} \\\hline
                                   & \textbf{Success Trials}     & 2620        & 2620    & 2620   & 2620    & 2619    & 2620    & 2619   & 2619   & 2620   & 2620    & 26197 \\
                                   & \textbf{Success Rate (\%)}  & 100.0       & 100.0   & 100.0  & 100.0   & 99.96   & 100.0   & 99.96  & 99.96  & 100.0  & 100.0   & 99.99 \\\hline

\end{tabular}}
\begin{tablenotes}[flushleft]
\item[] \vspace{-1pt}\hspace{-2pt}\footnotesize Test set: LibriSpeech \textit{test-clean}. Target speakers are from LibriSpeech \textit{dev-clean}.
\end{tablenotes}

\label{tab:exactlymatch}
\end{threeparttable}
\end{table*}
\begin{table*}\centering
\begin{threeparttable}[tt]
\setlength{\abovecaptionskip}{5pt}%
\setlength{\belowcaptionskip}{0pt}%

\caption{The effectiveness of \emph{Throttle}.}

\scriptsize
\setlength{\tabcolsep}{2mm}{
\begin{tabular}{cl|c|crr|crr|crr|crr|crr}
\hline
\multicolumn{2}{c|}{\multirow{2}{*}{\textbf{Model}}} & \textbf{B0 (\%)}                  & \multicolumn{3}{c|}{\textbf{$\epsilon=0.02$ (\%)}}                                                     & \multicolumn{3}{c|}{\textbf{$\epsilon=0.04$ (\%)}}                                                     & \multicolumn{3}{c|}{\textbf{$\epsilon=0.06$ (\%)}}                                                     & \multicolumn{3}{c|}{\textbf{$\epsilon=0.08$ (\%)}}                                                     & \multicolumn{3}{c}{\textbf{$\epsilon=0.1$ (\%)}}                        \\
\multicolumn{2}{c|}{}                                & \textbf{EER}                      & \textbf{MMR}               & \multicolumn{1}{c}{\textbf{WMR}} & \multicolumn{1}{c|}{\textbf{EER}} & \textbf{MMR}               & \multicolumn{1}{c}{\textbf{WMR}} & \multicolumn{1}{c|}{\textbf{EER}} & \textbf{MMR}               & \multicolumn{1}{c}{\textbf{WMR}} & \multicolumn{1}{c|}{\textbf{EER}} & \textbf{MMR}               & \multicolumn{1}{c}{\textbf{WMR}} & \multicolumn{1}{c|}{\textbf{EER}} & \textbf{MMR}               & \multicolumn{1}{c}{\textbf{WMR}} & \multicolumn{1}{c}{\textbf{EER}} \\ \hline
\multirow{6}{*}{\textbf{ASV}}     & \textbf{EP}     & 0.70                              & \multicolumn{1}{r}{97.75} & 1.07                             & 42.25                             & \multicolumn{1}{r}{99.89} & 1.07                             & 50.50                             & \multicolumn{1}{r}{99.96} & 1.07                             & 51.64                             & \multicolumn{1}{r}{99.96} & 1.07                             & 51.87                             & \multicolumn{1}{r}{99.96} & 1.07                             & 51.99                            \\
                                   & \textbf{XV}     & 6.53                              & \multicolumn{1}{r}{91.34} & 6.91                             & 35.84                             & \multicolumn{1}{r}{96.57} & 6.91                             & 40.95                             & \multicolumn{1}{r}{96.91} & 6.87                             & 41.81                             & \multicolumn{1}{r}{96.99} & 6.91                             & 42.06                             & \multicolumn{1}{r}{97.06} & 6.91                             & 42.14                            \\
                                   & \textbf{GMM}    & 11.39                             & \multicolumn{1}{r}{60.62} & 13.00                            & 29.12                             & \multicolumn{1}{r}{64.63} & 13.00                            & 31.04                             & \multicolumn{1}{r}{65.41} & 13.00                            & 31.55                             & \multicolumn{1}{r}{65.33} & 13.00                            & 31.49                             & \multicolumn{1}{r}{65.70} & 13.00                            & 31.62                            \\
                                   & \textbf{IV}     & 6.03                              & \multicolumn{1}{r}{69.17} & 4.70                             & 25.45                             & \multicolumn{1}{r}{76.16} & 4.70                             & 29.71                             & \multicolumn{1}{r}{77.40} & 4.70                             & 30.41                             & \multicolumn{1}{r}{77.60} & 4.70                             & 30.39                             & \multicolumn{1}{r}{77.93} & 4.70                             & 30.45                            \\
                                   & \textbf{IF}     & 9.44                              & \multicolumn{1}{r}{77.48} & 6.45                             & 34.50                             & \multicolumn{1}{r}{82.18} & 6.64                             & 36.64                             & \multicolumn{1}{r}{82.14} & 6.34                             & 36.93                             & \multicolumn{1}{r}{81.99} & 6.68                             & 37.12                             & \multicolumn{1}{r}{80.99} & 6.79                             & 36.99                            \\ \hline
                                   & \textbf{AVG}    & 6.82                              & \multicolumn{1}{r}{79.27} & 6.43                             & 33.43                             & \multicolumn{1}{r}{83.88} & 6.46                             & 37.77                             & \multicolumn{1}{r}{84.36} & 6.40                             & 38.47                             & \multicolumn{1}{r}{84.37} & 6.47                             & 38.59                             & \multicolumn{1}{r}{84.33} & 6.49                             & 38.64                            \\ 
                                   & \textbf{WCS}    & -                              & \multicolumn{1}{r}{60.62} & 1.07                             & 25.45                             & \multicolumn{1}{r}{64.63} & 1.07                             & 29.71                             & \multicolumn{1}{r}{65.41} & 1.07                             & 30.41                             & \multicolumn{1}{r}{65.33} & 1.07                             & 30.39                             & \multicolumn{1}{r}{65.70} & 1.07                             & 30.45                            \\ \hline\hline

\multicolumn{2}{c|}{\multirow{2}{*}{\textbf{Model}}} & \textbf{B0 (\%)}                  & \multicolumn{3}{c|}{\textbf{$\epsilon=0.02$ (\%)}}                                                     & \multicolumn{3}{c|}{\textbf{$\epsilon=0.04$ (\%)}}                                                     & \multicolumn{3}{c|}{\textbf{$\epsilon=0.06$ (\%)}}                                                     & \multicolumn{3}{c|}{\textbf{$\epsilon=0.08$ (\%)}}                                                     & \multicolumn{3}{c}{\textbf{$\epsilon=0.1$ (\%)}}                        \\
\multicolumn{2}{c|}{}                                & \textbf{WER}                      & \multicolumn{3}{c|}{\textbf{WER}}                                                                          & \multicolumn{3}{c|}{\textbf{WER}}                                                                          & \multicolumn{3}{c|}{\textbf{WER}}                                                                          & \multicolumn{3}{c|}{\textbf{WER}}                                                                          & \multicolumn{3}{c}{\textbf{WER}}                                                                         \\ \hline
\multirow{6}{*}{\textbf{ASR}}     & \textbf{DS}     & 10.46                             & \multicolumn{3}{c|}{12.22}                                                                       & \multicolumn{3}{c|}{12.43}                                                                       & \multicolumn{3}{c|}{12.43}                                                                       & \multicolumn{3}{c|}{12.48}                                                                       & \multicolumn{3}{c}{12.47}                                                                       \\
                                   & \textbf{WV}     & 9.66                              & \multicolumn{3}{c|}{11.14}                                                                       & \multicolumn{3}{c|}{11.31}                                                                       & \multicolumn{3}{c|}{11.22}                                                                       & \multicolumn{3}{c|}{11.23}                                                                       & \multicolumn{3}{c}{11.22}                                                                       \\
                                   & \textbf{CD1}    & 3.61                              & \multicolumn{3}{c|}{3.76}                                                                        & \multicolumn{3}{c|}{3.91}                                                                        & \multicolumn{3}{c|}{3.95}                                                                        & \multicolumn{3}{c|}{3.95}                                                                        & \multicolumn{3}{c}{3.96}                                                                        \\
                                   & \textbf{CD2}    & 4.23                              & \multicolumn{3}{c|}{3.85}                                                                        & \multicolumn{3}{c|}{3.94}                                                                        & \multicolumn{3}{c|}{4.06}                                                                        & \multicolumn{3}{c|}{4.07}                                                                        & \multicolumn{3}{c}{4.08}                                                                        \\
                                   & \textbf{TF}     & 3.01                              & \multicolumn{3}{c|}{2.74}                                                                        & \multicolumn{3}{c|}{2.78}                                                                        & \multicolumn{3}{c|}{2.82}                                                                        & \multicolumn{3}{c|}{2.84}                                                                        & \multicolumn{3}{c}{2.84}                                                                        \\
                                   & \textbf{AVG}    & 6.19                              & \multicolumn{3}{c|}{6.74}                                                                        & \multicolumn{3}{c|}{6.87}                                                                        & \multicolumn{3}{c|}{6.90}                                                                        & \multicolumn{3}{c|}{6.91}                                                                        & \multicolumn{3}{c}{6.91}                                                                        \\ \hline
\end{tabular}}

\begin{tablenotes}[flushleft]
\item[] \vspace{-1pt}\hspace{-2pt}\footnotesize \textbf{AVG}: the average-case scenario, \textbf{WCS}: the worst-case scenario. \textbf{EP}: ECAPA-TDNN, \textbf{XV}: X-vector, \textbf{GMM}: GMM-UBM, \textbf{IV}: ivector-PLDA, \textbf{IF}: iFlytek.\\ \textbf{DS}: DeepSpeech2, \textbf{WV}: wav2vec2, \textbf{CD1}: crdnn-transformer, \textbf{CD2}: crdnn-rnn, \textbf{TF}: transformer.
\end{tablenotes}

\label{tab:throttle}
\end{threeparttable}
\end{table*}
\begin{table*}\centering
\begin{threeparttable}[tt]
\setlength{\abovecaptionskip}{5pt}%
\setlength{\belowcaptionskip}{0pt}%

\caption{The performance of \sys at different anonymization levels.}

\scriptsize
\setlength{\tabcolsep}{2mm}{
\begin{tabular}{cl|c|crr|crr|crr|crr|crr}
\hline
\multicolumn{2}{c|}{\multirow{2}{*}{\textbf{Model}}} & \textbf{B0 (\%)}                  & \multicolumn{3}{c|}{\textbf{$\epsilon=0.02$ (\%)}}                                                     & \multicolumn{3}{c|}{\textbf{$\epsilon=0.04$ (\%)}}                                                     & \multicolumn{3}{c|}{\textbf{$\epsilon=0.06$ (\%)}}                                                     & \multicolumn{3}{c|}{\textbf{$\epsilon=0.08$ (\%)}}                                                     & \multicolumn{3}{c}{\textbf{$\epsilon=0.1$ (\%)}}                        \\
\multicolumn{2}{c|}{}                                & \textbf{EER}                      & \textbf{MMR}               & \multicolumn{1}{c}{\textbf{WMR}} & \multicolumn{1}{c|}{\textbf{EER}} & \textbf{MMR}               & \multicolumn{1}{c}{\textbf{WMR}} & \multicolumn{1}{c|}{\textbf{EER}} & \textbf{MMR}               & \multicolumn{1}{c}{\textbf{WMR}} & \multicolumn{1}{c|}{\textbf{EER}} & \textbf{MMR}               & \multicolumn{1}{c}{\textbf{WMR}} & \multicolumn{1}{c|}{\textbf{EER}} & \textbf{MMR}               & \multicolumn{1}{c}{\textbf{WMR}} & \multicolumn{1}{c}{\textbf{EER}} \\ \hline
\multirow{6}{*}{\textbf{ASV}}     & \textbf{EP}     & 0.70                              & \multicolumn{1}{r}{98.93}  & 1.07                             & 48.05                             & \multicolumn{1}{r}{99.96}  & 1.07                            & 59.29                             & \multicolumn{1}{r}{99.96}  & 1.07                             & 62.44                             & \multicolumn{1}{r}{100}    & 1.07                             & 63.86                             & \multicolumn{1}{r}{100}    & 1.07                             & 64.58                            \\
                                   & \textbf{XV}     & 6.53                              & \multicolumn{1}{r}{89.73}  & 6.87                             & 32.12                             & \multicolumn{1}{r}{95.92}  & 6.87                             & 37.08                             & \multicolumn{1}{r}{97.63}  & 6.87                             & 38.34                             & \multicolumn{1}{r}{97.98}  & 6.91                             & 39.26                             & \multicolumn{1}{r}{98.40}  & 6.87                             & 39.77                            \\
                                   & \textbf{GMM}    & 11.39                             & \multicolumn{1}{r}{74.26}  & 11.38                            & 32.37                             & \multicolumn{1}{r}{82.72}  & 11.38                            & 36.69                             & \multicolumn{1}{r}{85.12}  & 11.38                            & 38.71                             & \multicolumn{1}{r}{86.74}  & 11.38                            & 40.10                             & \multicolumn{1}{r}{87.89}  & 11.38                            & 41.17                            \\
                                   & \textbf{IV}     & 6.03                              & \multicolumn{1}{r}{79.34}  & 6.03                             & 33.73                             & \multicolumn{1}{r}{83.60}  & 6.03                             & 35.75                             & \multicolumn{1}{r}{85.12}  & 6.03                             & 36.61                             & \multicolumn{1}{r}{85.54}  & 6.03                             & 37.20                             & \multicolumn{1}{r}{86.81}  & 6.03                             & 37.75                            \\
                                   & \textbf{IF}     & 9.44                              & \multicolumn{1}{r}{67.90}  & 6.76                             & 31.24                             & \multicolumn{1}{r}{85.27}  & 6.79                             & 38.91                             & \multicolumn{1}{r}{92.52}  & 6.76                             & 43.78                             & \multicolumn{1}{r}{95.42}  & 6.76                             & 46.71                             & \multicolumn{1}{r}{97.02}  & 6.79                             & 47.26                            \\ \hline
                                   & \textbf{AVG}    & 6.82                              & \multicolumn{1}{r}{82.03}  & 6.42                             & 35.50                             & \multicolumn{1}{r}{89.49}  & 6.43                             & 41.55                             & \multicolumn{1}{r}{92.07}  & 6.42                             & 43.98                             & \multicolumn{1}{r}{93.14}  & 6.43                             & 45.42                             & \multicolumn{1}{r}{94.02}  & 6.43                             & 46.11                            \\ 
                                   & \textbf{WCS}    & -                                  & \multicolumn{1}{r}{67.90}  & 1.07                             & 31.24                             & \multicolumn{1}{r}{82.72}  & 1.07                             & 35.75                             & \multicolumn{1}{r}{85.12}  & 1.07                             & 36.61                             & \multicolumn{1}{r}{85.54}  & 1.07                             & 37.20                             & \multicolumn{1}{r}{86.81}  & 1.07                             & 37.75                            \\ \hline\hline

\multicolumn{2}{c|}{\multirow{2}{*}{\textbf{Model}}} & \textbf{B0 (\%)}                  & \multicolumn{3}{c|}{\textbf{$\epsilon=0.02$ (\%)}}                                                     & \multicolumn{3}{c|}{\textbf{$\epsilon=0.04$ (\%)}}                                                     & \multicolumn{3}{c|}{\textbf{$\epsilon=0.06$ (\%)}}                                                     & \multicolumn{3}{c|}{\textbf{$\epsilon=0.08$ (\%)}}                                                     & \multicolumn{3}{c}{\textbf{$\epsilon=0.1$ (\%)}}                        \\
\multicolumn{2}{c|}{}                                & \textbf{WER}                      & \multicolumn{3}{c|}{\textbf{WER}}                                                                          & \multicolumn{3}{c|}{\textbf{WER}}                                                                          & \multicolumn{3}{c|}{\textbf{WER}}                                                                          & \multicolumn{3}{c|}{\textbf{WER}}                                                                          & \multicolumn{3}{c}{\textbf{WER}}                                                                         \\ \hline
\multirow{6}{*}{\textbf{ASR}}     & \textbf{DS}     & 10.46                             & \multicolumn{3}{c|}{13.22}                                                                        & \multicolumn{3}{c|}{13.43}                                                                        & \multicolumn{3}{c|}{13.63}                                                                        & \multicolumn{3}{c|}{13.82}                                                                        & \multicolumn{3}{c}{13.92}                                                                        \\
                                   & \textbf{WV}     & 9.66                              & \multicolumn{3}{c|}{11.48}                                                                        & \multicolumn{3}{c|}{11.35}                                                                        & \multicolumn{3}{c|}{11.71}                                                                        & \multicolumn{3}{c|}{11.70}                                                                        & \multicolumn{3}{c}{11.74}                                                                        \\
                                   & \textbf{CD1}    & 3.61                              & \multicolumn{3}{c|}{4.06}                                                                         & \multicolumn{3}{c|}{4.33}                                                                         & \multicolumn{3}{c|}{4.51}                                                                         & \multicolumn{3}{c|}{4.44}                                                                         & \multicolumn{3}{c}{4.59}                                                                         \\
                                   & \textbf{CD2}    & 4.23                              & \multicolumn{3}{c|}{4.32}                                                                         & \multicolumn{3}{c|}{4.50}                                                                         & \multicolumn{3}{c|}{4.60}                                                                         & \multicolumn{3}{c|}{4.82}                                                                         & \multicolumn{3}{c}{4.95}                                                                         \\
                                   & \textbf{TF}     & 3.01                              & \multicolumn{3}{c|}{2.93}                                                                         & \multicolumn{3}{c|}{2.97}                                                                         & \multicolumn{3}{c|}{3.02}                                                                         & \multicolumn{3}{c|}{3.02}                                                                         & \multicolumn{3}{c}{3.06}                                                                         \\
                                   & \textbf{AVG}    & 6.19                              & \multicolumn{3}{c|}{7.20}                                                                         & \multicolumn{3}{c|}{7.32}                                                                         & \multicolumn{3}{c|}{7.49}                                                                         & \multicolumn{3}{c|}{7.56}                                                                         & \multicolumn{3}{c}{7.65}                                                                         \\ \hline
\end{tabular}}

\begin{tablenotes}[flushleft]
\item[] \vspace{-1pt}\hspace{-2pt}\footnotesize \textbf{AVG}: the average-case scenario, \textbf{WCS}: the worst-case scenario. \textbf{EP}: ECAPA-TDNN, \textbf{XV}: X-vector, \textbf{GMM}: GMM-UBM, \textbf{IV}: ivector-PLDA, \textbf{IF}: iFlytek.\\ \textbf{DS}: DeepSpeech2, \textbf{WV}: wav2vec2, \textbf{CD1}: crdnn-transformer, \textbf{CD2}: crdnn-rnn, \textbf{TF}: transformer.
\end{tablenotes}

\label{tab:eps}
\end{threeparttable}
\end{table*}
\begin{table*}\centering
\begin{threeparttable}[tt]
\setlength{\abovecaptionskip}{5pt}%
\setlength{\belowcaptionskip}{0pt}%

\caption{The effectiveness of the ASR loss.}

\scriptsize
\setlength{\tabcolsep}{2mm}{
\begin{tabular}{cl|crr|crr|crr}
\hline
\multicolumn{2}{c|}{\multirow{2}{*}{\textbf{Model}}} & \multicolumn{3}{c|}{\textbf{w/o ASR (\%)}}                                         & \multicolumn{3}{c|}{\textbf{w/o GPG (\%)}}                                        & \multicolumn{3}{c}{\textbf{w/~GPG (\%)}}                                           \\
\multicolumn{2}{c|}{}                                & \textbf{MMR}               & \multicolumn{1}{c}{\textbf{WMR}} & \multicolumn{1}{c|}{\textbf{EER}} & \textbf{MMR}              & \multicolumn{1}{c}{\textbf{WMR}} & \multicolumn{1}{c|}{\textbf{EER}} & \textbf{MMR}              & \multicolumn{1}{c}{\textbf{WMR}} & \multicolumn{1}{c}{\textbf{EER}} \\ \hline
\multirow{6}{*}{\textbf{ASV}}     & \textbf{EC}     & \multicolumn{1}{r}{100} & 1.07                             & 81.41                             & \multicolumn{1}{r}{99.77} & 1.07                             & 61.83                             & \multicolumn{1}{r}{100}   & 1.07                             & 64.58                            \\
                                   & \textbf{XV}     & \multicolumn{1}{r}{99.92}  & 6.91                             & 43.57                             & \multicolumn{1}{r}{85.12} & 6.91                             & 30.73                             & \multicolumn{1}{r}{98.40} & 6.87                             & 39.77                            \\
                                   & \textbf{GMM}    & \multicolumn{1}{r}{97.02}  & 13.00                            & 61.80                             & \multicolumn{1}{r}{89.59} & 13.00                            & 48.01                             & \multicolumn{1}{r}{87.89} & 11.38                            & 41.17                            \\
                                   & \textbf{IV}     & \multicolumn{1}{r}{96.86}  & 4.70                             & 41.28                             & \multicolumn{1}{r}{84.13} & 4.70                             & 32.63                             & \multicolumn{1}{r}{86.81} & 6.03                             & 37.75                            \\
                                   & \textbf{IF}     & \multicolumn{1}{r}{91.99}  & 6.57                             & 39.60                             & \multicolumn{1}{r}{54.73} & 6.53                             & 28.82                             & \multicolumn{1}{r}{97.02} & 6.79                             & 47.22                            \\ \hline
                                   & \textbf{AVG}    & \multicolumn{1}{r}{97.16}  & 6.45                             & 53.53                             & \multicolumn{1}{r}{82.67} & 6.44                             & 40.40                             & \multicolumn{1}{r}{94.02} & 6.43                             & 46.10                            \\ 
                                   & \textbf{WCS}    & \multicolumn{1}{r}{91.99}  & 1.07                             & 39.60                             & \multicolumn{1}{r}{54.73} & 1.07                             & 28.82                             & \multicolumn{1}{r}{86.81} & 1.07                             & 37.75                            \\ \hline\hline

\multicolumn{2}{c|}{\multirow{2}{*}{\textbf{Model}}} & \multicolumn{3}{c|}{\textbf{w/o ASR (\%)}}                                         & \multicolumn{3}{c|}{\textbf{w/o GPG (\%)}}                                        & \multicolumn{3}{c}{\textbf{w/~GPG (\%)}}                                           \\
\multicolumn{2}{c|}{}                                & \multicolumn{3}{c|}{\textbf{WER}}                                                                 & \multicolumn{3}{c|}{\textbf{WER}}                                                                & \multicolumn{3}{c}{\textbf{WER}}                                                                \\ \hline
\multirow{6}{*}{\textbf{ASR}}     & \textbf{DS}     & \multicolumn{3}{c|}{116.16}                                                                       & \multicolumn{3}{c|}{14.49}                                                                       & \multicolumn{3}{c}{13.92}                                                                       \\
                                   & \textbf{WV}     & \multicolumn{3}{c|}{66.51}                                                                        & \multicolumn{3}{c|}{13.19}                                                                       & \multicolumn{3}{c}{11.74}                                                                       \\
                                   & \textbf{CD1}    & \multicolumn{3}{c|}{128.14}                                                                       & \multicolumn{3}{c|}{4.81}                                                                        & \multicolumn{3}{c}{4.59}                                                                        \\
                                   & \textbf{CD2}    & \multicolumn{3}{c|}{105.85}                                                                       & \multicolumn{3}{c|}{5.65}                                                                        & \multicolumn{3}{c}{4.95}                                                                        \\
                                   & \textbf{TF}     & \multicolumn{3}{c|}{121.62}                                                                       & \multicolumn{3}{c|}{3.16}                                                                        & \multicolumn{3}{c}{3.06}                                                                        \\
                                   & \textbf{AVG}    & \multicolumn{3}{c|}{107.66}                                                                       & \multicolumn{3}{c|}{8.26}                                                                        & \multicolumn{3}{c}{7.65}                                                                        \\ \hline
\end{tabular}}

\begin{tablenotes}[flushleft]
\item[] \vspace{-1pt}\hspace{-2pt}\footnotesize \textbf{AVG}: the average-case scenario, \textbf{WCS}: the worst-case scenario. \textbf{EP}: ECAPA-TDNN, \textbf{XV}: X-vector, \textbf{GMM}: GMM-UBM, \textbf{IV}: ivector-PLDA, \textbf{IF}: iFlytek.\\ \textbf{DS}: DeepSpeech2, \textbf{WV}: wav2vec2, \textbf{CD1}: crdnn-transformer, \textbf{CD2}: crdnn-rnn, \textbf{TF}: transformer.
\end{tablenotes}

\label{tab:NoASR}
\end{threeparttable}
\end{table*}

\clearpage
\begin{figure*}[t]
    \centering
\setlength{\abovecaptionskip}{0pt}
\setlength{\belowcaptionskip}{0cm}
\includegraphics[width=7in]{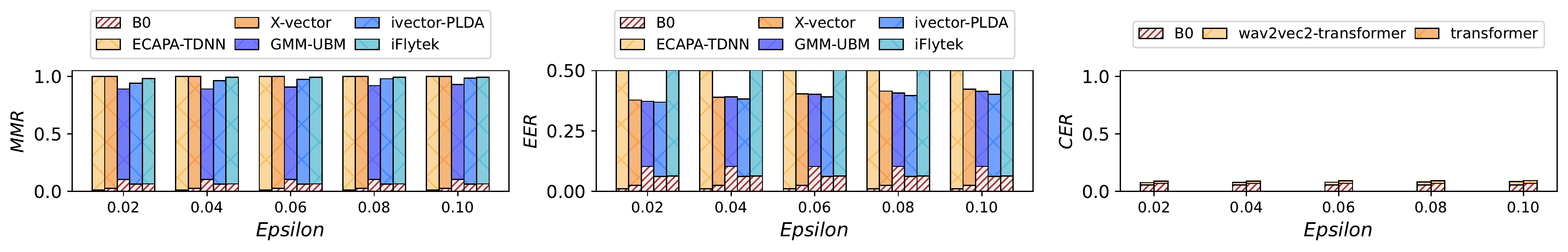}

\caption{Cross-language performance on Chinese. (a) MMR. (b) EER. (c) CER. \sys is trained on the English dataset and tested on the Chinese dataset. \sys can retain intelligibility across different languages (a CER of 8.11\%-8.86\%).}\label{fig:chinese}
\end{figure*}

\begin{figure*}[t]
    \centering
\setlength{\abovecaptionskip}{0pt}
\setlength{\belowcaptionskip}{0cm}
\includegraphics[width=7in]{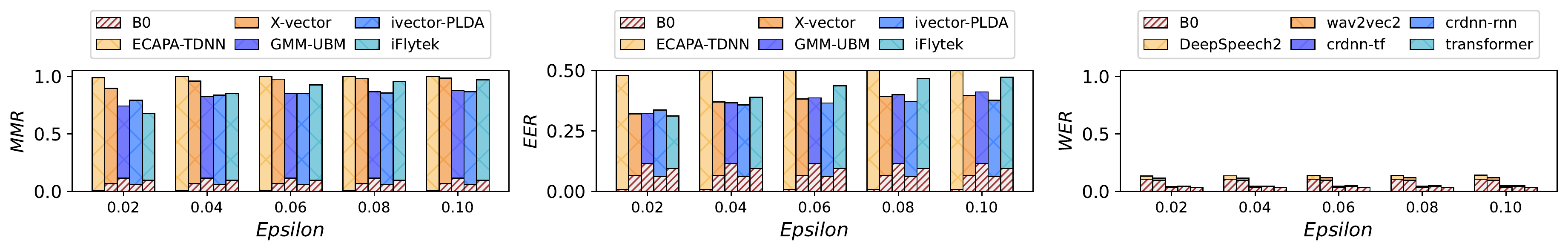}

\caption{The performance at different anonymization levels. (a) MMR. (b) EER. (c) WER. The MMRs and EERs of \sys are much higher than those of clean audios (B0) on all 5 ASVs. \sys only causes a slight increase in the WERs on 5 ASRs.}\label{fig:epsilon}
\end{figure*}

\begin{figure*}[tt]
    \centering
\setlength{\abovecaptionskip}{-0.0cm}
\setlength{\belowcaptionskip}{-0.0cm}

\subfigure[French]{
    \includegraphics[width=1.5in, trim=5 5 5 5, clip]{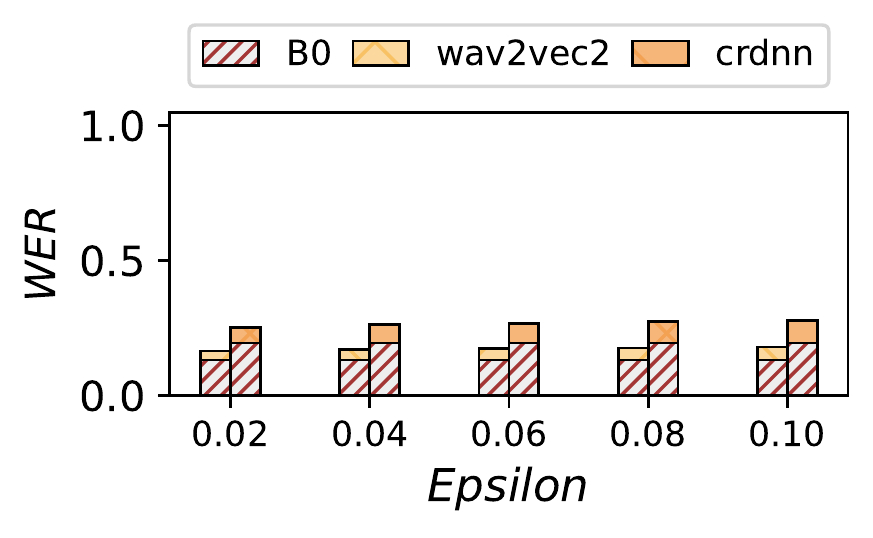}
    \label{fig:french}
    }
\quad  
\subfigure[Italian]{
    \includegraphics[width=1.5in, trim=5 5 5 5, clip]{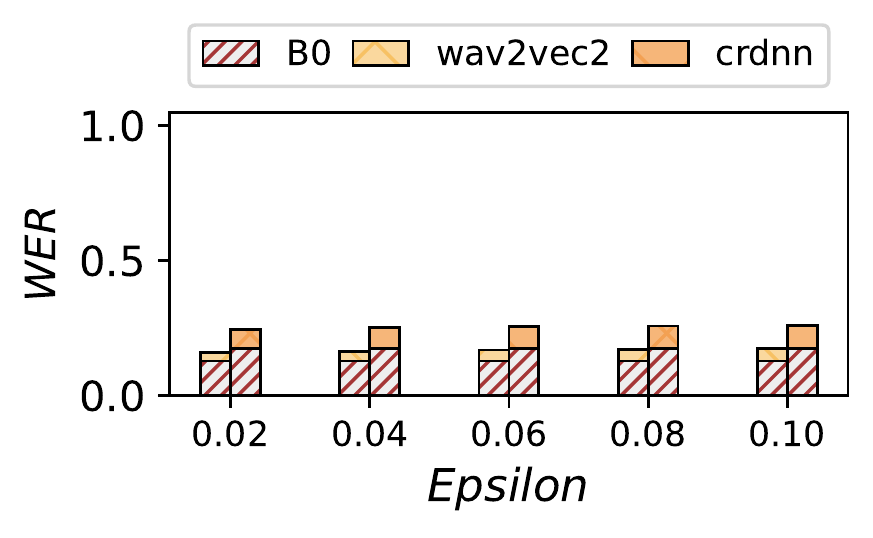}
    \label{fig:italian}
    }

\vspace{-0.0cm}\caption{Cross-language performance on French and Italian. \sys is trained on the English dataset and tested on the French and Italian datasets. \sys can retain intellibility across different languages (an average increase in WERs of 6.59\% and 6.55\% for French and Italian respectively).}\label{fig:fr_it}
\end{figure*}

\begin{figure*}[t]
    \centering
\setlength{\abovecaptionskip}{0pt}
\setlength{\belowcaptionskip}{0cm}
\hspace{-0.5cm}\includegraphics[width=4.5in, trim=5 5 5 5, clip]{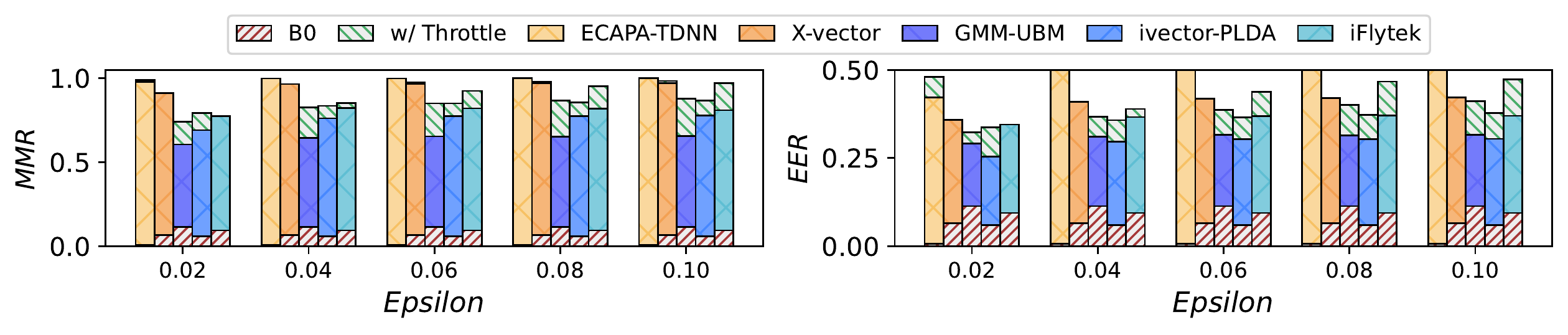}

\caption{Ablation study on \emph{Throttle}. The bars with green diagonal stripes denote the gap between the performance w/ and w/o \emph{Throttle}.}\label{fig:wothrottle}
\end{figure*}

\end{document}